\newcommand{\Table}[1]{Table~\ref{tab:#1}}
\newcommand{\FigLabel}[1]{\label{fig:#1}}
\newcommand{\Figure}[1]{Figure~\ref{fig:#1}}
\newcommand{\SectLabel}[1]{\label{sect:#1}}
\newcommand{\Section}[1]{Sect.~\ref{sect:#1}}
\newcommand{\ThmLabel}[1]{\label{thm:#1}}
\newcommand{\Theorem}[1]{Theorem~\ref{thm:#1}}
\newcommand{\Thm}[1]{Thm~\ref{thm:#1}}
\newcommand{\LemLabel}[1]{\label{lem:#1}}
\newcommand{\Lemma}[1]{Lemma~\ref{lem:#1}}
\newcommand{\ConLabel}[1]{\label{con:#1}}
\newcommand{\Conjecture}[1]{Conjecture~\ref{con:#1}}
\newcommand{\DefLabel}[1]{\label{def:#1}}
\newcommand{\ObLabel}[1]{\label{ob:#1}}
\newcommand{\Observation}[1]{Observation~\ref{ob:#1}}
\newcommand{\PropLabel}[1]{\label{ob:#1}}
\newcommand{\Proposition}[1]{Proposition~\ref{ob:#1}}
\newcommand{\CorLabel}[1]{\label{cor:#1}}
\newcommand{\Cor}[1]{Cor.~\ref{cor:#1}}
\newcommand{\ie}{i.e.}
\newcommand{\eg}{e.g.}
\newcommand{\etal}{et~al.}
\newcommand{\Wilog}{W.l.o.g.}
\newcommand{\wilog}{w.l.o.g.}
\newcommand{\whp}{\textit{w.h.p.}}
\newcommand{\neworbetter}[1]{\textcolor{cyan!40!blue}{#1}}
\newcommand{\notneworbetter}[1]{#1}
\newcommand{\Set}[1]{\ensuremath{\{#1\}}}
\newcommand{\Card}[1]{\ensuremath{\left| #1 \right|}}
\newcommand{\Range}[2]{\ensuremath{\left[ #1, #2 \right]}}
\newcommand{\Sequence}[1]{\ensuremath{\left\langle #1 \right\rangle}}
\newcommand{\Edge}[2]{\ensuremath{\{#1,#2\}}}
\newcommand{\Neigh}{\ensuremath{N}}
\newcommand{\Degree}{\ensuremath{\mathrm{\deg}}}
\newcommand{\NeighEx}[1]{\ensuremath{N_{\bar #1}}}
\newcommand{\DegreeEx}[1]{\ensuremath{\mathrm{\deg}_{\bar #1}}}
\newcommand{\NP}{\ensuremath{\mathcal{NP}}}
\newcommand{\bigO}{\ensuremath{\mathcal{O}}}
\DeclareMathOperator{\polylog}{polylog}
\newcommand{\Alg}{\mathcal{A}}
\newcommand{\oneuMv}{\textsf{1-uMv}}
\newcommand{\OMv}{\textsf{OMv}}
\newcommand{\GraphletSet}{\ensuremath{\mathbb{P}}}
\newcommand{\Graphlet}{\ensuremath{\mathcal{P}}}
\newcommand{\PartFrac}{\ensuremath{\varepsilon}}
\newcommand{\High}{\ensuremath{\mathcal{H}}}
\newcommand{\Low}{\ensuremath{\mathcal{L}}}
\newcommand{\Count}{\mathbf{c}}
\newcommand{\ICount}{\mathbf{c_I}}
\newcommand{\DS}{\ensuremath{\mathcal{D}}}
\newcommand{\Aux}[1]{\texttt{#1}}
\newcommand{\FromThmCountAny}{${}^a$}%
\newcommand{\FromThmCountNoninduced}{${}^b$}%
\newcommand{\FromCorCountTriangles}{${}^c$}%
\newcommand{\FromThmCountNoninducedS}{${}^d$}%
\newcommand{\FromThmCountInduced}{${}^e$}%
\newcommand{\FromCorCountInduced}{${}^f$}%
\newcommand{\FromThmLowerbounds}{${}^g$}%
\newcommand{\FromEppsteinTri}{${}^\alpha$}
\newcommand{\FromEppsteinQuad}{${}^\beta$}
\newcommand{\FromDhulipala}{${}^\zeta$}
\newcommand{\PictoThreePath}{%
\begin{tikzpicture}[baseline=.3ex]
\draw (0,0) -- (0.5ex,1ex);
\draw (0.5ex,1ex) -- (1.5ex,1ex);
\draw (1.5ex,1ex) -- (2ex,2ex);
\end{tikzpicture}%
}
\newcommand{\PictoThreeCycle}{%
\begin{tikzpicture}[baseline=0ex]
\draw (0ex,0ex) -- (60:1.8ex) -- (1.8ex,0ex) -- cycle;
\end{tikzpicture}%
}
\newcommand{\PictoClaw}{%
\begin{tikzpicture}[baseline=.3ex]
\draw (1ex,2ex) -- (1ex,1ex);
\draw (0,0) -- (1ex,1ex);
\draw (2ex,0ex) -- (1ex,1ex);
\end{tikzpicture}%
}
\newcommand{\PictoPaw}{%
\begin{tikzpicture}[baseline=.3ex]
\draw (0,1ex) -- (1ex,1ex);
\draw (1ex,1ex) -- (2ex,2ex);
\draw (1ex,1ex) -- (2ex,0);
\draw (2ex,0) -- (2ex,2ex);
\end{tikzpicture}%
}
\newcommand{\PictoFourCycle}{%
\begin{tikzpicture}[baseline=.3ex]
\draw (0,1ex) -- (1ex,2ex);
\draw (1ex,2ex) -- (2ex,1ex);
\draw (2ex,1ex) -- (1ex,0);
\draw (1ex,0) -- (0,1ex);
\end{tikzpicture}%
}
\newcommand{\PictoDiamond}{%
\begin{tikzpicture}[baseline=.3ex]
\draw (0,1ex) -- (1ex,2ex);
\draw (1ex,2ex) -- (2ex,1ex);
\draw (2ex,1ex) -- (1ex,0);
\draw (1ex,0) -- (0,1ex);
\draw (2ex,1ex) -- (0,1ex);
\end{tikzpicture}%
}
\newcommand{\PictoClique}{%
\begin{tikzpicture}[baseline=.3ex]
\draw (0,0ex) -- (2ex,0ex);
\draw (0,0ex) -- (0,2ex);
\draw (2ex,2ex) -- (2ex,0ex);
\draw (2ex,2ex) -- (0,2ex);
\draw (2ex,2ex) -- (0,0);
\draw (2ex,0) -- (0,2ex);
\end{tikzpicture}%
}
\title{Fully Dynamic Four-Vertex Subgraph Counting}
\author{Kathrin Hanauer}{%
University of Vienna, Faculty of Computer Science, Vienna, Austria}{%
kathrin.hanauer@univie.ac.at}{https://orcid.org/0000-0002-5945-837X}{}
\author{Monika Henzinger}{%
University of Vienna, Faculty of Computer Science, Vienna, Austria}{%
monika.henzinger@univie.ac.at}{https://orcid.org/0000-0002-5008-6530}{}
\author{Qi Cheng Hua}{%
University of Vienna, Faculty of Computer Science, Vienna, Austria}{%
}{}{}
\authorrunning{K. Hanauer, M. Henzinger, Q.\,C. Hua} %
\keywords{Dynamic Graph Algorithms, Subgraph Counting, Motif Search} %
\begin{document}

\maketitle

\begin{abstract}
This paper presents a comprehensive study of algorithms for maintaining the number of all connected four-vertex
subgraphs in a dynamic graph.
Specifically, our algorithms maintain the number of
paths of length three in deterministic amortized~$\bigO(m^\frac{1}{2})$ update time,
and any other connected four-vertex subgraph which is not a clique
in deterministic amortized update time~$\bigO(m^\frac{2}{3})$.
Queries can be answered in constant time.
We also study the
query times for subgraphs containing an arbitrary edge
that is supplied only with the query
as well as the case where only subgraphs containing a vertex $s$ that is
fixed beforehand are considered.
For length-3 paths, paws, $4$-cycles, and diamonds our bounds match or are
not far from (conditional) lower bounds:
Based on the \OMv{} conjecture we show that any dynamic algorithm that detects
the existence of paws, diamonds, or $4$-cycles
or that counts
length-$3$ paths takes update time $\Omega(m^{1/2-\delta})$.

Additionally, for $4$-cliques and all connected induced subgraphs, we show a
lower bound of $\Omega(m^{1-\delta})$ for any small constant $\delta > 0$ for
the amortized update time,
assuming the static combinatorial $4$-clique conjecture holds.
This shows that the $\bigO(m)$ algorithm by Eppstein
\etal{}~\cite{DBLP:journals/tcs/EppsteinGST12} for these subgraphs cannot be
improved by a polynomial factor.
\end{abstract}

\section{Introduction}\label{sect:introduction}
Detecting or counting subgraphs
is an important question in social network analysis,
where dense subgraphs usually represent communities,
as well as
in telecommunication network surveillance, and computational biology.
This can also be seen in
a recent study by Sahu \etal{}~\cite{Sahu20}: finding and counting fixed subgraphs was the fourth most popular graph computation in practice, only superseded by finding connected components, computing shortest paths, and answering queries about the degree of neighbors.
Furthermore the same study showed that the dynamic setting is important in practice as 65\% of the graphs were dynamic.
Thus, the goal of this paper is to advance the study of subgraph counting problems in dynamic graphs.

Algorithmic problems in dynamic graphs are usually modeled by the following data structure question. Given a potentially non-empty initial graph and a fixed subgraph pattern $\cal P$ (such as a $k$-clique) maintain a data structure that allows the following updates to the current graph $G$: 
\begin{itemize}
\item Insert($u$, $v$): Insert the edge $\Edge{u}{v}$ into $G$.
\item Delete($u$, $v$): Delete the edge $\Edge{u}{v}$ from $G$.
\item Query(): Return the number of subgraphs of pattern $\cal P$ in $G$.
\end{itemize}

Given a subgraph pattern  $\cal P$ that is not a clique, there are two variants of this problem:
One variant, called \emph{induced subgraph counting}, counts a subgraph if it is exactly equivalent to $\cal P$ and does \emph{not} contain any additional edges. In the \emph{non-induced} version, a subgraph is counted if it contains $\cal P$ and potentially additional edges (but \emph{not} additional vertices).
Eppstein and Spiro~\cite{DBLP:journals/jgaa/EppsteinS12} studied subgraph counting for all possible connected three-vertex patterns in both the induced and the non-induced variant and gave a dynamic algorithm with amortized update time $\bigO(h)$, where $h$ is the \emph{$h$-index} of $G$, i.e., the maximum number such that the graph contains $h$ vertices of degree at least $h$. Note that $h$ is $\bigO(\sqrt m)$, where $m$ always is the current number of edges in the graph.
\begin{table}[tb]
\caption{Upper and conditional lower bounds on the time per
update and query for counting different subgraphs,
where $\delta>0$ is an arbitrarily
small constant and $h\in\bigO(\sqrt m)$.
Update times are amortized, query times are worst-case.
Results in \neworbetter{blue} are new or improved.\\
{\footnotesize
${}^*$\,%
Read: For polynomial preprocessing time and $\bigO(\cdot)$ query time,
the update time is $\Omega(\cdot)$.
\\
${}^\dagger$\,%
The previous space complexity for
$3$-cycles and length-$3$ paths was $\bigO(mh)$ with amortized update time $\bigO(h)$~\cite{DBLP:journals/jgaa/EppsteinS12}
and $\bigO(m)$ with amortized update time $\bigO(m^\frac{1}{2})$ for $3$-cycles~\cite{dynamicinsertions};
for (other) $4$-vertex subgraphs, it was $\bigO(mh^2)$ with amortized update time $\bigO(h^2)$~\cite{DBLP:journals/tcs/EppsteinGST12}.
\\
\FromThmCountAny{}\,\Thm{count-any},
\FromThmCountNoninduced{}\,\Thm{count-noninduced},
\FromCorCountTriangles{}\,\Cor{count-triangles},
\FromThmCountNoninducedS{}\,\Thm{count-noninduced-s},
\FromThmCountInduced{}\,\Thm{count-induced},
\FromCorCountInduced{}\,\Cor{count-induced},
\FromThmLowerbounds{}\,\Thm{lowerbounds},
\FromEppsteinTri{}\,\cite{DBLP:journals/jgaa/EppsteinS12},
\FromEppsteinQuad{}\,\cite{DBLP:journals/tcs/EppsteinGST12},
\FromDhulipala{}\,\cite{DLS2020}.
}
}%
\addtolength{\tabcolsep}{-3.5pt}
\label{tab:complexities}
\centering
\begin{tabular}{@{}l@{\;}c@{\;}cc@{\;}cc@{\;}cc@{}}
\toprule
& \multicolumn{2}{c}{\textbf{Lower Bounds}${}^*$}
& \multicolumn{2}{c}{\textbf{Update Time}}
& \multicolumn{2}{c}{\textbf{Query Time}}
& \textbf{Space}${}^\dagger$
\\
\textbf{Subgraph}
& \textbf{Update} & \textbf{Query}
& \textbf{ours} & \textbf{previous}
& \textbf{all} & $e \in E$
& %
\\
\midrule
\multicolumn{8}{@{}l}{\textit{Non-induced subgraphs and $s$-subgraphs}}\\
\midrule
connected, $n=4$
&
&
&
$\bigO(m)$\FromThmCountAny{}
&
&
$\bigO(1)$
&
\neworbetter{$\bigO(1)$}\FromThmCountAny{}
&
\neworbetter{$\bigO(1)$/$\bigO(m)$}\FromThmCountAny{}
\\
claw \PictoClaw
&
$\Omega(1)$
&
$\bigO(1)$
&
$\bigO(1)$
&
$\bigO(1)$\FromEppsteinTri{}\FromEppsteinQuad{}%
&
\notneworbetter{$\bigO(1)$}
&
\notneworbetter{$\bigO(1)$}
&
\notneworbetter{$\bigO(1)$}
\\
length-$3$ path \PictoThreePath
&
\neworbetter{$\Omega(m^{\frac 12-\delta})$}\FromThmLowerbounds{}
&
\neworbetter{$\bigO(m^{1-\delta})$}\FromThmLowerbounds{}
&
\notneworbetter{$\bigO(m^\frac{1}{2})$}\FromThmCountNoninduced{}
&
$\bigO(h)$\FromEppsteinTri{}\FromEppsteinQuad{}%
&
\notneworbetter{$\bigO(1)$}
&
\neworbetter{$\bigO(m^\frac{1}{2})$}\FromThmCountNoninduced{}
&
\notneworbetter{$\bigO(\min(n^2,m^{1.5}))$}\FromThmCountNoninduced{}
\\
paw \PictoPaw
&
\neworbetter{$\Omega(m^{\frac 12-\delta})$}\FromThmLowerbounds{}
&
\neworbetter{$\bigO(m^{1-\delta})$}\FromThmLowerbounds{}
&
\neworbetter{$\bigO(m^\frac 23)$}\FromThmCountNoninduced{}
&
$\bigO(h^2)$\FromEppsteinQuad{}%
&
\notneworbetter{$\bigO(1)$}
&
\neworbetter{$\bigO(m^{\frac{2}{3}})$}\FromThmCountNoninduced{}
&
\neworbetter{$\bigO(n^2)$}\FromThmCountNoninduced{}
\\
$3$-cycle~\PictoThreeCycle
&
\notneworbetter{$\Omega(m^{\frac 12-\delta})$}\FromThmLowerbounds{}
&
\notneworbetter{$\bigO(m^{1-\delta})$}\FromThmLowerbounds{}
&
\notneworbetter{$\bigO(m^\frac{1}{2})$}\FromCorCountTriangles{}
&
$\bigO(h)$\FromEppsteinTri{}%
&
\notneworbetter{$\bigO(1)$}
&
\neworbetter{$\bigO(m^\frac{1}{2})$}\FromCorCountTriangles{}
&
\notneworbetter{$\bigO(\min(n^2,m^{1.5}))$}\FromCorCountTriangles{}
\\
$4$-cycle \PictoFourCycle
&
\neworbetter{$\Omega(m^{\frac 12-\delta})$}\FromThmLowerbounds{}
&
\neworbetter{$\bigO(m^{1-\delta})$}\FromThmLowerbounds{}
&
\neworbetter{$\bigO(m^\frac 23)$}\FromThmCountNoninduced{}
&
$\bigO(h^2)$\FromEppsteinQuad{}%
&
\notneworbetter{$\bigO(1)$}
&
\neworbetter{$\bigO(m^{\frac{2}{3}})$}\FromThmCountNoninduced{}
&
\neworbetter{$\bigO(n^2)$}\FromThmCountNoninduced{}
\\
$k$-cycle, $k\geq 5$
&
\neworbetter{$\Omega(m^{\frac 12-\delta})$}\FromThmLowerbounds{}
&
\neworbetter{$\bigO(m^{1-\delta})$}\FromThmLowerbounds{}
\\
diamond \PictoDiamond
&
\neworbetter{$\Omega(m^{\frac 12-\delta})$}\FromThmLowerbounds{}
&
\neworbetter{$\bigO(m^{1-\delta})$}\FromThmLowerbounds{}
&
\neworbetter{$\bigO(m^\frac 23)$}\FromThmCountNoninduced{}
&
$\bigO(h^2)$\FromEppsteinQuad{}%
&
\notneworbetter{$\bigO(1)$}
&
\neworbetter{$\bigO(m^{\frac{2}{3}})$}\FromThmCountNoninduced{}
&
\neworbetter{$\bigO(\min(nm,m^\frac{5}{3}))$}\FromThmCountNoninduced{}
\\
$4$-clique~\PictoClique
&
\neworbetter{$\Omega(m^{1-\delta})$}\FromThmCountInduced{}
&
\neworbetter{$\bigO(m^{2-\delta})$}\FromThmCountInduced{}
&
\notneworbetter{$\bigO(m)$}\FromDhulipala{}
&
$\bigO(h^2)$\FromEppsteinQuad{}%
&
\notneworbetter{$\bigO(1)$}
&
\notneworbetter{$\bigO(m)$}\FromDhulipala{}
&
\notneworbetter{$\bigO(1)$}
\\
\midrule
$s$-claw \PictoClaw
&
$\Omega(1)$
&
$\bigO(1)$
&
\neworbetter{$\bigO(1)$}\FromThmCountNoninducedS{}
&
&
\neworbetter{$\bigO(1)$}\FromThmCountNoninducedS{}
&
&
\neworbetter{$\bigO(1)$}\FromThmCountNoninducedS{}
\\
$s$-length-$3$-path \PictoThreePath
&
\neworbetter{$\Omega(m^{\frac 12-\delta})$}\FromThmLowerbounds{}
&
\neworbetter{$\bigO(m^{1-\delta})$}\FromThmLowerbounds{}
&
\neworbetter{$\bigO(m^{\frac{1}{2}})$}\FromThmCountNoninducedS{}
&
&
\neworbetter{$\bigO(1)$}\FromThmCountNoninducedS{}
&
&
\neworbetter{$\bigO(n)$}\FromThmCountNoninducedS{}
\\
$s$-paw \PictoPaw
&
\neworbetter{$\Omega(m^{\frac 12-\delta})$}\FromThmLowerbounds{}
&
\neworbetter{$\bigO(m^{1-\delta})$}\FromThmLowerbounds{}
&
\neworbetter{$\bigO(m^\frac 23)$}\FromThmCountNoninducedS{}
&
&
\neworbetter{$\bigO(1)$}\FromThmCountNoninducedS{}
&
&
\neworbetter{$\bigO(n^2)$}\FromThmCountNoninducedS{}
\\
$s$-$3$-cycle~\PictoThreeCycle
&
\notneworbetter{$\Omega(m^{\frac 12-\delta})$}\FromThmLowerbounds{}
&
\notneworbetter{$\bigO(m^{1-\delta})$}\FromThmLowerbounds{}
&
\neworbetter{$\bigO(m^{\frac 12})$}\FromThmCountNoninducedS{}
&
&
\neworbetter{$\bigO(1)$}\FromThmCountNoninducedS{}
&
&
\neworbetter{$\bigO(n)$}\FromThmCountNoninducedS{}
\\
$s$-$4$-cycle~\PictoFourCycle
&
&
&
\neworbetter{$\bigO(m^{\frac 23})$}\FromThmCountNoninducedS{}
&
&
\neworbetter{$\bigO(1)$}\FromThmCountNoninducedS{}
&
&
\neworbetter{$\bigO(n^2)$}\FromThmCountNoninducedS{}
\\
$s$-$k$-cycle, $k\geq 5$, odd
&
\neworbetter{$\Omega(m^{\frac 12-\delta})$}\FromThmLowerbounds{}
&
\neworbetter{$\bigO(m^{1-\delta})$}\FromThmLowerbounds{}
&
&
\\
$s$-diamond \PictoDiamond
&
\neworbetter{$\Omega(m^{\frac 12-\delta})$}\FromThmLowerbounds{}
&
\neworbetter{$\bigO(m^{1-\delta})$}\FromThmLowerbounds{}
&
\neworbetter{$\bigO(m^\frac 23)$}\FromThmCountNoninducedS{}
&
&
\neworbetter{$\bigO(1)$}\FromThmCountNoninducedS{}
&
&
\neworbetter{$\bigO(n^2)$}\FromThmCountNoninducedS{}
\\
$s$-$4$-clique \PictoClique{}
&
\neworbetter{$\Omega(m^{\frac 12-\delta})$}\FromThmLowerbounds{}
&
\neworbetter{$\bigO(m^{1-\delta})$}\FromThmLowerbounds{}
&
\neworbetter{$\bigO(m)$}\FromThmCountNoninducedS{}
&
&
\neworbetter{$\bigO(1)$}\FromThmCountNoninducedS{}
&
&
\neworbetter{$\bigO(1)$}\FromThmCountNoninducedS{}
\\
\midrule
\multicolumn{8}{@{}l}{\textit{Induced subgraphs}}\\
\midrule
connected, $n=4$
&
\neworbetter{$\Omega(m^{1-\delta})$}\FromThmCountInduced
&
\neworbetter{$\bigO(m^{2-\delta})$}\FromThmCountInduced
&
\notneworbetter{$\bigO(m)$}\FromCorCountInduced{}
&
$\bigO(h^2)$\FromEppsteinQuad{}%
&
\notneworbetter{$\bigO(1)$}
&
&
\neworbetter{$\bigO(1)$}\FromCorCountInduced{}
\\
\bottomrule
\end{tabular}
\end{table}

There are six connected graphs on four vertices, which we refer to as
\emph{length-$3$ path}~\PictoThreePath, \emph{claw}~\PictoClaw, \emph{paw}~\PictoPaw,
\emph{$4$-cycle}~\PictoFourCycle, \emph{diamond}~\PictoDiamond, and
\emph{$4$-clique}~\PictoClique{}.
Eppstein et al.~\cite{DBLP:journals/tcs/EppsteinGST12} extended the method  of~\cite{DBLP:journals/jgaa/EppsteinS12} to maintain counts of any (induced and non-induced) connected
four-vertex subgraph in amortized time $\bigO(h^2) = \bigO(m)$ and $\bigO(mh^2)$ space.
This paper contains a comprehensive study of the complexity of dynamically
counting all possible connected  four-vertex subgraphs.
We present new improved dynamic algorithms and give the first conditional lower
bounds.

\textbf{Upper bounds.}
We show how to maintain the number of any connected four-vertex non-induced
subgraph that is not a clique (such as a paw, a $4$-cycle, or a diamond) in
update time $\bigO(m^{2/3})$ and at most $\bigO(nm)$ space.
For graphs with an $h$-index
larger than $\bigO(m^{1/3})$,
our algorithms are hence faster than the
$\bigO(h^2)$ algorithm by Eppstein
\etal{}~\cite{DBLP:journals/tcs/EppsteinGST12}.
Besides,
our data structure can also be used to count all $s$-triangles, \ie, triangles
that contain a fixed vertex $s$, in $\bigO(m^{1/2})$ update time, constant
query time, and $\bigO(n)$ space, and likewise for $s$-length-$3$ paths.
The update time is in $\bigO(m^{2/3})$ for $s$-$4$-cycles, $s$-paws, and
$s$-diamonds, with $\bigO(n^2)$ space,
and $\bigO(m)$ for $s$-$4$-cliques with constant space.
For $\PartFrac \in \Range{0}{1}$, our data structure supports queries on the
number of triangles containing an arbitrary vertex or edge in
$\bigO(\min(m^{2\PartFrac},n^2))$ or $\bigO(\min(m^{1-\PartFrac},n))$
worst-case time, respectively, with an amortized update time of
$\bigO(m^{\max(\PartFrac,1-\PartFrac)})$ or $\bigO(m^{\PartFrac})$,
respectively, and $\bigO(n^2)$ space.
We also show how to maintain length-$3$ paths in time $\bigO(m^{1/2})$, but
this result was already stated in~\cite{DBLP:journals/tcs/EppsteinGST12}.
See \Table{complexities} for an overview.
All our algorithms are deterministic and the running time bounds are amortized
unless stated otherwise.

\textbf{Lower bounds.}
We also give the first conditional lower bounds for counting various
four-vertex subgraphs based on two popular hypotheses: the Online Boolean
Matrix-Vector Multiplication (\OMv{}) conjecture~\cite{henzinger2015unifying}
and the Combinatorial $k$-Clique hypothesis.
In the \OMv{} conjecture we are given a Boolean $n\times n$ matrix $M$ that can be
preprocessed.
Then, an online sequence of vectors $v_1,\dots ,v_n$ is presented and the goal
is to compute each Boolean product $M v_i$ (using conjunctions and
disjunctions) before seeing the next vector $v_{i+1}$.

\begin{conjecture}[OMv]\ConLabel{omv}
For any constant $\delta > 0$, there is no $\bigO(n^{3-\delta})$-time
algorithm that solves \OMv{} with error probability at most $1/3$ in the word-RAM model with $O(\log n)$ bit words.
\end{conjecture}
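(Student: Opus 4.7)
The final statement is explicitly labelled a \emph{conjecture}, and this is not a cosmetic choice: a genuine proof of \OMv{} would settle a central open problem in fine-grained complexity. Any honest ``proof proposal'' must therefore begin by acknowledging that we adopt \Conjecture{omv} as a working hypothesis, exactly as the prior work building on the \OMv{} framework does, and what follows is only a sketch of the natural attacks together with the barriers that stop them.

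The first avenue I would attempt is via communication complexity. One splits the input so that Alice holds $M$ and Bob receives the online stream $v_1,\dots,v_n$ one vector at a time; a hypothetical $\bigO(n^{3-\delta})$-time word-RAM algorithm with $O(\log n)$-bit words would then translate into a protocol of total length $\tilde{\bigO}(n^{3-\delta})$, so an $\tilde{\Omega}(n^{3})$ communication lower bound for this online matrix--vector task would suffice. The main obstacle here is that current techniques do not yield super-quadratic communication lower bounds for any explicit problem in this model; even for offline Boolean matrix multiplication, unconditional bounds fall far short of the target.

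A complementary route is to reduce \emph{to} \OMv{} from a more established static hardness assumption such as combinatorial Boolean matrix multiplication or triangle detection. The key subtlety is that the \emph{offline} variant of \OMv{}, where all $v_i$ are known in advance, admits an $O(n^\omega)$-time batched solution via a single matrix product, so any such reduction must crucially exploit the online requirement that $M v_i$ be output \emph{before} $v_{i+1}$ is revealed. Separating the online from the offline setting unconditionally is precisely the open problem under consideration, which is why no reduction of this form is presently known.

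In view of these barriers, the paper, following~\cite{henzinger2015unifying} and the subsequent literature, takes \Conjecture{omv} as given and uses it only to derive \emph{conditional} lower bounds via reductions \emph{from} \OMv{} to the various subgraph-counting problems listed in \Table{complexities}. That is the direction in which genuine progress is available, and it is the direction the remainder of this work pursues.
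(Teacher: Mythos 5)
You are right that this statement is a conjecture and the paper offers no proof of it — it is imported verbatim from~\cite{henzinger2015unifying} as a hardness hypothesis, and your discussion of why it resists proof and how it is instead used as the source of reductions matches the paper's treatment exactly.
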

Based on the \OMv{} conjecture we show that \emph{detecting} (with probability at least 2/3 in the word-RAM model with $O(\log n)$ bit words) the existence of (non-induced) paws, diamonds, 4-cliques, or $k$-cycles for any $k \ge 3$  in a graph with edge insertions and deletions takes amortized update time $\Omega(m^{1/2-\delta})$ or  query time
$\Omega(m^{1-\delta})$ if only polynomial preprocessing time is allowed. This lower bound applies also to the worst-case update time of any insertions-only or deletions-only algorithm. Note that this lower bound does not only apply to \emph{counting} the number of such subgraphs but already to \emph{detecting} whether such a subgraph exists. Let $s$ be a fixed vertex in the graph. The same lower bounds apply to algorithms that detect whether a diamond, $4$-clique, or $k$-cycle with odd $k$ containing $s$ exists. Finally, we also show a lower bound for \emph{counting} the number of length-3 paths and length-3 $s$-paths.
We remark that the conditional lower bounds for ($s$-)$3$-cycles were already known before~\cite{henzinger2015unifying}.

We also use the Combinatorial $k$-Clique hypothesis which is defined as follows and has become popular in recent years (e.g.~\cite{lincoln2018tight,abboud2018if,bringmann2019fine,DBLP:conf/soda/BergamaschiHGWW21}).

\begin{conjecture}[Combinatorial $k$-Clique]\ConLabel{k-clique}
For any constant $\delta >0$, for an $n$-vertex graph there is no $\bigO(n^{k-\delta})$ time combinatorial algorithm for $k$-clique detection with error probability at most $1/3$ in the word-RAM model with $\bigO(\log n)$ bit words.
\end{conjecture}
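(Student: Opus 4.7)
The statement to be established is a hardness conjecture rather than a theorem derivable from prior results in the paper, so a proof in the strict sense is out of reach with current techniques; what I can propose is a plausible route by which the conjecture could be reduced to, or shown equivalent to, other well-studied combinatorial hardness hypotheses. The plan is to first fix what ``combinatorial algorithm'' means in the word-RAM model, and then to derive the $n^{k-\delta}$ bound from a canonical assumption rather than attempting it from scratch.

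First I would pin down the model so as to disallow algebraic shortcuts of Strassen type while permitting table lookups, bitwise operations, and comparisons on $\bigO(\log n)$-bit words. Under this restriction, the combinatorial Boolean Matrix Multiplication (BMM) hypothesis asserts that multiplying two $n\times n$ Boolean matrices requires $n^{3-o(1)}$ time, and combinatorial BMM is known to be subcubic-equivalent to combinatorial triangle (i.e., $3$-clique) detection; this handles the base case $k=3$ of the conjecture. For $k\ge 4$, I would try to lift this via a colour-coded reduction that converts an $\bigO(n^{k-\delta})$ algorithm for $k$-clique into an $\bigO(n^{(k-1)-\delta'})$ algorithm for $(k{-}1)$-clique, by splitting the vertex set into $k$ colour classes, iterating over a first vertex of the putative clique, and recursing on the induced neighbourhood restricted to the remaining $k-1$ colour classes. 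Iterating this reduction downward eventually bottoms out at $k=3$ and therefore at combinatorial BMM.

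The principal obstacle is exponent-tightness of the recursive reduction. An additive $\polylog(n)$ loss per level is acceptable, but a polynomial loss collapses the target bound to $\Omega(n^{k-\delta'})$ for some $\delta'>\delta$ and destroys the conclusion; so the cost per recursion level must be $\bigO(n)$ amortised across the $n$ choices of the first clique vertex, and making this work uniformly in $k$ appears to require a genuinely new reduction idea for $k\ge 4$ rather than the folklore $k=3$ equivalence. A second and more fundamental barrier is that no unconditional super-linear lower bounds are known against the word-RAM, so any complete proof must remain conditional: realistically the most one can hope to establish is an equivalence of the form ``Conjecture~\ref{con:k-clique} holds iff combinatorial BMM is $n^{3-o(1)}$-hard'', and even that equivalence is currently open for $k\ge 4$.
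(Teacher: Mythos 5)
The statement you were asked about is \Conjecture{k-clique}, which the paper states as an \emph{assumption}, not a theorem: it is a hardness hypothesis invoked (together with the \OMv{} conjecture) to obtain conditional lower bounds, and the paper neither proves it nor attempts to. You correctly recognize this at the outset, and that recognition is the right answer; no proof exists in the paper to compare against.

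That said, the speculative reduction you sketch in the second paragraph has a directionality problem that is worth flagging. You claim the colour-coded construction ``converts an $\bigO(n^{k-\delta})$ algorithm for $k$-clique into an $\bigO(n^{(k-1)-\delta'})$ algorithm for $(k{-}1)$-clique,'' but the procedure you then describe (iterate over a candidate first vertex, recurse on its neighbourhood with $k-1$ colours) is a reduction \emph{from} $k$-clique \emph{to} $(k{-}1)$-clique. That yields the implication ``fast $(k{-}1)$-clique $\Rightarrow$ fast $k$-clique,'' whose contrapositive is ``$k$-clique hard $\Rightarrow$ $(k{-}1)$-clique hard'' — exactly the opposite of what you would need to lift hardness upward from BMM/triangle to $k$-clique. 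In fact, the standard blow-up reduction (replace each vertex by a $(k/3)$-tuple and look for a triangle on $\sim n^{k/3}$ nodes) shows that $k$-clique hardness \emph{implies} triangle/BMM hardness, so the combinatorial $k$-clique hypothesis for $k>3$ is, as far as is known, a strictly \emph{stronger} assumption than combinatorial BMM, not one derivable from it. Your third paragraph's worry about polynomial exponent loss is well placed, but the more basic obstruction is that the implication you want simply isn't known to hold in that direction. None of this affects the paper, which — like the references it cites — takes \Conjecture{k-clique} as a primitive assumption and builds its lower-bound reductions on top of it.
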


Let $\delta > 0$ be a small constant.
Based on the 4-clique conjecture we show that (with probability at least 2/3 in the word-RAM model with $O(\log n)$ bit words) there does not exist a combinatorial algorithm that counts \emph{any connected induced four-vertex subgraph} in a dynamic graph with amortized update time $O(n^{4-2\delta}/m)$, which is $O(m^{1-\delta})$, and query preprocessing time $O(n^{4-2\delta})$. 
This bound applies also to any insertions-only algorithm.
The bound can be extended to any $k$-clique with $k >4$ showing that the amortized update time is 
$\Omega(n^{k-2\delta}/m)$
with $\Omega(n^{k-2\delta})$ preprocessing and query time.

\textbf{Technical contribution.}
For the upper bounds we extend and improve upon Eppstein
\etal{}~\cite{DBLP:journals/tcs/EppsteinGST12} both with respect to running
time and space.
The high-level idea is as follows:
We partition the vertices into (few) \emph{high-degree} and (many)
\emph{low-degree} vertices and then maintain for each vertex, vertex pair, or
vertex triple certain information in a data structure such as the number of
certain paths up to length 3 that contain low-degree vertices.
When an edge $\{u,v\}$ is updated, four-vertex subgraphs that contain $u$, $v$,
and two other low-degree vertices can be quickly counted using the information
in the data structure.
On the other side, subgraphs that contain a high-degree vertex in addition
to $u$ and $v$ can often be counted ``from scratch'' after each update as there
are few high-degree vertices.
The more challenging case is the situation where relationships involving two or
more high-degree vertices in the subgraph need to be checked or maintained.
How to deal with this depends on the subgraph to count.
For diamonds, e.g., this requires to keep certain information about triples of
vertices.

For the conditional lower bounds based on the combinatorial $k$-clique
conjecture we first directly deduce the lower bound for incremental 4-clique
counting.
Then we use the fact that (a)~we have a lower bound for 4-cliques, (b)~we
developed algorithms with $\bigO(m^{2/3})$ update time for all non-induced
subgraphs, and (c)~there exist ``counting formulas'' that allow to compute the
number of any induced subgraph based on the number of 4-cliques and the number
of non-induced subgraphs.
Thus, if the number of an induced subgraph pattern could be computed in
$\bigO(m^{1-\delta})$ time per update for some small $\delta > 0$ then we could
use the corresponding counting formula and our algorithms for non-induced
subgraphs to dynamically maintain the number of 4-cliques, contradicting our
dynamic lower bound for 4-cliques.

For the conditional lower bounds based on  the \OMv{} conjecture we construct
for each subgraph pattern $\cal P$ based on an \oneuMv{} instance (which is a
variant of \OMv{}) a suitable graph based on $\cal P$  with $\bigO(n)$ vertices
and $\bigO(n^2)$ edges such that detecting the existence of the (non-induced)
version of $\cal P$ in the graph equals finding the answer for the \oneuMv{}
instance.
Then the lower bound follows as in~\cite{henzinger2015unifying}.
The challenge is to construct such a graph. We show how to do this for
detecting non-induced ($s$)-paws, ($s$)-diamonds, ($s$)-4-cliques, and
($s$)-$k$-cycles for $k\ge 3$ and for counting non-induced length-3
($s$)-paths.

Our paper gives in Section~\ref{sect:preliminaries}  the preliminaries, in
Section~\ref{sect:algorithms} our new algorithms, and in
Section~\ref{sect:lowerbounds} our lower bounds.
Proofs that have been omitted in the main part are given in the appendix.
\section{Preliminaries}\label{sect:preliminaries}
\paragraph*{Basic Definitions}\SectLabel{basics}
We consider an undirected dynamic graph $G = (V, E)$ and use $n$ to denote the number of vertices and $m$ for the current number of edges.
Two vertices $u \neq v$ are \emph{adjacent} if there is an edge $e =
\Edge{u}{v} \in E$.
In this case, $u$ and $v$ are \emph{incident} to $e$.
The \emph{neighborhood} $\Neigh(v)$ of a vertex $v$ is defined as $\Set{ u \mid
\Edge{u}{v} \in E}$ and $v$'s \emph{degree} is $\Degree(v) = \Card{\Neigh(v)}$.
As a shorthand notation to exclude just one vertex, we use
$\NeighEx{w}(v) = \Neigh(v) \setminus \Set{w}$ and $\DegreeEx{w}(v) =
\Card{\NeighEx{w}(v)}$, \ie, $\DegreeEx{w}(v) = \Degree(v) - 1$ if $w \in
\Neigh(v)$
and $\DegreeEx{w}(v) = \Degree(v)$ otherwise.
A $k$-\emph{path} (also length-$k$ path) is a sequence of distinct edges
$\Sequence{\Edge{v_0}{v_1}, \Edge{v_1}{v_2}, \dots, \Edge{v_{k-1}}{v_k}}$ of
length $k$, %
where $v_i \neq v_j$ for all $0 \leq i,j \leq k$.
A $k$-\emph{cycle} is a $k$-path where as an only exception the first vertex
equals the last, \ie, $v_0 = v_k$.
A $3$-cycle is also called \emph{triangle}.
A \emph{claw} is a graph consisting of a vertex $x$, called the \emph{central}
vertex, and three edges incident to it.
A \emph{paw} is a graph consisting of a triangle together with an additional edge attached to one of the vertices of the triangle.
This vertex is called the \emph{central} vertex of the paw and the additional edge the \emph{arm}.
A \emph{diamond} is a $4$-cycle with a \emph{chord}, \ie, an additional edge
connecting one of the two pairs of non-adjacent vertices, which creates two
triangles sharing the chordal edge.
A $k$-\emph{clique} is the complete graph $K_k$ on $k$ vertices.

A graph $G' = (V', E')$ is a \emph{subgraph} of $G$ if $V' \subseteq V$ and $E'
\subseteq E$.
A subgraph $G' = (V', E')$ is said to be \emph{induced} if $E' = \Set{
\Edge{u}{v} \in E \mid u, v \in V' }$.
The term \emph{non-induced subgraph} or just \emph{subgraph} without an
adjective refers to all subgraphs, induced and not.
For a static graph $\Graphlet$, also called \emph{pattern},  we denote by
$\Count(G, \Graphlet)$ the number of (non-induced) subgraphs of $G$ that are
isomorphic to $\Graphlet$, and by $\ICount(G, \Graphlet)$ the number of induced
subgraphs that are isomorphic to $\Graphlet$, in each case divided by the
number of automorphisms of $\Graphlet$.
For a vertex $s \in V$, we further denote by $\Count(G, \Graphlet, s)$
the number of non-induced subgraphs of $G$ that are isomorphic to $\Graphlet$
and contain $s$.
We denote by $\GraphletSet_k$ the set of connected subgraphs on $k$ vertices.
There are six connected graphs in $\GraphletSet_4$, which we refer to as
\emph{length-$3$ path}~\PictoThreePath, \emph{claw}~\PictoClaw, \emph{paw}~\PictoPaw,
\emph{$4$-cycle}~\PictoFourCycle, \emph{diamond}~\PictoDiamond, and
\emph{$4$-clique}~\PictoClique{}.

Given a pattern $\Graphlet$, we study the problem of \emph{maintaining} the
number of occurrences of $\Graphlet$ as a subgraph or induced subgraph of $G$,
called the \emph{non-induced subgraph count} $\Count(G, \Graphlet)$ or the
\emph{induced subgraph count} $\ICount(G, \Graphlet)$, respectively, of
$\Graphlet$ in $G$,
and the analogous problem of \emph{maintaining} the count of non-induced subgraphs
containing a specific predefined vertex $v \in V$ or edge $e \in E$,
$\Count(G, \Graphlet, v)$
or $\Count(G, \Graphlet, e)$, respectively.
Unless stated otherwise, \emph{maintaining} a count implies that we can
retrieve it by a query in constant time.
We also study the closely related problem of \emph{querying} the number
of subgraphs containing a specific vertex or edge that is given
only with the query.
\paragraph*{Further Related Work}\label{sect:related}
Detecting (or counting) subgraphs, also known as the \textit{subgraph
isomorphism} problem, generalizes the \textit{clique} or \textit{Hamiltonian
cycle} problem and is hence $\NP$-hard.
Nevertheless, it can be solved efficiently if the subgraphs to detect or count
are restricted.
\subparagraph*{Static algorithms.}
Algorithms counting numbers of subgraphs and solving related problems have been
studied extensively for static graphs.
Alon, Yuster and Zwick~\cite{AYZ} developed an algorithm to count the number of
triangles and other circles (up to seven vertices) in a graph in time
$\bigO(n^\omega)$, where $n$ denotes the number of vertices and
$\omega<2.373$~\cite{LeGall14} is the fast matrix multiplication exponent, \ie,
the smallest value such that two $n\times n$ matrices can be multiplied in
$\bigO(n^\omega)$ time.
Kloks, Kratsch and M\"uller~\cite{Kloks} showed how to compute the number of
4-cliques in time $\bigO(m^{(\omega+1)/2})$ ($m$ denotes the number of edges)
and the number of any other subgraph of size 4 in time $\bigO(n^\omega+m^{(\omega+1)/2})$.
There is also a large body of work on  parallel algorithms for
counting subgraphs (see \eg~\cite{mapreduce}) and to develop approximate
algorithms in the streaming setting (see \eg~\cite{streaming1,streaming2}).
\subparagraph*{Dynamic algorithms.} A more recent development is counting subgraph numbers for dynamic graphs.
Kara \etal{}~\cite{dynamicinsertions} provided an algorithm for counting
triangles in amortized time $\bigO(\sqrt{m})$ per update and $\bigO(m)$
space, which can also enumerate them with constant time delay.
Dhulipala \etal{}~\cite{DLS2020} extended it to a batch-dynamic
parallel algorithm with $\bigO(\Delta\sqrt{\Delta+m})$ amortized work and
$\bigO(\polylog(\Delta+m))$ depth \whp{} for a batch of $\Delta$ updates.
Based on a static algorithm to enumerate cliques,
they show how to obtain a dynamic algorithm
for maintaining the number of $k$-cliques for a fixed $k > 3$
with expected $\bigO(\Delta(m+\Delta)\alpha^{k-4})$ work and
$\bigO(\log^{-2}n)$ depth \whp{} per update and $\bigO(m+\Delta)$ space, where
$\alpha\in\bigO(\sqrt{m})$ is the arboricity of the graph.
They also give a parallel fast matrix multiplication algorithm with
$\bigO(\min(\Delta m^{(2k-1)\omega_p/(3\omega_p+3)}, {(\Delta+m)}^{2(k+1)\omega_p/(3\omega_p+3)}))$
amortized work and $\bigO(\log(\Delta+m))$ depth,
with parallel matrix multiplication constant $\omega_p$.
Eppstein et al.~\cite{DBLP:journals/tcs/EppsteinGST12} also count all
three-vertex subgraphs in directed graphs in amortized time $\bigO(h)$.
For specific graph classes, namely \emph{bounded expansion} graphs, Dvorak and Tuma~\cite{DvorakDynamicSubgraph} gave a different algorithm for
maintaining counts for arbitrary graph patterns $\cal P$  of $k$ vertices the number of induced subgraphs of pattern $\cal P$ in amortized time
$\bigO(\log^{(k^2-k)/2 -1}n)$ and in amortized time
$\bigO(n^{\epsilon})$ for any constant $\epsilon > 0$ in
\emph{no-where dense} graphs.

In recent subsequent work~\cite{HLS2022}, it was shown that counting $4$-cycles
is hard also in random graphs, \ie, with $\Omega(m^{1/2-\delta})$ update
time or $\Omega(m^{1-\delta})$ query time.
\section{New Counting Algorithms for Subgraphs on Four
Vertices}\label{sect:algorithms}
Counting the number of claws~\cite{DBLP:journals/jgaa/EppsteinS12} and
$4$-cliques is fairly straightforward~\cite{DLS2020}.
The latter extends to $4$-vertex subgraphs in general, where all work is
either done during the updates or queries.
\begin{observation}[\cite{DBLP:journals/jgaa/EppsteinS12}]
Let $G = (V, E)$ be a dynamic graph and
$\Graphlet$ be the claw~\PictoClaw{}.
Then, $\Count(G, \Graphlet) =\sum_{v \in V, \Degree(v) \ge 3} \binom{\Degree(v)}{3}$.
The count can be initialized in $\bigO(n)$ time and maintained
in constant time and space.
We can query $\Count(G, \Graphlet, \Edge{u}{v})$
for an arbitrary edge $\Edge{u}{v} \in E$ in constant time.
\end{observation}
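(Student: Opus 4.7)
The plan is to prove the formula by a center-based decomposition of claws, and then observe that each of the three claimed time bounds reduces to constant-time arithmetic on a single precomputed sum and the current degrees.

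For the formula, I would use the fact that every claw has a unique vertex of degree three inside it (the central vertex), while the other three vertices are leaves. Hence the map sending a claw to the pair (central vertex $v$, unordered triple of the three neighbors of $v$ appearing in the claw) is a bijection between claws in $G$ and such pairs. For each $v$ there are $\binom{\Degree(v)}{3}$ valid triples, and vertices with $\Degree(v)<3$ contribute $0$, yielding $\Count(G,\Graphlet)=\sum_{v\in V} \binom{\Degree(v)}{3}$ with no over- or under-counting.

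For initialization I would iterate over all vertices and accumulate $\binom{\Degree(v)}{3}$, which is $\bigO(n)$ assuming the graph representation gives $\bigO(1)$ access to degrees; only the running total needs to be stored, so the extra space is $\bigO(1)$. After inserting (resp.\ deleting) an edge $\Edge{u}{v}$, only $\Degree(u)$ and $\Degree(v)$ change, each by $\pm 1$, and Pascal's identity $\binom{d+1}{3}-\binom{d}{3}=\binom{d}{2}$ shows that the contribution of an affected vertex changes by $\pm\binom{d}{2}$, where $d$ is its pre-change degree. Thus a single addition or subtraction of $\binom{\Degree(u)}{2}+\binom{\Degree(v)}{2}$ (with the appropriately chosen pre- or post-update degrees) refreshes the stored total in $\bigO(1)$ time.

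For the edge query I would apply the same center-based decomposition: a claw containing $\Edge{u}{v}$ must have either $u$ or $v$ as its central vertex, since the edge is necessarily one of the three leaf edges. Conditional on $u$ being central, the two remaining leaves form an unordered pair from $\NeighEx{v}(u)$, giving $\binom{\Degree(u)-1}{2}$ options, and symmetrically for $v$; summing yields $\binom{\Degree(u)-1}{2}+\binom{\Degree(v)-1}{2}$, which is $\bigO(1)$ to evaluate. There is essentially no hard step here; the only point to be careful about is potential double counting across the two sub-cases, and this is ruled out automatically by the uniqueness of the central vertex in a claw.
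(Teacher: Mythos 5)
Your proof is correct, and you should note that the paper itself gives no proof here: the observation is stated with a citation to Eppstein and Spiro and left unproved, so there is nothing internal to compare against. Your argument is the standard one, and it is exactly what the cited source relies on: the unique degree-$3$ vertex of a claw is its center, giving the bijection with (center, unordered leaf triple) pairs and hence the closed-form sum; Pascal's identity $\binom{d+1}{3}-\binom{d}{3}=\binom{d}{2}$ turns an edge update into two $\bigO(1)$ corrections; and for $\Count(G,\Graphlet,\Edge{u}{v})$, the endpoint of $\Edge{u}{v}$ that is central is uniquely determined by the claw, so $\binom{\Degree(u)-1}{2}+\binom{\Degree(v)-1}{2}$ is a disjoint split with no double counting. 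One small point worth being explicit about, given the paper's convention that $\Count(G,\Graphlet)$ is normalized by $\Card{\mathrm{Aut}(\Graphlet)}$: your bijection counts \emph{subgraphs} (vertex-and-edge sets), not labeled embeddings, and each claw subgraph corresponds to exactly $\Card{\mathrm{Aut}(\Graphlet)}=6$ embeddings, so $\sum_v\binom{\Degree(v)}{3}$ already agrees with the normalized count; stating this sentence would make the proof airtight under the paper's definitions.
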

\begin{theorem}\ThmLabel{count-any}
Let $G = (V, E)$ be a dynamic graph and
$\Graphlet \in \GraphletSet_4$.
We can
\begin{enumerate}[(i)]
\item
maintain $\Count(G, \Graphlet)$ in worst-case $\bigO(m)$ update time and
constant space.
\item
query $\Count(G, \Graphlet, \Edge{u}{v})$
for an arbitrary edge $\Edge{u}{v} \in E$
in worst-case $\bigO(m)$ time, with
constant update time and space.
\item
query $\Count(G, \Graphlet, \Edge{u}{v})$
for an arbitrary edge $\Edge{u}{v} \in E$
in worst-case constant time,
with $\bigO(m)$ worst-case update time and $\bigO(m)$ space.
\end{enumerate}
\end{theorem}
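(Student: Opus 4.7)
The plan is to first establish~(ii), then derive~(i) by a direct delta-argument and~(iii) by extending~(ii)'s enumeration to maintain per-edge stored counts.

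For~(ii), I proceed by case analysis over the six patterns in $\GraphletSet_4$. For each pattern $\Graphlet$ and each inequivalent way of identifying $\Edge{u}{v}$ with an edge of $\Graphlet$, I express the corresponding partial count as a constant-size formula built from $\Degree(u)$, $\Degree(v)$, $\Card{\Neigh(u) \cap \Neigh(v)}$, single sums $\sum_{w \in \Neigh(u) \cup \Neigh(v)} f(w)$ with $f$ computable in $\bigO(1)$ per call, or double sums $\sum_{w \in \Neigh(u) \cup \Neigh(v)} \Card{\Neigh(w) \cap S}$ for $S$ one of $\Neigh(u)$, $\Neigh(v)$, or $\Neigh(u) \cap \Neigh(v)$. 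Each such quantity is bounded by $\sum_{w \in \Neigh(u) \cup \Neigh(v)} \Degree(w) \le 2m$, so the evaluation takes $\bigO(m)$ time with hashed adjacency lists providing $\bigO(1)$ membership queries. Summing over the constant number of inequivalent edge-positions yields $\Count(G, \Graphlet, \Edge{u}{v})$. Since no state is kept between updates, update time and auxiliary space are both $\bigO(1)$.

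For~(i), observe that inserting or deleting $\Edge{u}{v}$ changes $\Count(G, \Graphlet)$ by exactly $\pm \Count(G', \Graphlet, \Edge{u}{v})$, where $G'$ is the graph containing $\Edge{u}{v}$: the newly created or destroyed non-induced copies are precisely those using $\Edge{u}{v}$ as one of $\Graphlet$'s edges. Compute this delta via~(ii) in $\bigO(m)$ and apply it to the stored scalar counter; the only extra space beyond the graph itself is $\bigO(1)$. For~(iii), maintain a hashed table storing $\Count(G, \Graphlet, e)$ for every $e \in E$ in $\bigO(m)$ space. On an update to $\Edge{u}{v}$, first (re)compute $\Count(G, \Graphlet, \Edge{u}{v})$ via~(ii) to initialize or discard its slot, then update every other affected edge $e$. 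Any such $e$ either shares a vertex with $\Edge{u}{v}$ (at most $\Degree(u) + \Degree(v)$ such edges) or has at least one endpoint in $\Neigh(u) \cup \Neigh(v)$, which we enumerate by iterating $x \in \Neigh(u) \cup \Neigh(v)$ and then $y \in \Neigh(x)$ at total cost $\sum_{x \in \Neigh(u) \cup \Neigh(v)} \Degree(x) \le 2m$, deduplicating via a temporary hash mark. For each candidate $e$, the change $\Delta(e)$ is computed by a case analysis analogous to~(ii) restricted to 4-vertex copies of $\Graphlet$ containing both $\Edge{u}{v}$ and $e$; the intersection sums amortize to $\bigO(m)$ total via $\sum_w \Degree(w) \le 2m$.

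The principal obstacle is the case analysis in~(ii) for ``interior'' edge roles, such as $\Edge{u}{v}$ serving as the middle edge of the length-$3$ path, a $4$-cycle edge, or the chord of the diamond: a naive enumeration over $\Neigh(u) \times \Neigh(v)$ would cost $\Omega(\Degree(u) \cdot \Degree(v))$ and exceed $\bigO(m)$. The fix is to re-express each such count as a sum over $\Neigh(u) \cup \Neigh(v)$ or via neighborhood-intersection sizes so that the total work telescopes to $\bigO(m)$ via $\sum_w \Degree(w) \le 2m$. A secondary bookkeeping subtlety in~(iii) is ensuring each affected edge is processed exactly once despite being reachable through several neighbors in $\Neigh(u) \cup \Neigh(v)$.
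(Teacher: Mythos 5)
Your proposal is correct but organizes the $\bigO(m)$ computation differently from the paper. The paper's core algorithm~$\Alg$ simply iterates over \emph{all} $m$ edges $\Edge{x}{y}$ of $G$ disjoint from $\Edge{u}{v}$ and tests in $\bigO(1)$ whether $\{u,v,x,y\}$ spans a copy of $\Graphlet$; this catches every copy in which the remaining two vertices are adjacent, and only the ``interior'' roles you flag (middle edge of the path, the diamond chord, the two triangle edges of the paw incident to the center, and the claw) are then handled by separate degree/neighborhood formulas costing $\bigO(n)$. Your proposal instead dispenses with the global edge scan entirely and covers \emph{every} inequivalent position of $\Edge{u}{v}$ by a closed-form combination of degrees, $\Card{\Neigh(u)\cap\Neigh(v)}$, and neighborhood sums, with the key observation that $\sum_{w\in\Neigh(u)\cup\Neigh(v)}\Degree(w)\leq 2m$ keeps the cost within $\bigO(m)$. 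This is a more uniform recipe and avoids touching edges far from $u,v$, at the cost of a heavier case analysis that you only sketch; the paper's split gets the ``easy'' roles almost for free from the edge scan. Your derivation of~(i) by the delta argument matches the paper, and your~(iii) --- enumerate affected edges via the $\Neigh(u)\cup\Neigh(v)$ sweep and compute a per-edge delta --- is correct in spirit, though the claim that all the per-$e$ intersection computations amortize to $\bigO(m)$ deserves to be spelled out per pattern (for edges $e$ sharing an endpoint with $\Edge{u}{v}$, the free fourth vertex must be counted by a formula, not enumeration, exactly as you do in~(ii)); the paper sidesteps this by having~$\Alg$ and the interior-role routines update the per-edge tallies directly as they go.
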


For connected $4$-vertex subgraphs other than the claw and the clique, we
can invest space to achieve speedups in running time.
In the following, we present our data structure and show how to update it
efficiently.
We then use different parts of the data structure to count different subgraphs.
Specifically, we prove the following result:
\begin{theorem}\ThmLabel{count-noninduced}
Let $G$ be a dynamic graph and $\Graphlet \in \GraphletSet_4$.
We can maintain
$\Count(G, \Graphlet)$ %
in
\begin{enumerate}[(i)]
\item
amortized $\bigO(\sqrt{m})$ update time
with $\bigO(\min(m^{1.5}, n^2))$ space
and query $\Count(G, \Graphlet, e)$ for an arbitrary edge $e \in E$
in worst-case $\bigO(\sqrt{m})$ time
if $\Graphlet$ is the length-$3$ path~\PictoThreePath{}, %
\item
amortized $\bigO(m^{2/3})$ update time
with $\bigO(n^2)$ space
and query $\Count(G, \Graphlet, e)$ for an arbitrary edge $e \in E$
in worst-case $\bigO(m^{2/3})$ time
if $\Graphlet$ is the paw~\PictoPaw{}
or the $4$-cycle~\PictoFourCycle{}, %
\item
amortized $\bigO(m^{2/3})$ update time
with $\bigO(\min(nm,m^{5/3}))$ space
and query $\Count(G, \Graphlet, e)$ for an arbitrary edge $e \in E$
in worst-case $\bigO(m^{2/3})$ time
if $\Graphlet$ is the diamond~\PictoDiamond{}.
\end{enumerate}
\end{theorem}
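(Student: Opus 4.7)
The plan is to extend the degree-based partitioning of Eppstein et al.\ with sharper thresholds and tailored auxiliary tables. For each target subgraph I fix a threshold ($\tau = m^{1/2}$ for~(i) and $\tau = m^{1/3}$ for~(ii) and~(iii)), call a vertex $v$ \emph{high-degree} if $\Degree(v)>\tau$ and otherwise \emph{low-degree}, and maintain the partition $V=\High\cup\Low$ together with adjacency lists, degrees, and the triangle structure of Kara et al.~\cite{dynamicinsertions}. Then $|\High|=\bigO(m/\tau)$ and reclassifications are amortized against the $\Omega(\tau)$ updates needed to cross the threshold. The triangle structure already provides, in amortized $\bigO(\sqrt m)$ per update, the global triangle count, per-vertex triangle counts $t(v)$, and one common-neighbour query $|\Neigh(u)\cap\Neigh(v)|$ for any pair $(u,v)$.

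Part~(i) follows from the identity
\[
\Count(G,\PictoThreePath) \;=\; \sum_{\{u,v\}\in E}(\Degree(u)-1)(\Degree(v)-1) \;-\; 3\,\Count(G,\PictoThreeCycle),
\]
obtained by classifying each length-$3$ path by its unique middle edge and subtracting the cases where the two outer endpoints collide into a triangle. The triangle term is supplied by~\cite{dynamicinsertions}; the degree-product sum is kept as a running aggregate whose change under an update of $\{a,b\}$ is computed from per-vertex values $\Aux{sd}(v)=\sum_{w\in\Neigh(v)}(\Degree(w)-1)$, maintained eagerly on the low-degree side (cost $\bigO(\tau)$) and with a lazy additive counter on the $\bigO(\sqrt m)$ high-degree side. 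A query $\Count(G,\PictoThreePath,e)$ splits by the role of $e$ in the $3$-path (middle or outer) and is answered from degrees, $\Aux{sd}$, and one common-neighbour query in $\bigO(\sqrt m)$.

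Part~(ii) relies on
\[
\Count(G,\PictoFourCycle) = \tfrac{1}{2}\sum_{u\ne v}\binom{|\Neigh(u)\cap\Neigh(v)|}{2}, \qquad
\Count(G,\PictoPaw) = \sum_{v\in V}\Degree(v)\,t(v) - 6\,\Count(G,\PictoThreeCycle).
\]
I keep for every ordered pair $(u,v)\in V^2$ the count $\Aux{cn}_\Low(u,v)=|\Neigh(u)\cap\Neigh(v)\cap\Low|$ in an $\bigO(n^2)$ table. An update of $\{a,b\}$ changes these entries only via low-degree intermediaries, incurring $\bigO(\tau)=\bigO(m^{1/3})$ eager work; contributions of the $\bigO(m^{2/3})$ high-degree intermediaries are enumerated on the fly, and the $4$-cycle and paw counters are shifted by the corresponding $\binom{\cdot}{2}$ or degree-product differences over the $\bigO(m^{2/3})$ affected pairs. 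Per-edge queries again split by the role of $e$ in the pattern and are answered in $\bigO(m^{2/3})$ via $\Aux{cn}_\Low$ plus a scan over $\High$.

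Part~(iii) uses $\Count(G,\PictoDiamond)=\sum_{\{u,v\}\in E}\binom{T(u,v)}{2}$ with $T(u,v)=|\Neigh(u)\cap\Neigh(v)|$. The obstacle is that an update of $\{a,b\}$ shifts $T(a,w)$ for every $w\in\Neigh(b)$, an $\Omega(m)$ effect in the worst case. I would split $T=T_\Low+T_\High$ by the status of the common neighbour, maintain $T_\Low$ over edges analogously to~(ii), and for every high-degree vertex $h$ and every edge $\{u,v\}\in E$ store a bit recording whether $h\in\Neigh(u)\cap\Neigh(v)$; this takes $\bigO(|\High|\cdot m)=\bigO(\min(nm,m^{5/3}))$ space. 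An update alters only $\bigO(|\High|)+\bigO(\tau)=\bigO(m^{2/3})$ stored entries, and the diamond counter is shifted incrementally by the corresponding $\binom{\cdot}{2}$-differences. The main technical obstacle is the case where two or more of the four diamond vertices are high-degree: one must express each such contribution through a single high-degree witness rather than iterating over all $\binom{|\High|}{2}$ pairs, so that the amortized update cost stays in $\bigO(m^{2/3})$ and per-edge queries remain in $\bigO(m^{2/3})$.
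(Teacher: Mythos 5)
Your overall plan -- maintaining closed-form algebraic expressions that decompose by a degree threshold -- is a genuinely different route from the paper's, which instead computes the \emph{delta} (the number of subgraphs containing the updated edge) on every update via a collection of auxiliary tables (\Aux{vLV}, \Aux{uLv}, \Aux{uLLv}, \Aux{cLV}, \Aux{pLL}, \Aux{uHv}, \Aux{cL}) indexed by one, two or three anchor vertices. Part~(i) is plausible: the identity for $\Count(G,\PictoThreePath)$ is correct, and the lazy/eager split for \Aux{sd} over $\High/\Low$ works since $|\High|=\bigO(\sqrt m)$ can be scanned per query. But parts~(ii) and~(iii) have real gaps.

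For $4$-cycles, the table $\Aux{cn}_\Low(u,v)=|\Neigh(u)\cap\Neigh(v)\cap\Low|$ is the wrong object. The change in the $4$-cycle count upon inserting $\{a,b\}$ is the number of $4$-cycles through $\{a,b\}$, i.e.\ the number of paths $a\text{--}x\text{--}y\text{--}b$ with $\{x,y\}\in E$; this is a length-$3$-path count, not a common-neighbour count. Your claim that only $\bigO(m^{2/3})$ pairs are affected is false: if one endpoint $a$ has high degree, then $|N(u)\cap N(v)|$ changes for every pair $(b,w)$ with $w\in\Neigh(a)$, i.e.\ $\Theta(\Degree(a))$ pairs, potentially $\Omega(m)$. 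Expressing the delta directly requires a stored count of length-$3$ paths between $a$ and $b$ with two low middle vertices (the paper's \Aux{uLLv}) plus, for the case where both middle vertices are high-degree, a stored count of high-degree common neighbours between pairs of high-degree vertices (the paper's \Aux{uHv}). Neither appears in your proposal, and without them the ``both high'' subcase costs $\bigO(|\High|^2)=\bigO(m^{4/3})$ per update. A similar issue arises for paws when both endpoints of the updated edge are high-degree: $\sum_{c\in N(a)\cap N(b)}\Degree(c)$ has an $\Omega(n)$-size sum that you have no precomputed aggregate for.

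For diamonds you explicitly acknowledge that the case with two or more high-degree vertices is unresolved -- ``one must express each such contribution through a single high-degree witness'' -- but you do not say how, and this is precisely the hard part. The paper's solution maintains \Aux{pLL} (paws with a low central vertex and low arm endpoint, indexed by two anchors), \Aux{uHv} (high-to-high length-$2$ paths through a high intermediary), and \Aux{cL} (claws with a low central vertex, indexed by a \emph{triple} of anchors), and performs a six-way case analysis on the role of the updated edge (chord vs.\ cycle edge) and the degrees of the other two vertices (Lemma~\ref{lem:count-diamond-edge}). The triple-indexed table \Aux{cL} is what gives the $\bigO(\min(nm,m^{5/3}))$ space bound you quote, but nothing in your proposal corresponds to it. As written, the diamond argument is incomplete.
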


Our algorithm makes use of a standard technique in dynamic graph algorithms
that partitions vertices into \emph{high-degree} and \emph{low-degree}
vertices.
We adapt it to our needs as follows:
Let $m_0$ be the number of edges of $G$ at construction or when recomputing
from scratch and let $M = 2m_0$. A recomputation from scratch and re-initialization of the partition is triggered whenever the current number of edges $m < \lfloor\frac{M}{4}\rfloor$ or $m \geq M$.
Initially and at each recomputation from scratch, a vertex $v \in V$ is
classified as \emph{high-degree} and added to partition $\High$ if $\deg(v)
\geq \theta$ and otherwise as \emph{low-degree} and added to partition $\Low :=
V \setminus \High$, for some threshold $\theta$.
As $G$ evolves, a high-degree vertex $v$ is reclassified as low and moved to
$\Low$ only if $\Degree(v) < \frac{1}{2}\theta$.
Vice-versa, a low-degree vertex $v$ is reclassified as high and moved to $\High$
only if $\Degree(v) \geq \frac{3}{2}\theta$.
We call such a partition
$(\High, \Low)$ a \emph{dynamic vertex partition with
threshold $\theta$}.
If $\theta = M^\PartFrac$ for some $\PartFrac \in \Range{0}{1}$, we call the
partition an \emph{$\PartFrac$-partition.}
\begin{theorem}[$\PartFrac$-Partition~\cite{DBLP:journals/tods/KaraNNOZ20}]%
\ThmLabel{partition}
Let $\PartFrac \in \Range{0}{1}$ and consider  an $\PartFrac$-partition $(\High, \Low)$
for a dynamic graph $G = (V, E)$.
Then, $\Card{\High} \in \bigO(m^{1-\PartFrac})$.
The partition can be constructed in $\bigO(n)$ time and maintained in amortized
constant time per update with amortized $\bigO(m^{-\PartFrac})$ changes to the
partition per update and $\Omega(m)$ updates between two recomputations from scratch.
The required space is $\bigO(n)$.
\end{theorem}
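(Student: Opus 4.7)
I would verify the four claims (cardinality, construction time, amortized maintenance, rebuild spacing) in sequence via a standard threshold argument, adapted to the rebuild schedule governed by $M$. First, for the bound on $|\High|$: any vertex currently in $\High$ has degree at least $\theta/2$, because it was promoted only upon reaching degree $3\theta/2$ and is demoted as soon as its degree drops below $\theta/2$. The handshake lemma then gives $|\High|\cdot \theta/2 \leq \sum_{v\in V}\Degree(v) = 2m$, hence $|\High|\leq 4m/\theta$. Because the rebuild rule keeps $m\in [M/4,M)$ throughout the partition's lifetime and $\theta = M^{\PartFrac}$, this yields $|\High| = \bigO(m/M^{\PartFrac}) = \bigO(m^{1-\PartFrac})$.

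Construction and per-update bookkeeping are direct. With adjacency lists providing $\Degree(v)$ in $\bigO(1)$, an initial pass classifies all $n$ vertices in $\bigO(n)$ time, and one bit per vertex records its membership, giving $\bigO(n)$ space. Each update adjusts the two endpoint degrees and checks the two thresholds in $\bigO(1)$ time, performing a move between $\High$ and $\Low$ whenever a crossing is detected. For the rebuild spacing, immediately after a rebuild we have $m=m_0$ and $M=2m_0$, so the next rebuild requires $m$ to reach $M=2m_0$ or fall below $M/4 = m_0/2$, each needing $\Omega(m_0)=\Omega(m)$ updates; this also amortizes the $\bigO(n)$ rebuild cost to $\bigO(1)$ per update (charging the contribution of isolated vertices to the updates that eventually make them incident).

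The heart of the argument is the amortized reclassification bound. A vertex just promoted to $\High$ has degree at least $3\theta/2$ and cannot be demoted before losing more than $\theta$ incident edges; symmetrically, a vertex just demoted to $\Low$ has degree below $\theta/2$ and cannot be re-promoted before gaining at least $\theta$ incident edges. Hence every reclassification of a fixed vertex is charged to $\Omega(\theta)$ distinct degree-change events on that vertex, and each edge update produces exactly two degree-change events in total. Consequently, over any $k$ updates there are at most $\bigO(k/\theta) = \bigO(k\cdot M^{-\PartFrac}) = \bigO(k\cdot m^{-\PartFrac})$ reclassifications, yielding the amortized $\bigO(m^{-\PartFrac})$ bound and, together with the $\bigO(1)$ cost per reclassification and the amortized rebuild contribution, the amortized $\bigO(1)$ update time. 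The main subtlety I expect is cleanly verifying that the hysteresis gap between the demotion threshold $\theta/2$ and the promotion threshold $3\theta/2$ is genuinely $\Theta(\theta)$ in both directions across all allowed values of $m$ during a partition's lifetime, so that each reclassification really can be charged to $\Omega(\theta)$ distinct incident updates; once that separation is secured, the remaining bounds are straightforward arithmetic.
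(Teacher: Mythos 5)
The paper gives no proof of this theorem; it is imported directly from Kara et al.~\cite{DBLP:journals/tods/KaraNNOZ20}, so there is no in-text argument to compare yours against. Evaluated on its own merits, your threshold-hysteresis argument is the standard one and its core is sound: every vertex currently in $\High$ has degree at least $\theta/2$, the handshake lemma gives $\Card{\High} \leq 4m/\theta$, and since $m \in [M/4, M)$ throughout a partition's lifetime we have $\theta = M^{\PartFrac} = \Theta(m^{\PartFrac})$, hence $\Card{\High} = \bigO(m^{1-\PartFrac})$. Likewise, the $\Theta(\theta)$ gap between the demotion threshold $\theta/2$, the initial classification threshold $\theta$, and the promotion threshold $3\theta/2$ means each reclassification consumes $\Omega(\theta)$ incident degree-change events, yielding $\bigO(1/\theta) = \bigO(m^{-\PartFrac})$ amortized reclassifications per update; your closing remark correctly flags that this gap is the crux, and it does hold in both directions from any starting state, including the fresh classification at a rebuild.

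Two details deserve tightening. First, your parenthetical for amortizing the rebuild cost --- ``charging the contribution of isolated vertices to the updates that eventually make them incident'' --- does not work as stated: a vertex may remain isolated for the entire update sequence, leaving no future update to absorb the charge. The clean fix is to maintain a list of non-isolated vertices and iterate only over those during a recomputation from scratch; since at most $2m$ such vertices exist, the rebuild costs $\bigO(m)$ and amortizes to $\bigO(1)$ over the $\Omega(m)$ updates between consecutive rebuilds. (The one-time $\bigO(n)$ is only for the initial construction, where all vertices must be classified.) Second, the stated $\bigO(m^{-\PartFrac})$ amortized changes per update must also absorb the changes triggered by a rebuild itself: when $\theta$ is reset, up to $\bigO(m^{1-\PartFrac})$ vertices may change class at once, but this again amortizes to $\bigO(m^{-\PartFrac})$ per update over the $\Omega(m)$ intervening operations. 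Your argument covers only the incremental reclassifications between rebuilds; folding in the rebuild-induced ones would complete the proof.
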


\begin{figure}[tb]
\centering
\includegraphics[width=\textwidth]{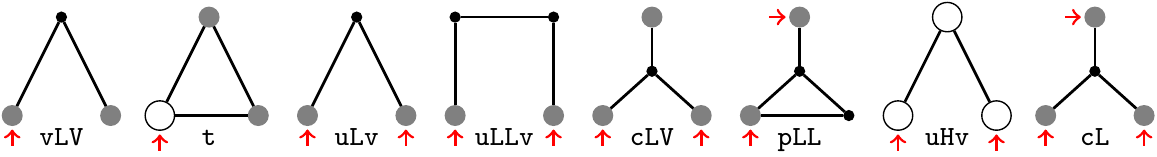}
\caption{Subgraph structures of $\DS_{\PartFrac}$. Small and filled vertices have low degree, large and empty vertices high degree,
medium-sized and shaded vertices can have either high or low degree. Anchors are marked by arrows.}%
\label{fig:substructures}
\end{figure}

\paragraph*{Data Structure $\DS_{\PartFrac}$}
We assume that the algorithm can access the degree of a vertex $v$ in constant
time and, for each pair of vertices $u, v$ determine in constant time whether
$\Edge{u}{v} \in E$.
In addition, we maintain the following data structure $\DS_{\PartFrac}$ or a subset of it,
if we are not only interested in counting some specific subgraphs on four vertices.
All subgraph structures that are part of $\DS_{\PartFrac}$ are non-induced.
See \Figure{substructures} for visualizations.
\begin{itemize}
\item an $\PartFrac$-partition $(\High, \Low)$ %
\item For each vertex $v \in V$: %
\Aux{vLV[$v$]}:
the number of $2$-paths $\Sequence{\Edge{v}{x}, \Edge{x}{y}}$ with $x \in \Low$
\item For each vertex $v \in \High$: %
\Aux{t[$v$]}:
the number of $3$-cycles $\Sequence{\Edge{v}{x}, \Edge{x}{y}, \Edge{y}{v}}$
\item For each distinct, unordered pair of vertices $u, v \in V$: %
\begin{itemize}
\item
\Aux{uLv[$u, v$]}:
the number of $2$-paths $\Sequence{\Edge{u}{x}, \Edge{x}{v}}$ with $x \in \Low$
\item %
\Aux{uLLv[$u, v$]}:
the number of length-$3$ paths $\Sequence{\Edge{u}{x}, \Edge{x}{y}, \Edge{y}{v}}$ with $x,y \in \Low$
\item %
\Aux{cLV[$u, v$]}:
the number of claws with a central vertex $x \in \Low$, $u \neq  x \neq v$
\item %
\Aux{pLL[$u, v$]}:
the number of paws with a central vertex $x \in \Low$, $u \neq  x \neq v$,
$u$ or $v$ at the other end of the arm, and fourth vertex $y \in \Low$
\end{itemize}
\item For each distinct, unordered pair of vertices $u, v \in \High$: %
\Aux{uHv[$u, v$]}:
the number of length-$2$ paths $\Sequence{\Edge{v}{x}, \Edge{x}{v}}$ with $x \in \High$
\item For each distinct, unordered triple of vertices $u, v, w \in V$: %
\Aux{cL[$u, v, w$]}:
the number of claws with a fourth vertex $x \in \Low$ at the center
\end{itemize}
For each auxiliary subgraph whose count is maintained by the data structure,
we call vertices of the set that acts as key \emph{anchors}, \eg,
$u$ and $v$ are anchors for the $2$-paths $\Sequence{\Edge{u}{x}, \Edge{x}{v}}$ with $x \in \Low$,
which are counted by \Aux{uLv[$u, v$]}.
We use hash tables with $\bigO(1)$ amortized access time
and only store non-zero counts.
Note that
\Aux{uLv}, \Aux{uLLv}, \Aux{pLL}, and \Aux{cL}
correspond to $s_2$, $s_3$, $s_5$, and $s_7$, respectively,
in the algorithm by Eppstein \etal{}~\cite{DBLP:journals/tcs/EppsteinGST12},
whereas
\Aux{vLV}, \Aux{t}, and \Aux{cLV}
are modifications
of $s_0$, $s_1$, and $s_4$, and
\Aux{uHv}
has no equivalent at all.
However, Eppstein \etal{}~\cite{DBLP:journals/tcs/EppsteinGST12}
use a different partitioning scheme, where there are at most $\bigO(h)$ vertices of
degree $\Omega(h)$, whereas in our case,
there are at most $\bigO(m^{1-\PartFrac})$ vertices of degree
$\Omega(m^{\PartFrac})$, which requires a different running time analysis also
for the common auxiliary counts.

We generally assume that in case of an edge insertion, the auxiliary counts are
updated immediately \emph{before} the counts of interest, and in reverse order
for an edge deletion.
The update of the $\PartFrac$-partition can either happen first or last (but
not in between).
We also assume that we start with an empty graph and all counts are initialized
to zero.
\paragraph*{Maintaining the Data Structure $\DS_{\PartFrac}$}
Given a dynamic graph $G = (V,E)$ and $\PartFrac \in \Range{0}{1}$, we show how
the components of the data structure $\DS_{\PartFrac}$ can be updated after an
edge insertion or deletion and if a vertex changes partition.
We start with a helper lemma:
\begin{lemma}\LemLabel{maintain-aux}
Let \Aux{aux} be an auxiliary subgraph count in $\DS_{\PartFrac}$ with
worst-case update time $\mathcal{E}_\Aux{aux}$ after an edge insertion or
deletion, worst-case update time $\mathcal{V}_\Aux{aux}$ after a vertex
changes partition, and $\mathcal{S}_\Aux{aux}$ space.
Then, $\DS_{\PartFrac}$ with \Aux{aux} can be maintained in amortized update time
$\bigO(\mathcal{E}_\Aux{aux} + \mathcal{V}_\Aux{aux}\cdot m^{-\PartFrac})$
with $\bigO(n+\mathcal{S}_\Aux{aux})$ space.
\end{lemma}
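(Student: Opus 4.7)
The plan is to invoke \Theorem{partition} for the partition itself and then add on top of it the amortized cost of keeping \Aux{aux} consistent with the current graph and partition. By \Theorem{partition}, the $\PartFrac$-partition can be maintained in amortized constant time per edge update using $\bigO(n)$ space, and the total number of vertex migrations between $\High$ and $\Low$ is amortized $\bigO(m^{-\PartFrac})$ per update. What remains is to charge the work \Aux{aux} incurs in response to two types of events: (i)~an edge insertion or deletion in $G$, and (ii)~a single vertex switching between $\High$ and $\Low$.

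For type~(i), each edge update triggers worst-case $\mathcal{E}_\Aux{aux}$ work on \Aux{aux} by hypothesis, giving an amortized contribution of $\bigO(\mathcal{E}_\Aux{aux})$ per update. For type~(ii), the amortized number of migrations per update is $\bigO(m^{-\PartFrac})$ and each migration costs $\mathcal{V}_\Aux{aux}$ worst-case, for a total amortized contribution of $\bigO(\mathcal{V}_\Aux{aux}\cdot m^{-\PartFrac})$. Summing these with the amortized $\bigO(1)$ partition-maintenance cost yields the update-time bound claimed by the lemma. The space bound is immediate: $\bigO(n)$ for the partition, plus $\mathcal{S}_\Aux{aux}$ for \Aux{aux} stored in hash tables that only retain non-zero entries.

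The main obstacle is to ensure that every reaction of \Aux{aux} is captured by exactly one of these two event types, so that the additive cost analysis is sound. The ordering discipline laid out earlier in \Section{algorithms} handles this: on an edge insertion the auxiliary counts are updated immediately before the counts of interest (and the reverse order is used on a deletion), and the partition update is scheduled either entirely before or entirely after the \Aux{aux} update rather than interleaving with it. Under this discipline each migration fires the $\mathcal{V}_\Aux{aux}$ routine exactly once and each edge update fires the $\mathcal{E}_\Aux{aux}$ routine exactly once, so the two contributions combine cleanly without double counting or missed reactions, and the stated bounds follow.
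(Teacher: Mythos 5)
Your proposal overlooks the periodic \emph{recomputation from scratch} of the $\PartFrac$-partition, which is the event that the paper's own proof handles most carefully. You account for exactly two event types---edge updates and single vertex migrations---but \Theorem{partition} also specifies that the partition is rebuilt from scratch after $\Theta(m)$ updates (when $m$ drifts outside $[\lfloor M/4\rfloor, M)$), and at that moment $\DS_{\PartFrac}$ must also be reconciled with the new partition. The paper does \emph{not} treat this as a batch of type-(ii) events: it instead resets \Aux{aux} to zero and re-inserts all $m$ edges one by one, for a total cost of $\bigO(m\cdot\mathcal{E}_\Aux{aux})$, which amortizes to $\bigO(\mathcal{E}_\Aux{aux})$ per update over the $\Omega(m)$ updates between recomputations. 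Your proof never mentions this rebuild, so as written the cost analysis is incomplete.

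Your implicit alternative---folding a recomputation into individual vertex migrations and charging $\mathcal{V}_\Aux{aux}$ per migration---could in principle also yield the claimed bound, since a recomputation can change the status of at most $\bigO(m^{1-\PartFrac})$ vertices (those with degree $\Omega(m^{\PartFrac})$ under either the old or new threshold), which again amortizes to $\bigO(m^{-\PartFrac})$ migrations per update. But you would then need two things you do not supply: (a)~an explicit bound on the number of vertices reclassified by a recomputation, since \Theorem{partition}'s ``$\bigO(m^{-\PartFrac})$ changes per update'' is used by the paper only for the \emph{gradual} hysteretic migrations, with recomputations deliberately handled by a separate mechanism; and (b)~a verification that the $\mathcal{V}_\Aux{aux}$ routines remain correct when executed sequentially against intermediate partition states during a batch reclassification (several of them, e.g.\ for \Aux{uLLv} and \Aux{uHv}, branch on whether a neighbor is currently in $\High$ or $\Low$). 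Until one of these two routes---the paper's rebuild-from-scratch, or a justified per-vertex accounting of the recomputation---is carried out, the argument does not close.
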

\begin{proof}
By \Theorem{partition}, the $\PartFrac$-partition can be maintained in
$\bigO(n)$ space and such that there are $\Omega(m)$ updates between two
recomputations of the partition from scratch.
After each such complete repartitioning, we set \Aux{aux} to zero and re-insert
all edges one-by-one.
The total recomputation time hence is $\bigO(m \cdot \mathcal{E}_\Aux{aux})$
and amortization over $\Omega(m)$ edge updates results in an amortized edge
update time of $\bigO(\mathcal{E}_\Aux{aux})$.
By \Theorem{partition}, there are amortized $\bigO(m^{-\PartFrac})$ vertices
changing partition per edge update, hence the claim follows.
\end{proof}

As the insertion and deletion operations are entirely symmetric and only differ
in whether a certain amount is added or subtracted from the stored counts, we
only give the details for edge insertions in the following.
Similarly, we only consider the case that a vertex $v$ changes from $\Low$ to
$\High$; the other case is symmetric.
Note that if $v$ is about to change partitions, $\Degree(v) \in
\Theta(m^\PartFrac)$.

\begin{lemma}\LemLabel{maintain-vLV}
$\DS_{\PartFrac}$ with \Aux{vLV} can be maintained in amortized
$\bigO(m^{\PartFrac})$ update time and $\bigO(n)$ space.
\end{lemma}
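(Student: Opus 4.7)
My plan is to apply \Lemma{maintain-aux}, so I will bound the three quantities $\mathcal{E}_\Aux{vLV}$, $\mathcal{V}_\Aux{vLV}$, and $\mathcal{S}_\Aux{vLV}$ separately. Since \Aux{vLV} stores one counter per vertex, $\mathcal{S}_\Aux{vLV} \in \bigO(n)$ is immediate.

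For the edge-update cost, suppose an edge $\Edge{u}{w}$ is inserted (deletion is entirely symmetric). A $2$-path $\Sequence{\Edge{a}{x}, \Edge{x}{b}}$ with $x \in \Low$ is newly created by this insertion if and only if $\Edge{u}{w}$ is one of its two edges, which forces $x \in \Set{u, w}$, i.e.\ at least one endpoint of the inserted edge serves as the low middle vertex. If $w \in \Low$, the new $2$-paths having $w$ as middle vertex are exactly $\Sequence{\Edge{u}{w},\Edge{w}{b}}$ for $b \in \NeighEx{u}(w)$; for each such $b$, I increment \Aux{vLV[$u$]} and \Aux{vLV[$b$]} by one. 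This iteration costs $\bigO(\Degree(w)) = \bigO(m^{\PartFrac})$ because $w\in\Low$. The symmetric case with $u \in \Low$ is handled analogously, and if neither endpoint is low, \Aux{vLV} needs no update. Thus $\mathcal{E}_\Aux{vLV} \in \bigO(m^{\PartFrac})$.

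For the partition-change cost, suppose a vertex $v$ moves from $\Low$ to $\High$; again the reverse direction is symmetric. The $2$-paths whose middle vertex is $v$ must now be removed from the counts, and these are the only counts that change. For each $a \in \Neigh(v)$, I decrement \Aux{vLV[$a$]} by $\DegreeEx{a}(v) = \Degree(v) - 1$. At the moment of reclassification $\Degree(v) \in \Theta(m^{\PartFrac})$, so the whole update costs $\bigO(\Degree(v)) = \bigO(m^{\PartFrac})$, giving $\mathcal{V}_\Aux{vLV} \in \bigO(m^{\PartFrac})$.

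Substituting these bounds into \Lemma{maintain-aux} yields amortized update time $\bigO(m^{\PartFrac} + m^{\PartFrac}\cdot m^{-\PartFrac}) = \bigO(m^{\PartFrac})$ and space $\bigO(n)$, proving the claim. The only mild subtlety to watch out for is avoiding double counting when both endpoints of the inserted edge are low: the ``$w$ is middle'' and ``$u$ is middle'' cases are handled in separate iterations, and the resulting $2$-paths are distinguished by their middle vertex, so each is counted in exactly the two \Aux{vLV} counters corresponding to its endpoints, consistent with the symmetric definition of the auxiliary structure.
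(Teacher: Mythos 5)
Your proof is correct and follows essentially the same approach as the paper's: you bound the edge-update cost by iterating over the low endpoint's neighborhood (the paper batches the increment to the opposite endpoint's counter as $\Degree(\cdot)-1$ rather than incrementing it inside the loop, but the algorithm and cost are identical), and you bound the partition-change cost and space in the same way before invoking Lemma~\ref{lem:maintain-aux}.
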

\begin{proof}
Let $\Edge{u}{v}$ be the newly inserted edge.
If $u \in \Low$ ($v \in \Low$), increase \Aux{vLV[$w$]} by one for each $w \in
\NeighEx{v}(u)$ ($w \in \NeighEx{u}(v)$) and increase \Aux{vLV[$v$]} by
$\Degree(u) - 1$ (\Aux{vLV[$u$]} by $\Degree(v) - 1$).
This takes $\bigO(m^{\PartFrac})$ time.

If a vertex $v \in \Low$ changes to $\High$, this affects all length-$2$ paths
where $v$ is the central, low-degree vertex.
For each neighbor $w \in \Neigh(v)$, decrease \Aux{vLV[$w$]} by $\Degree(v)-1$.
The running time is $\bigO(\Degree(v)) = \bigO(m^\PartFrac)$.

As each vertex may be adjacent to at least one low-degree vertex, the space
requirement is $\bigO(n)$.
By \Lemma{maintain-aux}, $\DS_{\PartFrac}$ with \Aux{vLV} can hence be
maintained in $\bigO(m^{\PartFrac})$ amortized update time and $\bigO(n)$ space.
\end{proof}
\begin{lemma}\LemLabel{maintain-uLv}
$\DS_{\PartFrac}$ with \Aux{uLv} can be maintained in amortized
$\bigO(m^{\PartFrac})$ time per update and $\bigO(\min(m^{1+\PartFrac},n^2))$ space.
\end{lemma}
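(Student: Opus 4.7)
The plan is to track, for each unordered pair $\{u,v\}$, the count \Aux{uLv[$u,v$]} by reacting to the two kinds of events hidden in an update: the actual edge insertion/deletion, and a vertex being reclassified between $\High$ and $\Low$. Having proved the bounds on the worst-case cost of each reaction, I would then simply invoke \Lemma{maintain-aux} to amortize the reclassification cost.

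\textbf{Edge insertion.} When $\Edge{u}{v}$ is inserted, the only newly created $2$-paths $\Sequence{\Edge{a}{x},\Edge{x}{b}}$ with low-degree midpoint $x$ are those in which $u$ itself, or $v$ itself, acts as the midpoint $x$. So if $u\in\Low$, I iterate over $a\in\NeighEx{v}(u)$ and do \Aux{uLv[$a,v$]}$\mathrel{+{=}}1$, and symmetrically if $v\in\Low$ using $\NeighEx{u}(v)$. Because a low-degree vertex has degree $<\tfrac32\theta=\bigO(m^\PartFrac)$, the cost is $\mathcal{E}_\Aux{uLv}=\bigO(m^\PartFrac)$ per edge update; deletion is identical with subtraction.

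\textbf{Reclassification.} When a vertex $v$ moves from $\Low$ to $\High$, precisely the $2$-paths centered at $v$ are no longer counted; since $v$ only just crossed the threshold, $\Degree(v)\in\Theta(m^\PartFrac)$. I enumerate all unordered pairs $\{a,b\}\subseteq\Neigh(v)$ and decrement \Aux{uLv[$a,b$]} by one. This costs $\mathcal{V}_\Aux{uLv}=\bigO(\Degree(v)^2)=\bigO(m^{2\PartFrac})$. The reverse direction ($\High\to\Low$) is entirely symmetric.

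\textbf{Assembly and space.} Feeding $\mathcal{E}_\Aux{uLv}=\bigO(m^\PartFrac)$ and $\mathcal{V}_\Aux{uLv}=\bigO(m^{2\PartFrac})$ into \Lemma{maintain-aux} yields an amortized update time of $\bigO(m^\PartFrac+m^{2\PartFrac}\cdot m^{-\PartFrac})=\bigO(m^\PartFrac)$. For the space bound I only need to account for nonzero entries, so \Aux{uLv[$a,b$]} is nonzero only if $a$ and $b$ share a common low-degree neighbor. The trivial bound on such pairs is $n^2$; alternatively, each low-degree vertex $x$ contributes at most $\binom{\Degree(x)}{2}=\bigO(m^{2\PartFrac})$ pairs, and $\sum_{x\in\Low}\Degree(x)^2\le m^\PartFrac\sum_{x\in\Low}\Degree(x)\le 2m\cdot m^\PartFrac=\bigO(m^{1+\PartFrac})$. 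Taking the minimum with $n^2$ and combining with the $\bigO(n)$ overhead from the $\PartFrac$-partition (dominated by the above) gives $\bigO(\min(m^{1+\PartFrac},n^2))$ space, completing the argument.

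The step most worth double-checking is the reclassification cost, since naively one might worry that many vertex moves between $\Low$ and $\High$ could arise within a single edge update. \Theorem{partition}, however, guarantees only amortized $\bigO(m^{-\PartFrac})$ moves per edge update, which is exactly what makes the $m^{2\PartFrac}\cdot m^{-\PartFrac}=m^\PartFrac$ arithmetic work out; the space bookkeeping then reduces to a straightforward Cauchy–Schwarz-style sum over low-degree vertices.
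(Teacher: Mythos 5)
Your proof is correct and takes essentially the same approach as the paper: identify that the edge update touches only the $\bigO(m^\PartFrac)$ 2-paths with $u$ or $v$ as the low-degree midpoint, handle reclassification by iterating over all $\bigO(\Degree(v)^2) = \bigO(m^{2\PartFrac})$ neighbor pairs, and invoke \Lemma{maintain-aux} to amortize the reclassification cost. Your space argument (bounding $\sum_{x\in\Low}\Degree(x)^2$) is a slightly more explicit version of the paper's observation that each edge incident to a low-degree vertex can participate in at most $\bigO(m^\PartFrac)$ length-2 paths, but it is the same counting.
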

\begin{proof}
Let $\Edge{u}{v}$ be the newly inserted edge.
If $u \in \Low$ ($v \in \Low$),
increment \Aux{uLv[$v, w$]} (\Aux{uLv[$u, w$]}) by
one for each $w \in \NeighEx{v}(u)$ ($w \in \NeighEx{u}(v)$).
This takes $\bigO(m^{\PartFrac})$ time.

If a vertex $v \in \Low$ changes to $\High$, this affects all length-$2$ paths
where $v$ is the central, low-degree vertex.
For each pair of distinct neighbors $x, y \in \Neigh(v)$, decrease \Aux{uLv[$x, y$]} by $1$.
The running time is $\bigO({\Degree(v)}^2) = \bigO(m^{2\PartFrac})$.

Each edge may be incident to at least one low-degree vertex $v$ and form
$\bigO(m^\PartFrac)$ length-$2$ paths with the other edges incident to $v$.
The space requirement hence is $\bigO(\min(m^{1+\PartFrac},n^2))$.
By \Lemma{maintain-aux}, $\DS_{\PartFrac}$ with \Aux{uLv} can hence be
maintained in $\bigO(m^{\PartFrac} + m^{2\PartFrac}m^{-\PartFrac}) =
\bigO(m^{\PartFrac})$ amortized update time and
$\bigO(\min(m^{1+\PartFrac},n^2))$ space.
\end{proof}
\begin{lemma}\LemLabel{maintain-t}
$\DS_{\PartFrac}$ with \Aux{t} can be maintained in amortized
$\bigO(m^{\max(1-\PartFrac,\PartFrac)})$ time per update and
$\bigO(\min(m^{1+\PartFrac},n^2))$ space.
\end{lemma}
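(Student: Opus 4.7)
The plan is to use \Aux{uLv} as a building block: by \Lemma{maintain-uLv} it is already maintainable in amortized $\bigO(m^{\PartFrac})$ time and $\bigO(\min(m^{1+\PartFrac},n^2))$ space, so it suffices to bound the additional cost of keeping \Aux{t} in sync. Concretely, I will argue $\bigO(m^{1-\PartFrac})$ extra worst-case work per edge update and $\bigO(m^{2\PartFrac})$ extra worst-case work per vertex migration. Invoking \Lemma{maintain-aux} amortizes the migration work by the $\bigO(m^{-\PartFrac})$ migration frequency, giving an overall amortized update time of $\bigO(m^{\PartFrac}+m^{1-\PartFrac}+m^{2\PartFrac}\cdot m^{-\PartFrac})=\bigO(m^{\max(1-\PartFrac,\PartFrac)})$; the space is dominated by \Aux{uLv}.

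\textbf{Edge update.} Fix an inserted edge $\Edge{u}{v}$ (deletions are symmetric, subtracting instead of adding). Since each undirected triangle through $v$ is traversed twice by the sequence in the definition of \Aux{t[$v$]}, a new triangle through $v$ contributes $2$ to \Aux{t[$v$]}. The only entries of \Aux{t} that change are those \Aux{t[$w$]} with $w\in\High$ such that $w$ is a common neighbor of $u$ and $v$ (i.e.\ $\{u,v,w\}$ is a new triangle), plus \Aux{t[$u$]} and \Aux{t[$v$]} when the respective endpoint is in \High. I enumerate common \emph{high-degree} neighbors by walking through \High{} and testing adjacency in constant time via hash tables, in $\bigO(\Card{\High})=\bigO(m^{1-\PartFrac})$ time; for every such $w$ I add $2$ to \Aux{t[$w$]}, as well as to \Aux{t[$u$]} and \Aux{t[$v$]} where applicable. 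Common \emph{low-degree} neighbors are handled implicitly: the value \Aux{uLv[$u,v$]} counts the 2-paths $\Sequence{\Edge{u}{x},\Edge{x}{v}}$ with $x\in\Low$ and is therefore exactly the number of common low-degree neighbors of $u$ and $v$. Crucially, inserting $\Edge{u}{v}$ does not change \Aux{uLv[$u,v$]}, since the middle vertex of such a 2-path must differ from $u$ and $v$. A single constant-time read of \Aux{uLv[$u,v$]} thus lets me add twice that value to \Aux{t[$u$]} and \Aux{t[$v$]} whenever the respective endpoint lies in \High.

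\textbf{Vertex migration.} When a vertex $v$ migrates from \Low to \High, we have $\Degree(v)\in\Theta(m^{\PartFrac})$ and \Aux{t[$v$]} must be initialized. I enumerate all unordered pairs $\{x,y\}\subseteq\Neigh(v)$ and test adjacency $\Edge{x}{y}\in E$ in constant time, paying $\bigO(\Degree(v)^2)=\bigO(m^{2\PartFrac})$ in the worst case; each adjacent pair contributes $2$ to \Aux{t[$v$]}. Migrations from \High to \Low simply drop \Aux{t[$v$]} in constant time, and because the triangle count is oblivious to the partition, no other \Aux{t[$w$]} needs to be adjusted.

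\textbf{Main obstacle.} The one nontrivial point is updating \Aux{t[$u$]} and \Aux{t[$v$]} on an edge insertion when both $u,v\in\High$: neither $\Neigh(u)$ nor $\Neigh(v)$ can be iterated within the $\bigO(m^{1-\PartFrac})$ budget, and there is no small vertex set that contains all their low-degree common neighbors. Piggybacking on \Aux{uLv[$u,v$]} as a precomputed counter for exactly that quantity circumvents the difficulty, which in turn is the reason why the space bound of \Aux{t} has to inherit the one of \Aux{uLv}.
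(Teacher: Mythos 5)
Your proof is correct and follows essentially the same route as the paper: enumerate the common high-degree neighbors of $u$ and $v$ by scanning $\High$, read off the common low-degree neighbors from \Aux{uLv[$u,v$]} (which, as you correctly note, is unaffected by the insertion of $\Edge{u}{v}$), pay $\bigO(\Degree(v)^2)$ to initialize \Aux{t[$v$]} on a $\Low$-to-$\High$ migration, and amortize via \Lemma{maintain-aux}. The one cosmetic difference is your factor-of-$2$ bookkeeping (counting both traversal directions of each triangle); the paper's code increments by $1$ per unordered triangle and \Lemma{query-triangle-vertex} then reads \Aux{t[$v$]} directly as ``the number of $3$-cycles containing $v$,'' so the intended convention is one per unordered triangle — harmless here since you are internally consistent and the asymptotics are unchanged.
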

\begin{proof}
Let $\Edge{u}{v}$ be the newly inserted edge.
For each $h \in \High$, increment $\Aux{t[h]}$ by one if $h$ is adjacent to
both $u$ and $v$.
If $u \in \High$ ($v \in \High$):
Increment $\Aux{t[u]}$ ($\Aux{t[v]}$) by one for each $h \in \High$ that is
adjacent to both $u$ and $v$, and
increment $\Aux{t[u]}$ ($\Aux{t[v]}$) by \Aux{uLv[$u, v$]}.
This takes $\bigO(\Card{\High}) = \bigO(m^{1-\PartFrac})$ time.

If a vertex $v \in \Low$ changes to $\High$,
then for each pair of distinct neighbors $x, y \in \Neigh(v)$ such that
$\Edge{x}{y}\in E$,
we increase \Aux{t[$v$]} by one.
Otherwise, if $v$ changes from $\High$ to $\Low$,
set \Aux{t[$v$]}${} := 0$.
The running time is $\bigO({\Degree(v)}^2) = \bigO(m^{2\PartFrac})$.

By \Lemma{maintain-uLv}, $\DS_{\PartFrac}$ with \Aux{uLv} can be maintained in
amortized $\bigO(m^{\PartFrac})$ update time and
$\bigO(\min(m^{1+\PartFrac},n^2))$ space.
As $\Card{\High} \in \bigO(m^{1-\PartFrac})$,
the space requirement for \Aux{t} is
$\bigO(\min(m^{1+\PartFrac},n^2))$.
By \Lemma{maintain-aux}, $\DS_{\PartFrac}$ with \Aux{t} can be maintained in
$\bigO(m^{1-\PartFrac} + m^{2\PartFrac}m^{-\PartFrac}) + \bigO(m^{\PartFrac}) =
\bigO(m^{\max(1-\PartFrac,\PartFrac)})$ amortized update time and
$\bigO(\min(m^{1+\PartFrac},n^2))$ space.
\end{proof}
\begin{lemma}\LemLabel{maintain-uLLv}
$\DS_{\PartFrac}$ with \Aux{uLLv} can be maintained in amortized
$\bigO(m^{2\PartFrac})$ time per update and
$\bigO(\min(m^{1+2\PartFrac},n^2))$ space.
\end{lemma}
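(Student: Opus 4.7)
The plan is to follow the same structure as the previous maintenance lemmas: bound the worst-case cost $\mathcal{E}_{\Aux{uLLv}}$ per edge insertion or deletion, the worst-case cost $\mathcal{V}_{\Aux{uLLv}}$ per vertex partition change, and the space, and then apply \Lemma{maintain-aux}. The key observation is that in any length-$3$ path $\Sequence{\Edge{a}{x}, \Edge{x}{y}, \Edge{y}{b}}$ with $x, y \in \Low$ counted by \Aux{uLLv[$a, b$]}, both interior vertices have degree $\bigO(m^{\PartFrac})$, which tightly controls the enumerations.

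For an insertion of $\Edge{u}{v}$, I would case-split on the role of the new edge in the affected paths. If $\Edge{u}{v}$ is the \emph{middle} edge, then both $u, v \in \Low$; iterate over all $a \in \NeighEx{v}(u)$ and $b \in \Neigh(v) \setminus \Set{u, a}$, check that $a \neq b$, and increment \Aux{uLLv[$a, b$]}. If $\Edge{u}{v}$ is the \emph{first} (or symmetrically, the \emph{last}) edge, then the low-degree endpoint plays the role of $x$: for each endpoint of $\Edge{u}{v}$ that lies in $\Low$, say $v$, iterate over $y \in \Neigh(v) \cap \Low$ with $y \neq u$ and over $b \in \Neigh(y) \setminus \Set{u, v, y}$ and increment \Aux{uLLv[$u, b$]}. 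Since in both cases the outer iteration runs over the neighborhood of a $\Low$-vertex and the inner iteration runs over the neighborhood of another $\Low$-vertex, each case costs $\bigO(m^{2\PartFrac})$.

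For a vertex $v$ moving from $\Low$ to $\High$, I would remove the paths with $v$ as an interior vertex. Since $x \neq y$, the vertex $v$ appears as $x$ or as $y$ in any affected path but not both; by symmetry it suffices to describe the case $v = x$. Iterate over $y \in \Neigh(v) \cap \Low$, then over $a \in \Neigh(v) \setminus \Set{y}$, then over $b \in \Neigh(y) \setminus \Set{v, a}$, and decrement \Aux{uLLv[$a, b$]}. Because $\Degree(v) \in \Theta(m^{\PartFrac})$ at the moment of reclassification and $y \in \Low$, the total cost is $\bigO(m^{3\PartFrac})$. The move from $\High$ to $\Low$ is symmetric: $v$ starts appearing in middle positions, and the new paths through $v$ are inserted by the same enumeration.

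Plugging these costs into \Lemma{maintain-aux} gives amortized update time $\bigO(m^{2\PartFrac} + m^{3\PartFrac} \cdot m^{-\PartFrac}) = \bigO(m^{2\PartFrac})$. For the space bound, each nonzero entry \Aux{uLLv[$a, b$]} is witnessed by at least one length-$3$ path with two low-degree interior vertices; summing over the middle edge $\Edge{x}{y}$ with $x, y \in \Low$ yields at most $\sum_{\Edge{x}{y} \in E,\, x,y \in \Low} \Degree(x)\Degree(y) = \bigO(m \cdot m^{2\PartFrac}) = \bigO(m^{1+2\PartFrac})$ witnesses, and trivially at most $\binom{n}{2}$ keys, giving $\bigO(\min(m^{1+2\PartFrac}, n^2))$ space. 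The main thing to be careful about is the bookkeeping of the ``four distinct vertices'' constraint in each enumeration and the symmetry between the two endpoints of the inserted edge and between the two interior positions $x, y$; otherwise the argument is a direct extension of the proofs of \Lemma{maintain-vLV} and \Lemma{maintain-uLv}.
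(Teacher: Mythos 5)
Your proof is correct and follows essentially the same structure as the paper's: the same case split on the role of the inserted edge (end edge handled when the interior endpoint is in $\Low$, middle edge when both endpoints are in $\Low$), the same $\bigO(m^{3\PartFrac})$ enumeration when a vertex changes partition, and the same counting-via-the-middle-edge argument for the $\bigO(\min(m^{1+2\PartFrac},n^2))$ space bound, all combined through \Lemma{maintain-aux} exactly as the paper does.
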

\begin{proof}
Let $\Edge{u}{v}$ be the newly inserted edge.
If $u \in \Low$ ($v \in \Low$),
we count the length-$3$ paths starting/ending with $\Edge{u}{v}$ as follows:
For each low-degree neighbor $w \in \NeighEx{v}(u) \cap \Low$ ($w \in \NeighEx{u}(v) \cap
\Low$), increment \Aux{uLLv[$v, x$]} (\Aux{uLLv[$u, x$]}) by one for each $x
\in \Neigh(w) \setminus \Set{u,v}$.
This takes $\bigO(m^{2\PartFrac})$ time.
If both $u \in \Low$ and $v \in \Low$, we additionally count the length-$3$ paths
having $\Edge{u}{v}$ as centerpiece in
$\bigO({\Degree(u)}^2) = \bigO(m^{2\PartFrac})$ time:
For each pair of distinct vertices $x, y$ with $x \in \NeighEx{v}(u)$, $y \in
\NeighEx{u}(v)$, increment \Aux{uLLv[$x, y$]} by one.

If a vertex $v \in \Low$ changes to $\High$,
we iterate over all pairs of distinct vertices $y, w$, where $y \in \NeighEx{v}(x)$
for some low-degree neighbor $x \in \Neigh(v) \cap \Low$ and $w \in
\NeighEx{x}(v)$, and decrease \Aux{uLLv[$w, y$]} by one.
As $v$ has $\bigO(m^{2\PartFrac})$ pairs of neighbors and each low-degree
neighbor has in turn $\bigO(m^\PartFrac)$ neighbors, the
running time is in $\bigO(m^{3\PartFrac})$.

Each edge may be incident to two low-degree vertices and hence form
$\bigO(m^{2\PartFrac})$ length-$3$ paths with the other edges incident to
the end vertices.
The space requirement hence is $\bigO(\min(m^{1+2\PartFrac},n^2))$.
By \Lemma{maintain-aux}, $\DS_{\PartFrac}$ with \Aux{uLLv} can be maintained in
$\bigO(m^{2\PartFrac} + m^{3\PartFrac}m^{-\PartFrac}) =
\bigO(m^{2\PartFrac})$ amortized update time and
$\bigO(\min(m^{1+2\PartFrac},n^2))$ space.
\end{proof}
\begin{lemma}\LemLabel{maintain-cLV}
$\DS_{\PartFrac}$ with \Aux{cLV} can be maintained in amortized
$\bigO(m^{2\PartFrac})$ time per update and $\bigO(n\min(n,m^{2\PartFrac}))$
space.
\end{lemma}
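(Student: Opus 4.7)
The plan mirrors the proof of \Lemma{maintain-uLLv}: bound the worst-case cost of an edge update, the worst-case cost of a single vertex changing partition, and the space needed to store \Aux{cLV}, then invoke \Lemma{maintain-aux}. Unfolding the definition, \Aux{cLV[$u,v$]} equals $\sum_{x \in \Low \cap \Neigh(u) \cap \Neigh(v)} (\Degree(x) - 2)$, because the low-degree center $x$ of any such claw must lie in the common neighborhood of $u$ and $v$ and the third leaf is any other neighbor of $x$ distinct from $u$ and $v$.

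For an edge insertion $\Edge{u}{v}$, a newly counted claw must use $\Edge{u}{v}$ as one of its three center--leaf edges and have its center in $\Low$. If $u \in \Low$, I would iterate over every pair of distinct vertices $a, b \in \NeighEx{v}(u)$ and, for each one, increment \Aux{cLV[$v,a$]}, \Aux{cLV[$v,b$]}, and \Aux{cLV[$a,b$]} by one, since the new claw has center $u$ and leaf set $\{v, a, b\}$; the case $v \in \Low$ is symmetric. The two iterations together cost $\bigO(\Degree(u)^2 + \Degree(v)^2) = \bigO(m^{2\PartFrac})$, and deletions apply the same formula with decrements.

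For a vertex $v$ promoted from $\Low$ to $\High$, every claw centered at $v$ must be removed from \Aux{cLV}. Enumerating all ordered triples of leaves of $v$ naively would cost $\bigO(\Degree(v)^3) = \bigO(m^{3\PartFrac})$, which is the main pitfall of the proof. The trick is to iterate only over unordered pairs $\{a,b\}$ of distinct neighbors of $v$ and, for each, decrement \Aux{cLV[$a,b$]} by $\Degree(v)-2$ at once, absorbing all valid choices of the third leaf in a single update; demotion from $\High$ to $\Low$ works symmetrically. This runs in $\bigO(\Degree(v)^2) = \bigO(m^{2\PartFrac})$.

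Finally, an entry \Aux{cLV[$u,v$]} can only be non-zero when $u$ and $v$ share a common low-degree neighbor, and each low-degree vertex $x$ witnesses at most $\binom{\Degree(x)}{2} = \bigO(m^{2\PartFrac})$ such pairs. Summing over all low-degree vertices gives at most $\bigO(n \cdot m^{2\PartFrac})$ non-zero entries, which together with the trivial bound $\bigO(n^2)$ yields space $\bigO(n\min(n, m^{2\PartFrac}))$. Plugging $\mathcal{E}_\Aux{cLV}, \mathcal{V}_\Aux{cLV} \in \bigO(m^{2\PartFrac})$ into \Lemma{maintain-aux} gives amortized update time $\bigO(m^{2\PartFrac} + m^{2\PartFrac} \cdot m^{-\PartFrac}) = \bigO(m^{2\PartFrac})$, which together with the space bound completes the proof.
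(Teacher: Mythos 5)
Your proof is correct and follows essentially the same approach as the paper's: bound edge-update and partition-change costs at $\bigO(m^{2\PartFrac})$ each, bound the nonzero entries via low-degree centers, and invoke \Lemma{maintain-aux}. The only cosmetic difference is in the edge-update step, where you increment all three anchor entries once per pair $\{a,b\} \subseteq \NeighEx{v}(u)$, whereas the paper handles the $(a,b)$ pairs in an $\bigO(m^{2\PartFrac})$ loop and batches the $(v,w)$ increments by $\Degree(u)-2$ in a separate $\bigO(m^{\PartFrac})$ loop; the net effect and asymptotic cost are identical.
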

\begin{proof}
Let $\Edge{u}{v}$ be the newly inserted edge.
If $u \in \Low$ ($v \in \Low$),
we update the number of claws where $u$ ($v$) is the central vertex as follows:
For each pair of distinct neighbors $x, y \in \NeighEx{v}(u)$
($x, y \in \NeighEx{u}(v)$), increment \Aux{cLV[$x, y$]} by
one.
This accommodates for the claws where $v$ ($u$) is not an anchor vertex
and takes %
$\bigO(m^{2\PartFrac})$ time.
For the other case,
increment \Aux{cLV[$v, w$]} (\Aux{cLV[$u, w$]}) by
$\Degree(u) - 2$ ($\Degree(v) - 2$)
for each $w \in \NeighEx{v}(u)$ ($w \in \NeighEx{u}(v)$)
in %
$\bigO(m^{\PartFrac})$ time.

If a vertex $v \in \Low$ changes to $\High$,
we decrease \Aux{cLV[$x, y$]} by $\Degree(v)-2$
for each pair of distinct neighbors $x, y \in \Neigh(v)$
in total $\bigO({\Degree(v)}^2) = \bigO(m^{2\PartFrac})$ time.

As each vertex may be adjacent to at least one low-degree vertex and we store
the count for all pairs, the space requirement is in $\bigO(n^2)$.
On the other hand, each low-degree vertex has at most $\bigO(m^{2\PartFrac})$
neighbors that can serve as anchors, which yields a space requirement of
$\bigO(nm^{2\PartFrac})$.
By \Lemma{maintain-aux}, $\DS_{\PartFrac}$ with \Aux{cLV} can be maintained in
$\bigO(m^{2\PartFrac} + m^{2\PartFrac}m^{-\PartFrac}) = \bigO(m^{2\PartFrac})$
amortized update time and $\bigO(n\min(n,m^{2\PartFrac}))$ space.
\end{proof}
\begin{lemma}\LemLabel{maintain-pLL}
$\DS_{\PartFrac}$ with \Aux{pLL} can be maintained in amortized
$\bigO(m^{2\PartFrac})$ time per update and $\bigO(\min(n^2,nm^{2\PartFrac}))$
space.
\end{lemma}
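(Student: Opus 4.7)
The plan mirrors the preceding lemmas: describe the update after an edge insertion, describe the recount when a vertex changes partition, bound the space, and then invoke \Lemma{maintain-aux}.

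Upon insertion of $\Edge{u}{v}$, I would case-split on the role this edge plays in any newly created paw counted by \Aux{pLL}: (I) the arm, (II) a triangle edge incident to the center, or (III) the triangle edge joining the two non-central triangle vertices. Since every paw counted by \Aux{pLL} has both its central and fourth vertex in $\Low$, cases (I) and (II) force one of $u, v$ (the center) to lie in $\Low$, and case (III) forces one of $u, v$ (the fourth vertex) to lie in $\Low$; hence if both $u, v \in \High$ nothing has to be done. Otherwise, \wilog{} $u \in \Low$, and in each case I enumerate pairs of neighbors of $u$, sometimes combined with an $\bigO(1)$ adjacency check into $v$ or a walk into the neighborhood of a second low-degree vertex. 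For each paw encountered I identify its arm endpoint $b$ and the two non-central triangle vertices $c, d$, and increment $\Aux{pLL}[b,d]$ whenever $c \in \Low$ and $\Aux{pLL}[b,c]$ whenever $d \in \Low$, so that paws with both $c, d \in \Low$ correctly contribute twice. Each enumeration is bounded by $\bigO({\Degree(u)}^2) = \bigO(m^{2\PartFrac})$, giving a worst-case edge-insertion cost of $\bigO(m^{2\PartFrac})$; deletions are symmetric.

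When $v \in \Low$ changes to $\High$, I would remove every paw in which $v$ played the role of center or fourth vertex. For paws centered at $v$, I iterate over pairs of distinct neighbors $c, d \in \Neigh(v)$ with $\Edge{c}{d} \in E$ and over arm endpoints $b \in \Neigh(v) \setminus \Set{c,d}$, decrementing entries exactly as in the insertion step. For paws where $v$ is the fourth vertex, I iterate over centers $x \in \Neigh(v) \cap \Low$, then over the other non-central triangle vertex $c \in \Neigh(v) \cap \Neigh(x) \setminus \Set{v}$, and over arm endpoints $b \in \Neigh(x) \setminus \Set{v,c}$; since $x \in \Low$ every inner loop walks through a neighborhood of size $\bigO(m^\PartFrac)$. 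Both enumerations are bounded by $\bigO({\Degree(v)}^3) = \bigO(m^{3\PartFrac})$; the reverse direction $\High \to \Low$ is analogous.

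For the space bound, every nonzero $\Aux{pLL}[u,v]$ originates from a paw whose low-degree center $x$ satisfies $u, v \in \Neigh(x)$, so the number of nonzero entries is at most $\sum_{x \in \Low} {\Degree(x)}^2 = \bigO(nm^{2\PartFrac})$, and trivially at most $\bigO(n^2)$. \Lemma{maintain-aux} with $\mathcal{E}_{\Aux{pLL}} = \bigO(m^{2\PartFrac})$ and $\mathcal{V}_{\Aux{pLL}} = \bigO(m^{3\PartFrac})$ then yields amortized update time $\bigO(m^{2\PartFrac} + m^{3\PartFrac} m^{-\PartFrac}) = \bigO(m^{2\PartFrac})$ with $\bigO(\min(n^2, nm^{2\PartFrac}))$ space. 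The main technical obstacle will be the bookkeeping of the three insertion cases so that a paw with both non-central triangle vertices in $\Low$ is counted exactly twice (once per valid anchor pair) and otherwise exactly once, without ever being counted across two different cases; the last point is automatic because a single paw contains only one arm edge and two triangle edges of distinct types incident to its center.
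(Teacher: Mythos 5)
Your proof plan is correct and takes essentially the same approach as the paper: case-split on the role of $\Edge{u}{v}$ in a paw, enumerate from the low-degree endpoint's neighborhood in $\bigO(m^{2\PartFrac})$ per edge update and $\bigO(m^{3\PartFrac})$ per partition change, bound the space by noting that both anchor vertices of any nonzero entry must lie in the neighborhood of the low-degree center, and close via \Lemma{maintain-aux}. The only cosmetic difference is that you merge the paper's two separate cases for a center-incident triangle edge (distinguished by whether the other endpoint is the anchor or the low-degree fourth vertex) into your single case (II), handling both anchor assignments in one pass.
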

\begin{proof}
Let $\Edge{u}{v}$ be the newly inserted edge.

If $u \in \Low$ ($v \in \Low$):
First, we update all paws where $u$ ($v$) is the central vertex and $v$ ($u$) is the
anchor vertex at the arm:
For each ordered pair of distinct neighbors
$x, y \in \NeighEx{v}(u)$
($x, y \in \NeighEx{u}(v)$) such that
$x \in \Low$ and $\Edge{x}{y} \in E$, increment \Aux{pLL[$v, y$]} (\Aux{pLL[$u, y$]}) by one.
This can be done in %
$\bigO(m^{2\PartFrac})$ time.
Second, we update all paws where $u$ ($v$) is the central vertex and $v$ ($u$) is
the anchor vertex in the triangle:
For each ordered pair of distinct neighbors
$x, y \in \NeighEx{v}(u)$
($x, y \in \NeighEx{u}(v)$) such that
$x \in \Low$ and $\Edge{x}{v} \in E$ ($\Edge{x}{u} \in E$),
increment \Aux{pLL[$v, y$]} (\Aux{pLL[$u, y$]}) by one.
This again can be done in %
$\bigO(m^{2\PartFrac})$ time.
Third, we update all paws where $u$ ($v$) is the non-anchor, non-central vertex
in the triangle and $v$ ($u$) is the anchor vertex in the triangle:
For each neighbor
$x \in \NeighEx{v}(u)$
($x \in \NeighEx{u}(v)$)
with $x \in \Low$ and $\Edge{v}{x} \in E$ ($\Edge{u}{x} \in E$), increment
\Aux{pLL[$v, y$]} (\Aux{pLL[$u, y$]}) by one for each $y \in \Neigh(x)
\setminus \Set{u, v}$.
The running time is in %
$\bigO(m^{2\PartFrac})$, as $u, x \in \Low$.

If both $u \in \Low$ and $v \in \Low$, we update all paws where
$\Edge{u}{v}$ connects the central vertex to the non-anchor vertex in the
triangle:
For each ordered pair of distinct neighbors $x, y \in \NeighEx{v}(u)$
with $\Edge{x}{v} \in E$
and each ordered pair of distinct neighbors $x, y \in \NeighEx{u}(v)$
with $\Edge{x}{u} \in E$, increment \Aux{pLL[$x, y$]} by one.
The running time is in $\bigO({\Degree(u)}^2 + {\Degree(v)}^2) =
\bigO(m^{2\PartFrac})$.

If a vertex $v \in \Low$ changes to $\High$:
For all paws where $v$ was the central vertex,
we iterate over all
unordered pairs of neighbors $y, z \in \Neigh(v)$
and every neighbor $x \in \Neigh(v) \cap \Low$
such that $y \neq x \neq z$.
If $\Edge{x}{y} \in E$, we decrease \Aux{pLL[$y, z$]} by one, and if
$\Edge{x}{z} \in E$, we also decrease \Aux{pLL[$y, z$]} by one.
For all paws where $v$ was the low-degree, non-central vertex in the triangle, we iterate
over all pairs of distinct neighbors $x, y \in \Neigh(v)$ such that
$\Edge{x}{y} \in E$ and $x \in \Low$, and, for each $z \in \Neigh(x) \setminus
\Set{v, y}$, decrease \Aux{pLL[$y, z$]} by one.
In this case, $\Edge{x}{z}$ forms the arm.
As $\Degree(x) \in \bigO(m^\PartFrac)$ in the second case,
the total running time is $\bigO({\Degree(v)}^3 + {\Degree(v)}^2 \cdot m^\PartFrac) =
\bigO(m^{3\PartFrac})$.

The argument for the space requirement is the same as for to \Aux{cLV}
and $\bigO(n\min(n,m^{2\PartFrac}))$ by \Lemma{maintain-cLV}.
By \Lemma{maintain-aux}, $\DS_{\PartFrac}$ with \Aux{pLL} can be maintained in
$\bigO(m^{2\PartFrac} + m^{3\PartFrac}m^{-\PartFrac}) = \bigO(m^{2\PartFrac})$
amortized update time and $\bigO(n\min(n,m^{2\PartFrac}))$ space.
\end{proof}
\begin{lemma}\LemLabel{maintain-uHv}
$\DS_{\PartFrac}$ with \Aux{uHv} can be maintained in amortized
$\bigO(m^{\max(1-\PartFrac,\PartFrac)})$ time per update and
$\bigO(n+\min(n^2,m^{2-2\PartFrac}))$ space.
\end{lemma}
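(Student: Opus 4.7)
The plan is to maintain \Aux{uHv} by handling edge updates and partition-change events separately, then invoke \Lemma{maintain-aux} to amortize the partition-change cost over $\Omega(m)$ operations. I will describe only edge insertions and low-to-high transitions, since deletions and high-to-low transitions are symmetric by subtracting instead of adding the same contributions.

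The key observation is that every length-$2$ path counted in \Aux{uHv} has \emph{all three} of its vertices in $\High$. Hence when inserting an edge $\Edge{u}{v}$, only the case $u, v \in \High$ can affect \Aux{uHv}: if at least one endpoint lies in $\Low$, then $\Edge{u}{v}$ cannot be part of any fully-high length-$2$ path. For $u, v \in \High$, the new edge can serve either as the first or the second edge of such a path. I would iterate over $\High$ and, for each $y \in \High \setminus \Set{u}$ with $\Edge{v}{y}\in E$, increment \Aux{uHv[$u,y$]} by one, and symmetrically increment \Aux{uHv[$x,v$]} for each $x \in \High \setminus \Set{v}$ with $\Edge{u}{x}\in E$. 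This costs $\bigO(\Card{\High}) = \bigO(m^{1-\PartFrac})$ per edge update.

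For a vertex $v$ changing from $\Low$ to $\High$, paths through $v$ must be added, and I would distinguish the two roles $v$ can play. As a \emph{middle} vertex: for every unordered pair $x,y \in \Neigh(v)\cap\High$, increment \Aux{uHv[$x,y$]} by one, costing $\bigO(\Degree(v)^2) = \bigO(m^{2\PartFrac})$ since $\Degree(v)\in\Theta(m^\PartFrac)$ at the moment of the transition. As an \emph{anchor}: for each $x \in \Neigh(v)\cap\High$ and each $y \in \Neigh(x)\cap\High$ with $y\neq v$, increment \Aux{uHv[$v,y$]} by one; the inner loop ranges over at most $\Card{\High}$ vertices, so the total cost is $\bigO(\Degree(v)\cdot\Card{\High}) = \bigO(m^\PartFrac \cdot m^{1-\PartFrac}) = \bigO(m)$. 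Each partition change therefore takes $\bigO(m + m^{2\PartFrac})$ time, and \Lemma{maintain-aux} gives amortized update time $\bigO(m^{1-\PartFrac} + (m + m^{2\PartFrac})\cdot m^{-\PartFrac}) = \bigO(m^{\max(1-\PartFrac,\PartFrac)})$.

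For space, only pairs in $\High\times\High$ ever appear as hash-table keys, giving at most $\bigO(\min(\Card{\High}^2, n^2)) = \bigO(\min(m^{2-2\PartFrac}, n^2))$ nonzero entries; together with the $\bigO(n)$ overhead of the $\PartFrac$-partition from \Theorem{partition} this matches the claimed $\bigO(n+\min(n^2,m^{2-2\PartFrac}))$ bound. The main obstacle I anticipate is keeping the anchor-role work at a partition change within budget: a naive iteration over all pairs of $v$'s neighbors, including low-degree ones, could cost $\bigO(\Degree(v)\cdot\MaxDegree)$ and blow up the amortized bound; restricting both the outer loop to $\Neigh(v)\cap\High$ and the inner loop to $\Neigh(x)\cap\High$, and then exploiting $\Card{\High}=\bigO(m^{1-\PartFrac})$, is essential to hit $\bigO(m)$ and hence the claimed amortized update time.
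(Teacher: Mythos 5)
Your proposal is correct and follows essentially the same approach as the paper's proof: same observation that only $u,v\in\High$ edge updates matter, same decomposition into ``$v$ as middle vertex'' ($\bigO(m^{2\PartFrac})$) versus ``$v$ as anchor'' ($\bigO(m)$) for partition changes, and the same invocation of \Lemma{maintain-aux} and space argument. One small implementation caveat worth making explicit: when you say the anchor-role inner loop ``ranges over $\Neigh(x)\cap\High$'' with cost bounded by $\Card{\High}$, to actually realize that cost you must iterate over $\High$ and test membership in $\Neigh(x)$ (as the paper does), rather than iterate over $\Neigh(x)$ and filter, since $\Degree(x)$ for a high-degree vertex $x$ can far exceed $\Card{\High}$. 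Also, the paper handles the $\High$-to-$\Low$ direction asymmetrically by simply clearing the entries \Aux{uHv[$v, h$]} in $\bigO(\Card{\High})$ time rather than recomputing-and-subtracting in $\bigO(m)$ time as your symmetric framing suggests; this optimization is harmless to omit since the amortized bound is unchanged.
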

\begin{proof}
Let $\Edge{u}{v}$ be the newly inserted edge.
If $u, v \in \High$, we iterate over all $h \in \High \setminus \Set{u,v}$.
If $h$ is adjacent to $u$ ($v$), increment \Aux{uHv[$h, v$]}
(\Aux{uHv[$u, h$]}), respectively, by one.
The running time is $\bigO(\Card{\High}) = \bigO(m^{1-\PartFrac})$.

If a vertex $v \in \Low$ changes to $\High$ (analogously vice-versa):
For each pair of distinct high-degree neighbors $x, y \in \Neigh(v) \cap \High$,
increment \Aux{uHv[$x, y$]} by one in total
$\bigO({\Degree(v)}^2) = \bigO(m^{2\PartFrac})$ time.
\emph{Only if} $v$ changes from $\Low$ to $\High$:
For every high-degree neighbor $w \in \Neigh(v) \cap \High$, we iterate over
all $h \in \High$ and increase \Aux{uHv[$v, h$]} by one if $\Edge{w}{h}\in E$
in total $\bigO(\Degree(v) \cdot \Card{\High}) = \bigO(m^{\PartFrac} \cdot
m^{1-\PartFrac}) = \bigO(m)$ time.
\emph{Only if} $v$ changes from $\High$ to $\Low$, we set \Aux{uHv[$v, h$]}${} := 0$
for each $h \in \High$ in total $\bigO(\Card{\High}) = \bigO(m^{1-\PartFrac})$
time.
The overall time is hence $\bigO(m^{\max(2\PartFrac,1)})$.

There are $\bigO(\min(n, m^{1-\PartFrac}))$ high-degree vertices,
which results in
$\bigO(\min(n^2,m^{2-2\PartFrac}))$ pairs of
anchor vertices.
$\DS_{\PartFrac}$ with \Aux{uHv} can hence be maintained in
$\bigO(m^{1-\PartFrac} + m^{\max(2\PartFrac,1)}m^{-\PartFrac}) =
\bigO(m^{\max(1-\PartFrac,\PartFrac)})$
amortized update time and $\bigO(n+\min(n^2,m^{2-2\PartFrac}))$ space
by \Lemma{maintain-aux}.
\end{proof}
\begin{lemma}\LemLabel{maintain-cL}
$\DS_{\PartFrac}$ with \Aux{cL} can be maintained in amortized
$\bigO(m^{2\PartFrac})$ time per update and
$\bigO(\min(n^3,nm^{3\PartFrac},m^{1+2\PartFrac}))$ space.
\end{lemma}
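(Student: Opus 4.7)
The plan mirrors the structure of the previous lemmas for $\DS_{\PartFrac}$. Recall that $\Aux{cL}[u,v,w]$ records, for each unordered triple $\{u,v,w\}$, the number of low-degree vertices $x$ that are adjacent to all three; equivalently, the number of claws with center $x\in\Low$ and leaves $\{u,v,w\}$. Only the center vertex of a claw is incident to the three claw-edges, so any change caused by an edge update $\Edge{u}{v}$ must involve a claw centered at $u$ or $v$; changes caused by a partition flip must involve claws centered at the flipping vertex.

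For an edge insertion $\Edge{u}{v}$, I would update $\Aux{cL}$ only via the low-degree endpoints. If $u\in\Low$, then every pair of distinct vertices $y,z\in\NeighEx{v}(u)$ together with the new edge produces a new claw centered at $u$ with leaves $\{v,y,z\}$, so I increment $\Aux{cL}[v,y,z]$ by one for each such pair. The symmetric action is taken if $v\in\Low$. When both endpoints are low, the two incremented entries have disjoint ``center-role'' (either $x=u$ or $x=v$), so no double counting occurs. The cost is $\bigO(\Degree(u)^2+\Degree(v)^2)=\bigO(m^{2\PartFrac})$, and deletions are handled by the symmetric decrement.

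If a vertex $v\in\Low$ is reclassified to $\High$, every claw centered at $v$ must be removed from $\Aux{cL}$: I iterate over all unordered triples $\{y_1,y_2,y_3\}\subseteq\Neigh(v)$ and decrement $\Aux{cL}[y_1,y_2,y_3]$ by one; the symmetric increment handles the opposite direction. Because $\Degree(v)\in\Theta(m^\PartFrac)$ at a flip, this costs $\bigO(\Degree(v)^3)=\bigO(m^{3\PartFrac})$ time. Plugging $\mathcal{E}_{\Aux{cL}}=\bigO(m^{2\PartFrac})$ and $\mathcal{V}_{\Aux{cL}}=\bigO(m^{3\PartFrac})$ into \Lemma{maintain-aux} yields amortized update time $\bigO(m^{2\PartFrac}+m^{3\PartFrac}\cdot m^{-\PartFrac})=\bigO(m^{2\PartFrac})$.

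The main subtlety is the space bound, which is really the reason the claim involves a three-way minimum. Only triples of the form ``three neighbors of a single low-degree vertex'' can receive a non-zero count, so the number of stored entries is at most $\sum_{x\in\Low}\binom{\Degree(x)}{3}$. I would bound this sum in three ways: trivially by $\bigO(n^3)$; by $n\cdot\bigO(m^{3\PartFrac})$ using $|\Low|\le n$ and $\Degree(x)\le\tfrac{3}{2}m^\PartFrac$ for $x\in\Low$; and by $\bigO(m^{1+2\PartFrac})$ using $\Degree(x)^3\le m^{2\PartFrac}\Degree(x)$ together with $\sum_{x\in V}\Degree(x)\le 2m$. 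Taking the minimum gives $\bigO(\min(n^3,nm^{3\PartFrac},m^{1+2\PartFrac}))$, and the hash-table storage from $\DS_{\PartFrac}$ ensures no overhead beyond non-zero entries, completing the proof.
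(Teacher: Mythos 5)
Your proof is correct and follows essentially the same route as the paper: identical edge-update step (for each low endpoint, increment over pairs of its other neighbors), identical partition-flip step (iterate over triples of neighbors at cost $\bigO(m^{3\PartFrac})$), and the same invocation of \Lemma{maintain-aux} for the amortization. Your space argument is actually spelled out in more detail than the paper's one-line claim, but it establishes the same three-way minimum by the same underlying counting.
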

\begin{proof}
Let $\Edge{u}{v}$ be the newly inserted edge.
If $u \in \Low$ ($v \in \Low$),
increment \Aux{cL[$v, x, y$]} (\Aux{cL[$u, x, y$]}) by one for each pair of
distinct neighbors $x,y \in \NeighEx{v}(u)$ ($x,y \in
\NeighEx{u}(v)$).
This takes $\bigO(m^{2\PartFrac})$ time.

If a vertex $v \in \Low$ changes to $\High$:
For each triple of distinct neighbors $x, y, z \in \Neigh(v)$, we decrease
\Aux{cL[$x, y, z$]} by one in total $\bigO({\Degree(v)}^3) =
\bigO(m^{3\PartFrac})$ time.

As each edge may be incident to a low-degree vertex $v$,
the number of triples with non-zero count for \Aux{cL} is in
$\bigO(\min(n^3,nm^{3\PartFrac},m^{1+2\PartFrac}))$.
By \Lemma{maintain-aux}, $\DS_{\PartFrac}$ with \Aux{cL} can be maintained in
$\bigO(m^{2\PartFrac} + m^{3\PartFrac-\PartFrac})
= \bigO(m^{2\PartFrac})$
amortized update time and $\bigO(\min(n^3,m^{1+2\PartFrac}))$ space.
\end{proof}
\paragraph*{Non-Induced Subgraph Counts}
We are now ready to prove \Theorem{count-noninduced} and show
for each connected subgraph on four vertices how to count it
using the data structure $\DS_{\PartFrac}$.

\begin{figure}[tb]
\centering
\includegraphics[width=\textwidth]{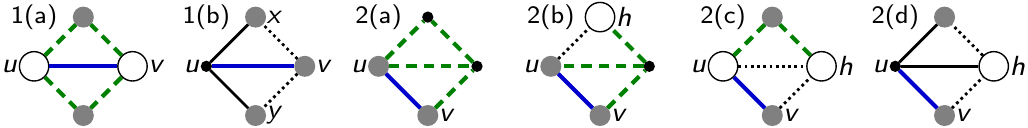}
\caption{Counting the number of diamonds that contain an edge $\Edge{u}{v}$.
Green, dashed edges belong to paths that are considered via auxiliary counts,
whereas dotted edges are edges whose presence is looked up by the algorithm.
As before, small and filled vertices have low degree, large and empty vertices high degree,
medium-sized and shaded vertices can have either high or low degree.
}%
\label{fig:count-diamonds}
\end{figure}

\begin{lemma}\LemLabel{count-diamond-edge}
Let $G = (V, E)$ be a dynamic graph,
$\PartFrac \in \Range{0}{1}$, and
$\Graphlet$ be the diamond~\PictoDiamond{}.
We can query
$\Count(G, \Graphlet, \Edge{u}{v})$
for an arbitrary edge $\Edge{u}{v} \in E$ in
$\bigO(\min(m^{\max(1-\PartFrac,2\PartFrac)},n^2))$ worst-case time
if we maintain the data structure $\DS_{\PartFrac}$ with auxiliary counts
\Aux{uLv}, \Aux{pLL}, \Aux{uHv}, and \Aux{cL}.
\end{lemma}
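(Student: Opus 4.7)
The plan is to enumerate the diamonds containing the queried edge $\{u,v\}$ by a case split on the role of $\{u,v\}$ in the diamond (chord or side) and on the $(\High,\Low)$-classes of the two remaining diamond vertices $x,y$. Each subcase will be reduced to a combination of constant-time lookups into the maintained counts \Aux{uLv}, \Aux{pLL}, \Aux{uHv}, and \Aux{cL}, together with at most one or two passes over $\High$ (of size $\bigO(m^{1-\PartFrac})$) and, when $u$ or $v$ is low, a pass over its neighbors (of size $\bigO(m^\PartFrac)$) or over neighbor pairs (of size $\bigO(m^{2\PartFrac})$).

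The key identity links \Aux{pLL[$u,v$]} to diamonds. Under the assumption $\{u,v\}\in E$, each paw counted by \Aux{pLL[$u,v$]} completes with $\{u,v\}$ to either a side-diamond on $\{u,v,x,y\}$ with $x,y\in\Low$ or a $K_4$ on the same four vertices. A small counting argument shows that each side-$\Low\Low$ diamond yields exactly one such paw, while each $\Low\Low$ $K_4$ yields exactly four (one per choice of central vertex in $\{x,y\}$ and arm end in $\{u,v\}$), so the side-$\Low\Low$ diamond count equals \Aux{pLL[$u,v$]}$-4\,K_{4,\mathrm{LL}}(u,v)$. The remaining configurations---chord diamonds and side diamonds with at least one of $x,y$ high---are enumerated by iterating over $\High$: for a high ``second hub'' $h$, the valid low partners are \Aux{uLv[$u,h$]}$-$\Aux{cL[$u,v,h$]} up to a constant adjustment; for a high ``second non-hub'' $o$, \Aux{cL[$u,v,o$]} yields the low second-hubs directly; and the chord case reduces to $\binom{T(u,v)}{2}-K_4(u,v)$, where the common-neighbor total $T(u,v)$ comes from \Aux{uLv[$u,v$]} plus a single pass over $\High$, and $K_4(u,v)$ is accumulated as a byproduct of the same passes.

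The main obstacle is the sub-case in which \emph{both} remaining vertices $x,y$ are high, since a naive enumeration of pairs in $\High$ costs $\Theta(|\High|^2)$, which violates the query budget when $\PartFrac<\tfrac12$. The saving is to rewrite the double-high contribution as a single-index sum over $h\in\High$ in which the required high-neighbor counts of $u,h$ are supplied in $\bigO(1)$ by \Aux{uHv} (whose key space is exactly the pairs of high-degree vertices); in the complementary situation that $u$ or $v$ is low, we instead pay $\bigO(m^{2\PartFrac})$ for a pass over the $\bigO(m^\PartFrac)$ neighbors of that low endpoint. Summing the contributions of all cases yields the claimed query time $\bigO(m^{\max(1-\PartFrac,2\PartFrac)})$; the alternative $\bigO(n^2)$ cap is immediate because the total number of candidate unordered pairs $\{x,y\}$ is at most $\binom{n-2}{2}$.
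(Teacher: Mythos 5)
Your overall decomposition (chord vs.\ side, partition classes of the two remaining vertices) matches the paper's, and the observation that \Aux{uHv} is what avoids a quadratic pass over $\High$ in the double-high side case is the right insight. However, your central counting identity is wrong, and the error propagates through the sketch.

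You assert that the side-$\Low\Low$ diamond count equals $\Aux{pLL[$u,v$]}-4\,K_{4,\mathrm{LL}}(u,v)$, reasoning that a paw in \Aux{pLL[$u,v$]} ``completes'' to either a diamond or a $K_4$. But $\Count(G,\Graphlet,\Edge{u}{v})$ is a \emph{non-induced} count: it tallies $5$-edge subgraphs isomorphic to the diamond, and a $K_4$ on $\{u,v,x,y\}$ \emph{contains} six such subgraphs, four of which use $\{u,v\}$ as a side. Each of those four side-diamonds corresponds to exactly one paw in \Aux{pLL[$u,v$]} (center the diamond's second degree-$3$ vertex, arm end the degree-$2$ vertex on $\{u,v\}$), and these are precisely the four paws you attribute to the $K_4$. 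So the map paw $\mapsto$ side-$\Low\Low$ diamond is a bijection even when the ambient $4$-set is complete, and $\Aux{pLL[$u,v$]}$ already equals the target count with no $K_4$ correction; subtracting $4\,K_{4,\mathrm{LL}}(u,v)$ would compute an induced-style quantity instead. The same confusion reappears in your chord case: each unordered pair of common neighbors of $u,v$ yields one chord-diamond subgraph regardless of whether those two neighbors are adjacent, so the chord count is exactly $\binom{T(u,v)}{2}$, not $\binom{T(u,v)}{2}-K_4(u,v)$. Since your sketch never adds the $K_4$-related diamonds back anywhere, the final total would under-count.

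The paper's proof sidesteps this entirely: case 2(a) equals \Aux{pLL[$u,v$]} outright, case 1(a) is $\binom{p}{2}$ with $p=\Aux{uLv[$u,v$]}+\Aux{uHv[$u,v$]}$, and cases 2(b)--(d) iterate over $\High$ (or over a low endpoint's neighbors) using \Aux{cL}, \Aux{uLv}, \Aux{uHv} with a $-1$ correction only for the spurious $2$-path through $v$ or $u$ itself; $K_4$'s never need special treatment. To repair your proof, drop the $K_4$ subtractions and make the ``up to a constant adjustment'' and ``accumulated as a byproduct'' steps precise, since as written the handling of mixed-degree and double-high side configurations is not fully specified.
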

\begin{proof}
Edge $\Edge{u}{v}$ can either be the chord of the diamond or be part of the
$4$-cycle.
See \Figure{count-diamonds} for an illustration.

For the first case, where $\Edge{u}{v}$ is the chord:
(a)~If $u, v \in \High$, we can obtain the number of length-$2$
paths $p$ between $u$ and $v$ as $p = \Aux{uLv[$u, v$]} + \Aux{uHv[$u, v$]}$.
As each pair of length-$2$ paths forms a diamond with $\Edge{u}{v}$, the
total number of diamonds is $\binom{p}{2}$.
(b)~Otherwise, $\{u, v\} \cap \Low \neq \emptyset$.
\Wilog, $u \in \Low$.
We then iterate over all distinct, unordered pairs of neighbors $x, y \in
\NeighEx{v}(u)$ in $\bigO({\Degree(u)}^2) = \bigO(\min(m^{2\PartFrac},n^2))$ time.
For each such pair with $\Edge{x}{v}, \Edge{y}{v} \in E$, we count one diamond.

For the second case, where $\Edge{u}{v}$ is part of the cycle, we distinguish
between the degrees of the other two vertices.
(a)~The number of diamonds where the other two vertices have low degree
is given by \Aux{pLL[$u, v$]}.
Note that either $u$ or $v$ is incident to the chord.
(b)~The number of diamonds where the other vertex incident to the
chord has low degree and the fourth vertex has high degree
can be obtained by iterating over all $h \in \High \setminus \Set{u, v}$
in $\bigO(\Card{\High}) = \bigO(\min(m^{1-\PartFrac}, n))$ time.
If either $\Edge{h}{u} \in E$ or $\Edge{h}{v} \in E$, we have \Aux{cL[$u, v,
h$]} more diamonds.
If both $\Edge{h}{u}, \Edge{h}{v}\in E$, we add $2\Aux{cL[$u, v, h$]}$ to the
number of diamonds.
(c, d)~%
The number of diamonds where
the other vertex incident to the chord has high degree
can be obtained as follows:
(c)~%
If $u \in \High$ ($v \in \High$),
the number of diamonds where the chord is
incident to $u$ ($v$)
can be obtained
by iterating over all $h \in \High \setminus \Set{u, v}$
in $\bigO(\Card{\High}) = \bigO(\min(m^{1-\PartFrac},n))$ time.
For each such vertex $h$, we check whether $\Edge{u}{h}, \Edge{v}{h} \in E$ and
add
$\Aux{uLv[$u, h$]} + \Aux{uHv[$u, h$]} - 1$
($\Aux{uLv[$v, h$]} + \Aux{uHv[$v, h$]} - 1$) to the count.
The correction by $1$ is necessary because
the auxiliary counts also contain the path $\Sequence{\Edge{u}{v}, \Edge{v}{h}}$
($\Sequence{\Edge{h}{u}, \Edge{u}{v}}$).
(d)~If $u \in \Low$ ($v \in \Low$),
we iterate over all high-degree neighbors
$h \in \NeighEx{v}(u) \cap \High$
($h \in \NeighEx{u}(v) \cap \High$)
and in each case over all
$x \in \Neigh(u) \setminus \{v, h\}$
($x \in \Neigh(v) \setminus \{u, h\}$)
in total $\bigO(\min(m^{2\PartFrac},n^2))$ time and count one diamond each if
$\{h, x\}, \{h, v\} \in E$
($\{h, x\}, \{h,u\} \in E$).
\end{proof}
\begin{lemma}\LemLabel{count-diamond}
Let $G = (V, E)$ be a dynamic graph and $\Graphlet$ be the
diamond~\PictoDiamond{}.
We can maintain $\Count(G, \Graphlet)$ in amortized $\bigO(m^{2/3})$ update
time and $\bigO(\min(nm,m^\frac{5}{3}))$ space.
We can query $\Count(G, \Graphlet, e)$ for an arbitrary edge $e \in E$
in worst-case $\bigO(m^{2/3})$ time.
\end{lemma}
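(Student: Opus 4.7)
The plan is to fix $\PartFrac = 1/3$ and maintain the data structure $\DS_{1/3}$ restricted to the auxiliary counts that appear in \Lemma{count-diamond-edge}, namely \Aux{uLv}, \Aux{pLL}, \Aux{uHv}, and \Aux{cL}. Alongside these I would store a single integer $D$ holding the current value of $\Count(G, \Graphlet)$. Since a single edge update $\Edge{u}{v}$ changes $D$ by exactly the number of diamonds containing $\Edge{u}{v}$ in the appropriate version of the graph, \Lemma{count-diamond-edge} applied at $\PartFrac = 1/3$ gives a procedure that computes this change in $\bigO(\min(m^{\max(1-\PartFrac,2\PartFrac)}, n^2)) = \bigO(m^{2/3})$ time from the maintained data structure.

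For the update bound I would plug $\PartFrac = 1/3$ into the four maintenance lemmas already established: \Lemma{maintain-uLv} yields amortized $\bigO(m^{1/3})$, \Lemma{maintain-pLL} and \Lemma{maintain-cL} yield $\bigO(m^{2/3})$, and \Lemma{maintain-uHv} yields $\bigO(m^{\max(2/3,1/3)}) = \bigO(m^{2/3})$. Combined with the $\bigO(m^{2/3})$ per-update recomputation of the diamond count via \Lemma{count-diamond-edge}, the overall amortized update time is $\bigO(m^{2/3})$. For the space bound, \Aux{cL} dominates: \Lemma{maintain-cL} gives $\bigO(\min(n^3, nm^{3\PartFrac}, m^{1+2\PartFrac})) = \bigO(\min(n^3, nm, m^{5/3}))$, which simplifies to $\bigO(\min(nm, m^{5/3}))$ using $nm \le n^3$; the three remaining auxiliary counts need at most this much space at $\PartFrac = 1/3$.

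Queries of the form $\Count(G, \Graphlet, e)$ for an arbitrary edge $e \in E$ are handled directly by invoking \Lemma{count-diamond-edge} on $e$ without touching $D$, giving worst-case $\bigO(m^{2/3})$ time.

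The only subtle point, and the place where care is needed, is the ordering of operations at an update. On an insertion of $\Edge{u}{v}$ I would first apply the updates to the four auxiliary counts of $\DS_{1/3}$, then invoke the procedure of \Lemma{count-diamond-edge} on $\Edge{u}{v}$ to obtain the number $\Delta$ of diamonds that contain $\Edge{u}{v}$ in the new graph, and finally set $D := D + \Delta$. On a deletion I would reverse this order: first compute $\Delta$ on the current graph (before deletion), then set $D := D - \Delta$, and only afterwards update the auxiliary counts. This matches the convention stated in the description of $\DS_{\PartFrac}$ and ensures that the query of \Lemma{count-diamond-edge} always operates on the state that reflects exactly the set of diamonds responsible for the change in $D$.
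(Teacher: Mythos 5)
Your proof is correct and follows essentially the same route as the paper: fix $\PartFrac = 1/3$, maintain $\DS_{1/3}$ with \Aux{uLv}, \Aux{pLL}, \Aux{uHv}, and \Aux{cL}, invoke \Lemma{count-diamond-edge} per update, and combine the four maintenance lemmas to get $\bigO(m^{2/3})$ amortized update time with space dominated by \Aux{cL}. The extra paragraph on the ordering of operations at an update is a reasonable clarification of the convention the paper states once when introducing $\DS_{\PartFrac}$, but it does not constitute a different approach.
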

\begin{proof}
After an edge $\Edge{u}{v}$ was inserted or before an edge $\Edge{u}{v}$ is removed,
the number of diamonds containing it can be obtained in
$\bigO(\min(m^{\max(1-\PartFrac,2\PartFrac)},n^2))$ time
worst-case time
by \Lemma{count-diamond-edge} if $\DS_{\PartFrac}$ with auxiliary counts
\Aux{uLv}, \Aux{pLL}, \Aux{uHv}, and \Aux{cL}
is maintained.
By \Lemma{maintain-uLv},
\Lemma{maintain-pLL},
\Lemma{maintain-uHv},
and \Lemma{maintain-cL},
this can be done
in amortized $\bigO(m^{\PartFrac} +
m^{\max(1-\PartFrac,\PartFrac)} + m^{2\PartFrac}) =
\bigO(m^{\max(1-\PartFrac,2\PartFrac)})$ time and
$\bigO(\min(n^3,\max(n^2,nm^{3\PartFrac},m^{2-2\PartFrac},m^{1+2\PartFrac})))$ space.
Together with the cost for the query, this yields a total amortized update time
of $\bigO(m^{\max(1-\PartFrac,2\PartFrac)}) = \bigO(m^\frac{2}{3})$
for $\PartFrac = \frac{1}{3}$
and $\bigO(\min(nm, m^\frac{5}{3}))$ space.
By \Lemma{count-diamond-edge}, the worst-case time to query $\Count(G,
\Graphlet, e)$ for an arbitrary edge $e \in E$ then is $\bigO(m^{2/3})$.
\end{proof}

\paragraph*{Queries with Vertices and Edges and Non-Induced $s$-Subgraph Counts}
With similar techniques, we can count non-induced
triangles containing a specified vertex or edge as well as
maintain $s$-subgraph counts for patterns with up to four vertices.
\begin{theorem}\ThmLabel{count-triangles-extended}
Let $G = (V, E)$ be a dynamic graph, $\Graphlet$ be the $3$-cycle~\PictoThreeCycle{},
and $\PartFrac \in \Range{0}{1}$.
We can query $\Count(G, \Graphlet, a)$
for an arbitrary vertex or edge $a$ in
\begin{enumerate}[(i)]
\item
worst-case
$\bigO(\min(m^{2\PartFrac},n^2))$
time
with
$\bigO(m^{\max(\PartFrac,1-\PartFrac)})$
amortized update time and\\
$\bigO(\min(n^2, m^{1+\PartFrac}))$
space if $a \in V$,
\item
worst-case time
$\bigO(\min(m^{1-\PartFrac},n))$
with
$\bigO(m^{\PartFrac})$
amortized update time and\\
$\bigO(\min(n^2, m^{1+\PartFrac}))$ space
if $a \in E$.
\end{enumerate}
\end{theorem}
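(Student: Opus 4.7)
The plan is to reduce each query to counting common neighbors of the queried object, and then dispatch based on whether the relevant vertices are high- or low-degree. For the edge query on $\Edge{u}{v}$, observe that $\Count(G, \Graphlet, \Edge{u}{v})$ equals $\Card{\Neigh(u) \cap \Neigh(v)}$, which we split as the common low-degree neighbors plus the common high-degree neighbors. The first quantity is precisely $\Aux{uLv[$u,v$]}$, which is available in $\bigO(1)$ time; the second we compute by iterating over $\High$ and checking adjacency to both $u$ and $v$ in constant time per vertex, for a cost of $\bigO(\Card{\High}) = \bigO(\min(m^{1-\PartFrac}, n))$.

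For the vertex query on $v$, split cases by $v$'s partition. If $v \in \High$, then $\Aux{t[$v$]}$ already stores the number of triangles through $v$, yielding $\bigO(1)$ query time. If $v \in \Low$, we enumerate all unordered pairs $\{x,y\} \subseteq \Neigh(v)$ and count those with $\Edge{x}{y} \in E$ using $\bigO(1)$ adjacency lookups. Since $\Degree(v) < \frac{3}{2} m^{\PartFrac}$, this costs $\bigO({\Degree(v)}^2) = \bigO(m^{2\PartFrac})$; the $\bigO(n^2)$ bound follows trivially from $\Degree(v) \le n-1$.

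To meet the stated update and space bounds, we maintain $\DS_{\PartFrac}$ restricted to the relevant auxiliary counts: for (ii), only $\Aux{uLv}$, which by \Lemma{maintain-uLv} gives amortized $\bigO(m^{\PartFrac})$ update time and $\bigO(\min(m^{1+\PartFrac}, n^2))$ space; for (i), additionally $\Aux{t}$, which by \Lemma{maintain-t} costs amortized $\bigO(m^{\max(\PartFrac, 1-\PartFrac)})$ update time within the same space bound (since $\Aux{t}$ requires only $\bigO(\min(n, m^{1-\PartFrac}))$ entries, dominated by $\Aux{uLv}$'s $\bigO(\min(m^{1+\PartFrac}, n^2))$).

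The arguments are mostly a careful assembly of existing tools; the only mildly subtle step is verifying that in the low-degree vertex case the naive pair-enumeration already fits under the target bound $\bigO(\min(m^{2\PartFrac}, n^2))$ without needing any extra precomputed structure, which hinges on the partition's degree threshold guaranteeing $\Degree(v) = \bigO(m^{\PartFrac})$ throughout.
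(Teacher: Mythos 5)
Your proposal is correct and is essentially identical to the paper's own proof: the paper proves the edge query in \Lemma{count-triangle-edge} by splitting the third vertex into low-degree (read off $\Aux{uLv[$u,v$]}$) and high-degree (iterate over $\High$), and the vertex query in \Lemma{query-triangle-vertex} by reading $\Aux{t[$v$]}$ if $v\in\High$ and enumerating neighbor pairs if $v\in\Low$, then combines these with \Lemma{maintain-uLv} and \Lemma{maintain-t} exactly as you do. No substantive differences to report.
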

\begin{corollary}\CorLabel{count-triangles}
Let $G = (V, E)$ be a dynamic graph and $\Graphlet$ the
$3$-cycle~\PictoThreeCycle{}.
We can maintain $\Count(G, \Graphlet)$
with an amortized update time of $\bigO(\sqrt{m})$
and $\bigO(\min(n^2, m^{1.5}))$ space
and query
$\Count(G, \Graphlet, e)$ for $e \in E$
arbitrary in worst-case $\bigO(\sqrt{m})$ time.
We can query
$\Count(G, \Graphlet, v)$
for arbitrary $v \in V$
in worst-case $\bigO(m^{2/3})$ time
with an amortized update time of $\bigO(m^{2/3})$
and $\bigO(\min(n^2, m^{4/3}))$ space.
\end{corollary}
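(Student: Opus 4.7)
The plan is to derive both statements of the corollary by instantiating \Theorem{count-triangles-extended} with appropriate values of $\PartFrac$ and, for the global count, exploiting the fact that the change in $\Count(G, \Graphlet)$ caused by an edge update equals the number of triangles containing the updated edge.

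For the first statement, I would set $\PartFrac = 1/2$ in part (ii) of \Theorem{count-triangles-extended}. This directly yields a worst-case query time of $\bigO(\min(m^{1/2}, n)) = \bigO(\sqrt m)$ for an arbitrary edge $e \in E$, amortized update time $\bigO(\sqrt m)$, and space $\bigO(\min(n^2, m^{3/2})) = \bigO(\min(n^2, m^{1.5}))$. To additionally maintain the global count $\Count(G, \Graphlet)$, I would observe that when an edge $\Edge{u}{v}$ is inserted into $G$, every newly created triangle must contain $\Edge{u}{v}$, and the number of such triangles is exactly $\Count(G', \Graphlet, \Edge{u}{v})$ evaluated in the graph $G'$ immediately after insertion; analogously, every triangle destroyed by a deletion contains the deleted edge and can be counted immediately before removal. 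Hence after updating $\DS_{1/2}$ and its auxiliary counts, I perform one edge-query for $\Edge{u}{v}$ and add (insert) or subtract (delete) the returned value to the stored global count. This extra work per update is $\bigO(\sqrt m)$, which is absorbed into the already-claimed $\bigO(\sqrt m)$ amortized update bound, and the stored global count can be returned in constant time by a call to $\Count(G, \Graphlet)$.

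For the second statement, I would apply part (i) of \Theorem{count-triangles-extended} with $\PartFrac = 1/3$. The query time becomes $\bigO(\min(m^{2/3}, n^2)) = \bigO(m^{2/3})$, the amortized update time is $\bigO(m^{\max(1/3, 2/3)}) = \bigO(m^{2/3})$, and the space is $\bigO(\min(n^2, m^{4/3}))$, exactly matching the claimed bounds. No additional maintenance work is needed since the statement only asks for on-demand vertex queries.

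No serious obstacle arises: the argument is mostly bookkeeping about the optimal choice of $\PartFrac$ to balance the update and query costs in each of the two parts of \Theorem{count-triangles-extended}. The only step that requires any explicit justification beyond plugging in numbers is the observation that a single edge-query suffices to update the stored global count in part~(1), and this follows immediately from the fact that any triangle created or destroyed by the update of $\Edge{u}{v}$ must contain $\Edge{u}{v}$ as one of its three edges.
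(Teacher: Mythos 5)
Your proposal is correct and follows the same route the paper implicitly takes: the corollary is just an instantiation of \Theorem{count-triangles-extended} with $\PartFrac=1/2$ for the edge-query/global-count part and $\PartFrac=1/3$ for the vertex-query part, and the observation that the global count can be maintained by querying $\Count(G,\Graphlet,\Edge{u}{v})$ immediately after an insertion (or immediately before a deletion) is exactly the pattern the paper uses in its explicit proofs of \Lemma{count-3path}, \Lemma{count-paw}, etc.
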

\begin{theorem}\ThmLabel{count-noninduced-s}
Let $G = (V, E)$ be a dynamic graph, $s \in V$, and $\Graphlet$ %
be a connected subgraph.
We can maintain the non-induced $s$-subgraph count $\Count(G, \Graphlet, s)$
in
\begin{enumerate}[(i)]
\item
worst-case constant update time
and constant space
if $\Graphlet$ is the claw \PictoClaw,
\item
amortized update time
$\bigO(\sqrt{m})$
and $\bigO(n)$ space
if $\Graphlet$ is the $3$-cycle~\PictoThreeCycle{}
or the length-$3$ path~\PictoThreePath{},
\item
amortized update time
$\bigO(m^{2/3})$
and $\bigO(n^2)$ space
if $\Graphlet$ is the paw~\PictoPaw{},
the $4$-cycle~\PictoFourCycle{}, or
the diamond~\PictoDiamond{}, %
\item
worst-case update time $\bigO(m)$
and constant space
if $\Graphlet$ is
the $4$-clique~\PictoClique{}.
\end{enumerate}
\end{theorem}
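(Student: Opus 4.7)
The plan is to handle the four parts of the theorem separately. In each case I would reuse or specialize the counters introduced for the general algorithms of \Theorem{count-any} and \Theorem{count-noninduced}, now with $s$ pinned as a fixed anchor. The main space savings come from exactly this pinning: every pair-indexed auxiliary table in $\DS_\PartFrac$ collapses to an $O(n)$-sized $s$-indexed table, and the single triple-indexed table to an $O(n^2)$-sized one.

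For (i), I would maintain the closed-form decomposition $\Count(G, \PictoClaw, s) = \binom{\Degree(s)}{3} + \sum_{x \in \Neigh(s)} \binom{\Degree(x)-1}{2}$, corresponding to $s$ being the central vertex or a leaf. Each edge update touches at most $O(1)$ summands: the $s$-binomial when $s$ is an endpoint, and the $\binom{\Degree(u)-1}{2}$ or $\binom{\Degree(v)-1}{2}$ term when $u$ or $v$ lies in $\Neigh(s)$. This gives worst-case $O(1)$ update time and $O(1)$ extra space. For (iv), a $4$-clique through $s$ is precisely a triangle in $G[\Neigh(s)]$; I would keep the scalar count and, on each edge update $\Edge{u}{v}$, recompute its change in $O(m)$ time by enumerating $\Neigh(u) \cap \Neigh(v)$ (if $s \notin \{u,v\}$) or all edges with both endpoints in $\Neigh(s) \cap \Neigh(v)$ (if $s$ is an endpoint), using only $O(1)$ extra space.

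For (ii), I would run a $\tfrac12$-partition and maintain, for each $v \in V$, the table $P_s^L[v]$ equal to the number of length-$2$ paths from $s$ to $v$ whose intermediate vertex lies in $\Low$, together with the scalar triangle count $T_s$. An edge update $\Edge{u}{v}$ with $u,v \ne s$ touches $O(1)$ entries of $P_s^L$ and changes $T_s$ by at most one; an edge update $\Edge{s}{v}$ contributes $P_s^L[v] + |\High \cap \Neigh(s) \cap \Neigh(v)|$ to $T_s$, where the second summand is computable in $O(\sqrt m)$ by iterating $\High \cap \Neigh(s)$, and shifts at most $\Degree(v) = O(\sqrt m)$ entries of $P_s^L$ when $v \in \Low$. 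Vertex reclassifications amortize to $O(\sqrt m)$ per update by the same argument as in \Lemma{maintain-aux}. For $s$-length-$3$ paths I would combine $P_s^L$, the single-anchor counter $\Aux{vLV}[s]$ from $\DS_\PartFrac$, and $T_s$, summing over the four positions $s$ may take along the path and correcting for coincident vertices; every required lookup is $O(\sqrt m)$ and all tables are $O(n)$-sized.

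For (iii) I would use $\DS_\PartFrac$ with $\PartFrac = \tfrac13$ but keep only the $s$-indexed slices $\Aux{uLv}[s,\cdot]$, $\Aux{uLLv}[s,\cdot]$, $\Aux{pLL}[s,\cdot]$, $\Aux{cLV}[s,\cdot]$, $\Aux{uHv}[s,\cdot]$ of the pair-indexed tables (each $O(n)$ in size) and the $s$-indexed slice $\Aux{cL}[s,\cdot,\cdot]$ of the triple-indexed table ($O(n^2)$ in size). On each update I would rerun the same high/low case analysis as in \Lemma{count-diamond-edge} and its analogues for paws and $4$-cycles, with the extra constraint that $s$ occupies a fixed vertex of the pattern; the maintenance cost of each $s$-indexed slice is dominated by that of its full counterpart in $\DS_\PartFrac$, so the update time still matches the $O(m^{2/3})$ bound of \Theorem{count-noninduced}. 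The principal obstacle is precisely this last case: I must check that, for each pattern and each orbit of its automorphism group on which $s$ can be placed (on the chord of a diamond or off it; at the centre, on the arm, or at the non-central triangle vertex of a paw), every term in the original case analysis either collapses to a lookup in one of the $s$-indexed slices, or can be recomputed on the fly in $O(m^{2/3})$ time by iterating $\High$. I expect this to go through orbit by orbit, with previously triple-indexed lookups becoming double-indexed ones once $s$ fills one slot.
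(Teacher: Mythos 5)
Your treatment of parts (i), (ii), and (iv) is sound and is essentially the approach the paper takes: (i) is a closed-form decomposition equivalent to the paper's incremental bookkeeping (for $s$ central or $s$ a leaf), (iv) reuses the all-pairs scan from \Theorem{count-any}, and (ii) maintains the $s$-slice of \Aux{uLv} (your $P_s^L$) together with the high-degree iteration, exactly as in \Lemma{count-s-triangle} and \Lemma{count-s-3path}.

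For part (iii), however, there is a genuine gap in the step you yourself flag as ``the principal obstacle.'' You propose to retain \emph{only} the $s$-indexed slices of every pair-indexed auxiliary table in $\DS_{\PartFrac}$. But the query routine for an updated edge $\Edge{s}{x}$ genuinely needs entries with neither anchor equal to $s$. Concretely, in \Lemma{count-4cycle-edge}, counting $4$-cycles $s$-$x$-$a$-$b$-$s$ with $a \in \Low$ and $b \in \High$ requires iterating over $b \in \High \cap \Neigh(s)$ and reading $\Aux{uLv}[x, b]$ --- a non-$s$ pair. The same happens in \Lemma{count-diamond-edge}, case~(c), where $\Aux{uLv}[x, h]$ and $\Aux{uHv}[x, h]$ are read for $h \in \High$ when $x \in \High$. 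Neither of these collapses to an $s$-slice lookup, and the natural on-the-fly recomputations do not run in $\bigO(m^{2/3})$: rewriting the sum as $\sum_{a \in \Neigh(x)\cap\Low} \Card{\Neigh(a)\cap\High\cap\Neigh(s)}$ iterates over $\Neigh(x)$, which is $\Theta(n)$ when $x \in \High$, while iterating over $b \in \High\cap\Neigh(s)$ and recomputing $\Aux{uLv}[x,b]$ from scratch costs $\Omega(m)$ in total. So ``I expect this to go through orbit by orbit'' is precisely where the argument breaks.

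The fix is simpler than your plan and is what the paper does: keep the pair-indexed tables \Aux{uLv}, \Aux{uLLv}, \Aux{uHv}, \Aux{cLV}, \Aux{pLL}, and \Aux{t} \emph{in full} --- by \Lemma{maintain-uLv} through \Lemma{maintain-uHv} each of them is already $\bigO(n^2)$ (or less) with $\PartFrac = \tfrac13$ --- and restrict only the triple-indexed table \Aux{cL} to triples containing $s$, since that is the sole table whose unrestricted size ($\bigO(\min(n^3, nm, m^{5/3}))$) can exceed $n^2$. Pinning $s$ is thus needed only where the space would otherwise overshoot; over-pinning the pair tables throws away information the query genuinely consults.
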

\section{Lower Bounds}\label{sect:lowerbounds}
We give new lower bounds for detecting and counting induced and non-induced subgraphs.
\paragraph*{Induced Subgraph Counts}
Our results for counting induced subgraphs on four vertices are
conditioned on the combinatorial $k$-clique conjecture:
\begin{theorem}\ThmLabel{count-induced}
Let $G$ be a dynamic graph, $\Graphlet \in \GraphletSet_4$,
and let $\gamma > 0$ be a small constant.
There is no incremental or fully dynamic combinatorial algorithm with preprocessing time $\bigO(m^{2-\gamma})$ for maintaining
$\ICount(G, \Graphlet)$ in amortized update time $\bigO(m^{1-\gamma})$ and query time $\bigO(m^{2-\gamma})$, unless the $k$-clique conjecture fails.
\end{theorem}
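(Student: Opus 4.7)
The plan has two parts: first establish the lower bound directly for the 4-clique, then propagate it to every other connected 4-vertex pattern via counting formulas. For the base case, I would reduce static combinatorial 4-clique detection to incremental maintenance of $\Count(G, \PictoClique{})$. Given an $n$-vertex input graph with $m\le\binom{n}{2}$ edges, preprocess the empty graph, insert the $m$ edges one at a time, and finally issue a single query for the 4-clique count; the answer is positive iff a 4-clique exists. Assuming preprocessing $\bigO(m^{2-\gamma})$, amortized update time $\bigO(m^{1-\gamma})$, and query time $\bigO(m^{2-\gamma})$, the total cost is $\bigO(n^{4-2\gamma}) + m\cdot\bigO(m^{1-\gamma}) + \bigO(m^{2-\gamma}) = \bigO(n^{4-2\gamma})$, which contradicts \Conjecture{k-clique} for $k=4$. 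Since $\ICount(G, \PictoClique{}) = \Count(G, \PictoClique{})$, the theorem is immediate for $\Graphlet = \PictoClique{}$.

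For any other $\Graphlet \in \GraphletSet_4$, I would invoke the standard identity
\begin{equation*}
\Count(G, \Graphlet) \;=\; \sum_{Q\,:\,V(Q)=V(\Graphlet),\,E(Q)\supseteq E(\Graphlet)} \alpha_{\Graphlet, Q}\cdot \ICount(G, Q),
\end{equation*}
where $\alpha_{\Graphlet, Q}$ counts (non-induced) copies of $\Graphlet$ inside $Q$ divided by $|\mathrm{Aut}(\Graphlet)|$. Ordering 4-vertex patterns by edge count gives an upper-triangular, invertible system whose inverse expresses $\ICount(G, \Graphlet)$ as a signed linear combination $\sum_{Q\supseteq\Graphlet} \beta_{\Graphlet, Q}\,\Count(G, Q)$. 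I would then verify that $\beta_{\Graphlet, \PictoClique{}}\neq 0$ for every $\Graphlet \in \GraphletSet_4$ and solve for
\begin{equation*}
\Count(G, \PictoClique{}) \;=\; \frac{1}{\beta_{\Graphlet,\PictoClique{}}}\Bigl(\ICount(G, \Graphlet) \;-\; \sum_{Q\supsetneq\Graphlet,\,Q\neq \PictoClique{}} \beta_{\Graphlet,Q}\,\Count(G, Q)\Bigr).
\end{equation*}
A hypothetical combinatorial algorithm for $\ICount(G, \Graphlet)$ with the forbidden bounds, combined with the $\bigO(m^{2/3})$-update combinatorial algorithms for every non-induced 4-vertex pattern $Q\neq \PictoClique{}$ from \Theorem{count-noninduced}, then maintains $\Count(G, \PictoClique{})$ within amortized update time $\bigO(m^{1-\gamma}+m^{2/3}) = \bigO(m^{1-\gamma})$ (choosing $\gamma<1/3$, which is within the ``small constant $\gamma>0$'' allowed in the statement), with preprocessing and query bounds dominated by the same terms. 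This contradicts the first part, proving the lower bound for every $\Graphlet\in\GraphletSet_4$.

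The main obstacle is verifying the non-vanishing of the coefficient $\beta_{\Graphlet,\PictoClique{}}$ for each of the six connected 4-vertex patterns, since the whole reduction collapses if $\PictoClique{}$ disappears from the inverted formula for any specific $\Graphlet$. Because the four-vertex subgraph-containment poset restricted to supergraphs of $\Graphlet$ is small (at most six entries, with edge counts $3,3,4,4,5,6$) and the Möbius-style inversion on this poset has strictly alternating signs in $|E(Q)\setminus E(\Graphlet)|$, a short case check over each of \PictoThreePath{}, \PictoClaw{}, \PictoPaw{}, \PictoFourCycle{}, \PictoDiamond{} settles the matter with no accidental cancellation. The remaining details -- that both reductions stay combinatorial, that the incremental reduction inherits the lower bound, and that the per-update budgets compose correctly -- are routine.
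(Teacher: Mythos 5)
Your proposal is correct and follows essentially the same approach as the paper: Part 1 reduces static combinatorial $4$-clique detection to incremental maintenance of $\ICount(G,\PictoClique)$ exactly as in the paper, and Part 2 uses the induced/non-induced counting-formula inversion to solve for $\Count(G,\PictoClique)$ from a hypothetical fast $\ICount(G,\Graphlet)$ algorithm together with the $\bigO(m^{2/3})$-update non-induced counters of \Theorem{count-noninduced}, which is precisely the role played by \Lemma{induced-plugin} in the paper. The only cosmetic differences are that the paper works out the six explicit linear identities rather than citing Möbius inversion abstractly (so it does not need your ``no accidental cancellation'' case check as a separate step; the nonzero coefficient of $\ICount(G,\PictoClique)$ is visible by inspection in \Lemma{induced-plugin}), and the paper avoids your assumption $\gamma<1/3$ by bounding the total time as $\bigO(m^{\max(5/3,\,2-\gamma)}) \subseteq \bigO(n^{4-\delta})$ with $\delta=\min(2\gamma,2/3)$, which works for every $\gamma>0$.
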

Before we turn to the proof, we recall %
the following relations between subgraph counts.
\begin{lemma}[\cite{DBLP:journals/tcs/EppsteinGST12}]\LemLabel{subgraph-multiplicities}
For each pair $P, P' \in \GraphletSet_4$, %
the non-induced subgraph count
$\Count(P, P') = 1$ if $P = P'$
and otherwise
nonzero only in the following cases:
\setlength{\abovedisplayskip}{1pt}
\setlength{\belowdisplayskip}{0pt}
\begin{align*}
\Count(\PictoPaw{},    \PictoThreePath) &= 2,  &
\Count(\PictoFourCycle,\PictoThreePath) &= 4,  &
\Count(\PictoDiamond,  \PictoThreePath) &= 6,  &
\Count(\PictoClique{}, \PictoThreePath) &= 12, \\
\Count(\PictoPaw{},    \PictoClaw{})    &= 1,  &
\Count(\PictoDiamond,  \PictoClaw{})    &= 2,  &
\Count(\PictoClique{}, \PictoClaw{})    &= 4,  &
\Count(\PictoDiamond,  \PictoPaw{})     &= 4,  \\
\Count(\PictoClique{}, \PictoPaw{})     &= 12, &
\Count(\PictoDiamond,  \PictoFourCycle) &= 1,  &
\Count(\PictoClique{}, \PictoFourCycle) &= 3,  &
\Count(\PictoClique{}, \PictoDiamond)   &= 6.  %
\end{align*}
\end{lemma}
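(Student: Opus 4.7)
The plan is a direct case-by-case verification. By definition, $\Count(P, P')$ counts the subgraphs of $P$ (as edge subsets over vertex subsets of $V(P)$) that are isomorphic to $P'$. Since $\Card{V(P)} = \Card{V(P')} = 4$, every such subgraph must occupy all four vertices of $P$, so only the edge set of the copy varies. Each entry is therefore a small combinatorial quantity that can be read off from $P$ by inspection, and the whole lemma reduces to an exhaustive table of such inspections.

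First, I would dispose of the zero entries. A necessary condition for $\Count(P, P') \neq 0$ is $\Card{E(P')} \leq \Card{E(P)}$, together with the structural requirement that $P$ contain $P'$ as a (non-induced) spanning subgraph. Listing the six patterns by edge count (\PictoThreePath{} and \PictoClaw{} with three, \PictoPaw{} and \PictoFourCycle{} with four, \PictoDiamond{} with five, \PictoClique{} with six), a quick check of degree sequences (e.g., \PictoClaw{} requires a vertex of degree $3$, so it is not contained in \PictoFourCycle; \PictoThreePath{} has two degree-$1$ vertices, which \PictoClaw{} does not provide) confirms that the table lists exactly the nonzero pairs.

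For the remaining entries, I would enumerate using the symmetries of $P$. The trivial case $\Count(P,P)=1$ is immediate. For $\Count(\PictoClique{},\cdot)$ the four-clique is vertex- and edge-transitive, which reduces enumeration to orbit counts: $\Count(\PictoClique{}, \PictoClaw{}) = 4$ (choose the central vertex), $\Count(\PictoClique{}, \PictoFourCycle{}) = \tfrac{1}{2}(4-1)! = 3$, $\Count(\PictoClique{}, \PictoDiamond{}) = \binom{4}{2} = 6$ (choose the missing edge), and $\Count(\PictoClique{}, \PictoPaw{}) = 4 \cdot 3 = 12$ (choose the vertex outside the triangle, then the triangle vertex attaching the arm). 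For the diamond and the paw I would exploit their vertex orbits: for instance, the diamond consists of two triangles sharing the chord, each of which yields two paws by adding as pendant one of the two non-chord edges leaving the chord on the opposite side, giving $\Count(\PictoDiamond{}, \PictoPaw{}) = 4$. The counts of length-$3$ paths ($2$, $4$, $6$, $12$ for \PictoPaw{}, \PictoFourCycle{}, \PictoDiamond{}, \PictoClique{}) are obtained analogously by ordered enumeration of vertex sequences divided by the path-reversal symmetry.

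The main obstacle is purely bookkeeping: one has to avoid double-counting when $P'$ has non-trivial automorphisms (as with \PictoThreePath{} and \PictoFourCycle{}), and to distinguish vertex subsets from edge subsets since a single four-vertex set of $P$ may host several non-isomorphic subgraphs as edge subsets of $P$. Once these conventions are fixed, each entry follows from a short enumeration, and the resulting table coincides with the one verified in~\cite{DBLP:journals/tcs/EppsteinGST12}, from which the lemma may also be cited directly.
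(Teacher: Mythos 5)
Your verification is correct: each nonzero entry you compute (and the symmetry arguments you use for \PictoClique{}, \PictoDiamond{}, and the paw counts) matches the stated table, and your handling of the automorphism/double-counting issue is consistent with the paper's convention that $\Count(P,P')$ counts distinct copies. Note, however, that the paper itself gives no proof of this lemma at all -- it is imported verbatim from Eppstein~\etal{}~\cite{DBLP:journals/tcs/EppsteinGST12} and used as a black box in the proof of \Lemma{induced-plugin} and \Theorem{count-induced}. So your route differs only in that you supply the self-contained case analysis the paper delegates to the citation; this buys a reader-verifiable table at the cost of the bookkeeping you describe, while the paper's approach buys brevity. One small slip in your zero-entry check: to rule out $\Count(\PictoClaw{},\PictoThreePath{})=0$ you argue that the claw lacks two degree-$1$ vertices, but the claw has three of these; the correct obstruction is that a spanning length-$3$ path needs two vertices of degree at least $2$, whereas the claw has only one vertex of degree exceeding $1$. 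This does not affect the conclusion, since the intended degree-sequence domination argument immediately gives the zero entries you claim.
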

\begin{proposition}[\cite{eq}]\PropLabel{counts-induced-and-not}%
Let $G, \Graphlet$ be graphs and let $k$ be the number of vertices of $\Graphlet$.
Then,
$\Count(G, \Graphlet) = \sum_{P \in \GraphletSet_k}\ICount(G, P) \cdot \Count(P, \Graphlet)$.
\end{proposition}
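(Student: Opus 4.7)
The plan is a straightforward double-counting argument: partition the non-induced copies of $\Graphlet$ in $G$ according to the isomorphism type of the induced subgraph that their vertex set spans. The driving observation is that every non-induced copy of $\Graphlet$ in $G$ is completely specified by (i) the $k$-element vertex set $V' \subseteq V(G)$ on which it sits and (ii) a copy of $\Graphlet$ inside $G[V']$, and that the second factor depends on $V'$ only through the isomorphism type of $G[V']$.

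Concretely, I would first rewrite
\[
\Count(G, \Graphlet) = \sum_{V' \subseteq V(G),\ \Card{V'} = k} \Count(G[V'], \Graphlet),
\]
which follows directly from the definitions since the normalization by the number of automorphisms of $\Graphlet$ is applied uniformly on both sides and each non-induced copy has a unique $k$-element vertex set. Next, since $\Graphlet$ is connected on $k$ vertices, any $V'$ contributing a nonzero term must have $G[V']$ connected as well (a supergraph of a connected spanning subgraph is itself connected), so $G[V'] \cong P$ for some $P \in \GraphletSet_k$. Finally, I would regroup the outer sum by this isomorphism type: the number of subsets $V'$ with $G[V'] \cong P$ is by definition equal to $\ICount(G, P)$, and each such $V'$ contributes precisely $\Count(P, \Graphlet)$ copies of $\Graphlet$, giving the claimed identity after summing over $P \in \GraphletSet_k$.

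The proof is essentially combinatorial bookkeeping, and I do not foresee any real obstacle; the statement is really a restatement of the standard relation between induced and non-induced subgraph counts. The only point that merits a moment of care is verifying that the ``divide by the number of automorphisms of $\Graphlet$'' convention used in the definitions of $\Count$ and $\ICount$ remains consistent across the regrouping. This holds because $\Count(G[V'], \Graphlet)$ depends only on the isomorphism class of $G[V']$, and the same normalization is applied at every stage, so no corrective factors appear.
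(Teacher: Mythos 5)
The paper does not supply its own proof of this proposition; it is cited from an external reference \cite{eq}, so there is no in-paper argument to compare against. Your double-counting derivation---partition the non-induced copies of $\Graphlet$ by the $k$-element vertex set $V'$ they occupy, observe that a connected pattern can only span a connected induced subgraph $G[V']$, and regroup by the isomorphism type of $G[V']$---is correct and is the standard way to establish this identity, and your check that the division by $|\mathrm{Aut}(\Graphlet)|$ passes through the regrouping unchanged is exactly the right sanity check: after normalization both sides count distinct unlabeled subgraph objects, so the identity is an honest bijection. One point worth making explicit: connectedness of $\Graphlet$ is in fact \emph{needed} for the sum over $\GraphletSet_k$ (which the paper defines as the set of connected $k$-vertex graphs) to be exhaustive; the statement as written leaves this hypothesis implicit, but you correctly invoke it, and every application of the proposition in the paper has $\Graphlet\in\GraphletSet_4$.
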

\begin{lemma}\LemLabel{induced-plugin}
Let $G = (V, E)$ be a graph.
The following relationships between induced and non-induced subgraph counts hold:
\setlength{\abovedisplayskip}{1pt}
\setlength{\belowdisplayskip}{0pt}
\begin{align*}
\ICount(G, \PictoClique) &= \Count(G, \PictoClique) \\
\ICount(G, \PictoDiamond) &= \Count(G, \PictoDiamond) - 6\ICount(G, \PictoClique) \\
\ICount(G, \PictoFourCycle) &= \Count(G, \PictoFourCycle) -   \Count(G, \PictoDiamond) +  3\ICount(G, \PictoClique)\\
\ICount(G, \PictoPaw) &= \Count(G, \PictoPaw) - 4 \Count(G, \PictoDiamond) + 12\ICount(G, \PictoClique)\\
\ICount(G, \PictoClaw) &= \Count(G, \PictoClaw) -   \Count(G, \PictoPaw) +  2 \Count(G, \PictoDiamond) - 4\ICount(G, \PictoClique)\\
\ICount(G, \PictoThreePath) &= \Count(G, \PictoThreePath) - 2 \Count(G, \PictoPaw) -  4 \Count(G, \PictoFourCycle) + 6\Count(G, \PictoDiamond) - 12\ICount(G, \PictoClique)
\end{align*}
\end{lemma}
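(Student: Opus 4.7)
The plan is to apply \Proposition{counts-induced-and-not} to each of the six patterns in $\GraphletSet_4$ and then solve the resulting linear system for the induced counts. Observe that $\Count(P, \Graphlet) = 0$ whenever $\Graphlet$ has strictly more edges than $P$, since then $\Graphlet$ cannot embed as a subgraph of $P$ at all. Ordering $\GraphletSet_4$ by number of edges, namely $\PictoClique{}$ (6 edges), $\PictoDiamond{}$ (5), $\PictoFourCycle{}/\PictoPaw{}$ (4), $\PictoClaw{}/\PictoThreePath{}$ (3), the system from \Proposition{counts-induced-and-not} together with the multiplicities from \Lemma{subgraph-multiplicities} becomes upper triangular in this ordering. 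Hence the system can be solved by back-substitution from the densest pattern downward.

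First I would handle the $4$-clique: it is the unique densest pattern, so \Proposition{counts-induced-and-not} gives $\Count(G, \PictoClique) = \ICount(G, \PictoClique)$ immediately. Next I would treat the diamond: the only patterns in $\GraphletSet_4$ that contain a diamond as a non-induced subgraph are $\PictoDiamond{}$ itself and $\PictoClique{}$, with multiplicities $1$ and $6$ respectively by \Lemma{subgraph-multiplicities}, giving $\Count(G, \PictoDiamond) = \ICount(G, \PictoDiamond) + 6\ICount(G, \PictoClique)$, which rearranges to the second stated identity. I would then repeat this for $\PictoFourCycle{}$ using $\Count(\PictoDiamond,\PictoFourCycle)=1$ and $\Count(\PictoClique,\PictoFourCycle)=3$; for $\PictoPaw{}$ using $\Count(\PictoDiamond,\PictoPaw)=4$ and $\Count(\PictoClique,\PictoPaw)=12$; for $\PictoClaw{}$ using $\Count(\PictoPaw,\PictoClaw)=1$, $\Count(\PictoDiamond,\PictoClaw)=2$, $\Count(\PictoClique,\PictoClaw)=4$; and for $\PictoThreePath{}$ using $\Count(\PictoPaw,\PictoThreePath)=2$, $\Count(\PictoFourCycle,\PictoThreePath)=4$, $\Count(\PictoDiamond,\PictoThreePath)=6$, $\Count(\PictoClique,\PictoThreePath)=12$.

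At each step, after solving for the induced count of $\Graphlet$ in terms of $\Count(G, \Graphlet)$ and the already-known induced counts of denser patterns, I substitute the previously derived expressions so that on the right-hand side only one induced count, $\ICount(G, \PictoClique)$, remains. For instance, for the $4$-cycle, substituting $\ICount(G, \PictoDiamond) = \Count(G,\PictoDiamond) - 6\ICount(G,\PictoClique)$ into the raw back-substitution $\ICount(G,\PictoFourCycle) = \Count(G,\PictoFourCycle) - \ICount(G,\PictoDiamond) - 3\ICount(G,\PictoClique)$ yields exactly the stated $\Count(G,\PictoFourCycle) - \Count(G,\PictoDiamond) + 3\ICount(G,\PictoClique)$. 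I expect the only real work is to verify that the signs and coefficients match after these substitutions cascade through all six equations.

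There is no conceptual obstacle: the statement is purely a consequence of the inclusion-exclusion-style inversion of the matrix $[\Count(P,\Graphlet)]_{P,\Graphlet\in\GraphletSet_4}$, which is triangular with ones on the diagonal and hence trivially invertible over $\mathbb{Z}$. The main risk is bookkeeping, in particular getting the coefficients of $\ICount(G,\PictoClique)$ right in the final two identities, where several terms combine. To guard against arithmetic slips I would double-check the $\PictoThreePath{}$ identity by counting length-$3$ paths directly in each of the six patterns and cross-referencing against the derived coefficients.
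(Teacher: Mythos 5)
Your proposal is correct and follows essentially the same approach as the paper: the paper's proof simply invokes \Proposition{counts-induced-and-not} and \Lemma{subgraph-multiplicities} and says to substitute and solve for $\ICount(G, \Graphlet)$ in the order listed, which is precisely the back-substitution through the upper-triangular system that you describe. Your additional observation that $\Count(P, \Graphlet) = 0$ when $\Graphlet$ has more edges than $P$, making the matrix unitriangular and hence invertible over $\mathbb{Z}$, is a useful explicit justification that the paper leaves implicit.
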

\begin{proof}
Let $\Graphlet \in \GraphletSet_4$.
The statement follows from \Proposition{counts-induced-and-not} and
\Lemma{subgraph-multiplicities} by substituting equations and solving them
for $\ICount(G, \Graphlet)$ in the order as listed.
\end{proof}

\begin{corollary}\CorLabel{count-induced}
Let $G = (V, E)$ be a dynamic graph and $\Graphlet \in \GraphletSet_4$.
We can maintain $\ICount(G, \Graphlet)$ with $\bigO(m)$
worst-case
update time and %
constant space.
\end{corollary}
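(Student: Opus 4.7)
The plan is to reduce the induced-count maintenance problem to maintaining non-induced counts, which is already handled by \Theorem{count-any}, and then evaluate the conversion formulas from \Lemma{induced-plugin} in constant time after every update.

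First, I would invoke \Theorem{count-any}(i) independently for each of the six patterns in $\GraphletSet_4$: the length-$3$ path, claw, paw, $4$-cycle, diamond, and $4$-clique. This gives six parallel data structures, each maintaining its respective non-induced count $\Count(G, \Graphlet')$ in worst-case $\bigO(m)$ time per update and using $\bigO(1)$ space. Running all six in parallel is only a constant-factor blowup, so the combined update time remains $\bigO(m)$ worst case and the total space remains $\bigO(1)$.

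Second, I would show how to derive the induced count $\ICount(G, \Graphlet)$ from these six maintained quantities. By \Lemma{induced-plugin}, each $\ICount(G, \Graphlet)$ for $\Graphlet \in \GraphletSet_4$ is an explicit integer linear combination of the six non-induced counts $\Count(G, \Graphlet')$. Since each formula involves only a constant number of arithmetic operations on already-stored integers, $\ICount(G, \Graphlet)$ can be recomputed in worst-case constant time at the end of every update (or, equivalently, computed on demand at query time in $\bigO(1)$). Either way, the overall worst-case update time is dominated by the $\bigO(m)$ bound from \Theorem{count-any}, the query time is $\bigO(1)$, and the additional space needed to store the six running counts and the target induced count is $\bigO(1)$.

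There is essentially no obstacle: the heavy lifting has already been done by \Theorem{count-any} (which gives the $\bigO(m)$, constant-space bound for non-induced four-vertex counts) and by \Lemma{induced-plugin} (which provides closed-form conversions). The only thing to verify is that maintaining a constant number of non-induced counts in parallel preserves both the worst-case update time and the constant space bound, which it clearly does.
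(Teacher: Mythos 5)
Your proposal is correct and matches the approach the paper intends: the corollary is stated immediately after \Lemma{induced-plugin} precisely because it follows by combining \Theorem{count-any}(i) (six parallel instances, one per pattern, each $\bigO(m)$ worst-case update time and $\bigO(1)$ space) with the constant-size linear conversion formulas of \Lemma{induced-plugin}.
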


\begin{proof}[Proof of \Theorem{count-induced}]
First consider the case that $\Graphlet = \PictoClique$.
Suppose there is an incremental or fully dynamic algorithm $\Alg$ that
maintains $\ICount(G, \Graphlet)$ in time $\bigO(m^{1-\gamma})$
with query time $\bigO(m^{2-\gamma})$
for some $\gamma> 0$.
Construct an algorithm $\Alg'$ for static $4$-clique
detection as follows:
Run $\Alg$ on an initially empty graph, insert all edges
one-by-one in total $\bigO(m^{2-\gamma})$ time,
and query the result in $\bigO(m^{2-\gamma})$ time.
As $\bigO(m^{2-\gamma}) = \bigO(n^{4-2\gamma})$ this
contradicts \Conjecture{k-clique}.

For the remaining five induced four-vertex subgraphs, let $\Graphlet$ be such a
subgraph.
We construct a deterministic algorithm $\Alg'$ for static $4$-clique
detection as follows:
$\Alg'$ executes the above operations for our non-induced subgraph counting algorithm from
\Section{algorithms}, which can maintain the number of all connected subgraphs
on four vertices with $\bigO(m^{2/3})$ amortized update time and $\bigO(1)$ query time by
\Theorem{count-noninduced}.
It thus takes $\bigO(m^{5/3})$ time in total to compute $\Count(G,
\Graphlet')$ for all $\Graphlet' \in \GraphletSet_4 \setminus \{ \PictoClique \}$.
Assume by contradiction that there exists an algorithm $\Alg^*$ that
maintains $\ICount(G, \Graphlet)$ in update time $\bigO(m^{1-\gamma})$
and query time $\bigO(m^{2-\gamma})$.
Then $\Alg'$ also executes the same operations with $\Alg^*$ to compute $\ICount(G, \Graphlet)$.
Using the formula for $\Graphlet$ in \Lemma{induced-plugin}, $\Alg'$ can
solve the static 4-clique detection problem in time $\bigO(m^{\max(5/3,
2-\gamma)})
\subseteq \bigO(n^{4-\delta})$ time for
$\delta = \min(2\gamma,2/3) > 0$, a contradiction to \Conjecture{k-clique}.
\end{proof}

\paragraph*{Non-Induced Subgraph Counts}
In this section, we give new lower bounds for detecting (and thus counting)
cycles of arbitrary length, paws, diamonds, and $4$-cliques\footnote{\Theorem{count-induced} applies also to $4$-cliques, but it is based on a different assumption.},
as well as counting length-$3$ paths.
Our results are based on the \OMv{} conjecture.
In~\cite{henzinger2015unifying} it is proven that instead of reducing from
\OMv{} directly it suffices to reduce from the following \oneuMv{} version:
For any positive integer parameters $n_1$, $n_2$, given an $n_1 \times n_2$
matrix $M$, there is no algorithm with preprocessing time polynomial in $n_1$ and $n_2$
that computes
for an $n_1$-dimensional vector $u$ and an $n_2$-dimensional vector
$v$  the product
$u^\top M v$
in time $\bigO(n_1 n_2^{1 -\delta} + n_1^{1 -\delta} n_2)$ for any
small constant $\delta > 0$
with error probability of at most $\frac{1}{3}$ in the word-RAM
model with $\bigO(\log n)$ bit words.
\begin{theorem}\ThmLabel{lowerbounds}
Let $G$ be a partially dynamic graph and let $\Graphlet$ be
the paw~\PictoPaw{},
the diamond~\PictoDiamond{},
the $4$-clique~\PictoClique, or
a $k$-cycle with $k \geq 3$.
On condition of \Conjecture{omv}, there is no partially dynamic algorithm to
maintain whether $\Count(G, \Graphlet) > 0$ with polynomial preprocessing time
and worst-case update time $\bigO(m^{1/2-\delta})$
and query time $\bigO(m^{1-\delta})$
with an error probability of at most $1/3$
for any $\delta > 0$.
This also holds for fully dynamic algorithms with amortized update time
and for paws, diamonds, $4$-cliques, or odd
$k$-cycles containing a specific vertex $s$,
as well as for maintaining the number of length-$3$ paths~\PictoThreePath{}
and the number of length-$3$ paths containing a specific vertex $s$.
\end{theorem}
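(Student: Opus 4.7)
The plan is to reduce from the \oneuMv{} problem following the framework of~\cite{henzinger2015unifying}. For each pattern $\Graphlet$ in the statement, I would construct during polynomial-time preprocessing a gadget graph that encodes an arbitrary Boolean $n_1 \times n_2$ matrix $M$; when query vectors $u$ and $v$ arrive online, $O(n_1 + n_2)$ edges encoding $u$ and $v$ are inserted (and, in the fully dynamic case, deleted again before the next query), and a single subgraph query determines $u^\top M v$. With the gadgets designed so that $m = \Theta(n_1 n_2)$, an amortized update time of $o(m^{1/2-\delta})$ combined with a worst-case query time of $o(m^{1-\delta})$ would solve \oneuMv{} in $o(n_1 n_2^{1-\delta} + n_1^{1-\delta} n_2)$ time per query, contradicting \Conjecture{omv}. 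Restricting to a pure insertion or pure deletion sequence yields the partially dynamic worst-case bounds.

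The base gadget is the tripartite $s$-triangle construction of~\cite{henzinger2015unifying}: vertex sets $A = \Set{a_1,\dots,a_{n_1}}$, $B = \Set{b_1,\dots,b_{n_2}}$, and a distinguished vertex $s$, with $\Edge{a_i}{b_j}$ inserted at preprocessing iff $M_{ij}=1$, $\Edge{s}{a_i}$ inserted online iff $u_i=1$, and $\Edge{s}{b_j}$ inserted online iff $v_j=1$. A triangle through $s$ exists iff there is a pair $(i,j)$ with $u_i = M_{ij} = v_j = 1$, immediately giving the $3$-cycle and $s$-$3$-cycle bounds, since the preprocessing graph is bipartite and triangle-free. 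I would extend this to the other patterns by attaching pattern-specific, preprocessing-time structures designed so that a copy of $\Graphlet$ exists iff the base triangle exists: a fixed pendant at $s$ for the paw, additional fixed connector vertices chosen to avoid creating pre-$uv$ triangles for the diamond and $4$-clique, and, for $k$-cycles with $k \ge 4$, subdividing each matrix edge $\Edge{a_i}{b_j}$ into a path of length $k-2$ via fresh low-degree vertices so that each resulting $k$-cycle corresponds bijectively to a triple $(i,j)$ with $u_i = M_{ij} = v_j = 1$. The $s$-restricted versions follow because every created copy contains $s$ by construction; the restriction to odd $k$ in the $s$-$k$-cycle bound arises because the base graph minus the $u, v$ edges is bipartite, so even cycles can appear without involving $s$.

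For counting length-$3$ paths, detection is trivial, so I would instead measure the \emph{change} in $\Count(G, \PictoThreePath)$ induced by inserting the $u, v$ edges. Because all vertex degrees in the preprocessing graph are fixed and known, this change can be expressed as $c_1 \cdot u^\top M v + c_2 \cdot \|u\|_1 + c_3 \cdot \|v\|_1 + c_4$ for constants $c_1,\dots,c_4$ computable at preprocessing time, so a single count query recovers $u^\top M v$. The $s$-length-$3$-path bound follows analogously by restricting attention to length-$3$ paths incident to $s$.

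The main obstacle is ensuring each construction is \emph{spurious-free}: in every detection reduction the preprocessing graph alone must contain no copy of $\Graphlet$, and every copy appearing after inserting the $u, v$ edges must certify $u^\top M v = 1$; for the length-$3$ path counting reduction, the contribution of paths not corresponding to a triangle must be precisely accounted for in the constants $c_1,\dots,c_4$. Verifying this requires a case analysis enumerating all embeddings of $\Graphlet$ into the gadget. It is particularly delicate for the diamond and the paw, whose structural roles (central vertex, arm endpoint, chord) can be realized by several different vertex types, and for the $k$-cycle subdivision gadget, where shared subdivision vertices could create unintended short cycles unless the subdivision vertices are kept disjoint across different $(i,j)$ pairs.
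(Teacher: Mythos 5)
Your reduction framework and parameter choice ($n_1 = n_2 = \sqrt{m}$, $\bigO(\sqrt m)$ updates encoding $u,v$, one query) are correct and match the paper, and the paw gadget (pendant at $s$) is exactly what the paper uses. Your $k$-cycle gadget is a genuine alternative: you subdivide each matrix edge $\Edge{a_i}{b_j}$ into a length-$(k-2)$ path and attach $s$ to $A$ and $B$, whereas the paper keeps the matrix bipartition intact and lengthens the two arms from $s$ to the matrix layers ($G_{M,g,h}$ for odd $k$), and for even $k$ switches to an $s$-free gadget $H_{M,g,h}$ with a duplicated left side. Your variant is uniform over odd and even $k$, and since the preprocessing graph has girth $4(k-2) > k$ and every $k$-cycle through $s$ must traverse a full subdivision path, it is spurious-free; as far as I can see it would even give the $s$-$k$-cycle bound for even $k\geq 4$, which the paper does not claim. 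Note, though, that your own stated reason for the odd-$k$ restriction (bipartiteness of the base graph) is inconsistent with your subdivision gadget and is really an artifact of the non-subdivided variant.

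The substantive gaps are in the diamond, $4$-clique, and length-$3$ path cases. For diamonds and $4$-cliques you only promise ``additional fixed connector vertices chosen to avoid creating pre-$uv$ triangles,'' but avoiding pre-existing copies is not where the difficulty lies: the danger is spurious copies appearing \emph{after} the $u,v$ insertions. For example, augmenting the $s$-triangle gadget with a connector $t$ adjacent to $s$ and all of $B$ produces a diamond $\{s,t,b_j,b_{j'}\}$ with chord $\Edge{s}{t}$ as soon as $v$ has two ones, independently of $u^\top M v$. The paper's gadget ($G_{M,1,0}$ plus $s$ adjacent to everything) evades this precisely because every vertex of $L^{(0)}$ has degree at most two, which rules out a chord on $L^{(0)}$ and forces every diamond to cross a matrix edge; your sketch establishes no analogous structural invariant, and for $4$-cliques the paper additionally needs a full set of $L^{(0)}$--$R$ edges to make a $K_4$ possible at all, a detail you omit. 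For length-$3$ paths, the claimed formula $c_1 u^\top M v + c_2 \|u\|_1 + c_3 \|v\|_1 + c_4$ is false unless the preprocessing graph is degree-regular: the correction terms are sums of precomputed \emph{per-index} quantities (e.g., the number of $2$-paths starting at $l_i$, summed over $i$ with $u_i = 1$), which are not functions of $\|u\|_1$ alone. The paper instead precomputes these per-vertex contributions and accumulates the correction online in $\bigO(n_1 + n_2)$ time; your argument needs the same repair to be sound.
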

\begin{figure}[tb]
\centering
\def\PHeight{2.2cm}
\def\Shift{.3cm}
\begin{tikzpicture}[inner sep=0pt]
\node[font=\small,align=left] (mat) {%
$M = \Bigl(
\begin{smallmatrix}
1 & 0 & 1 & 0 \\
0 & 1 & 1 & 0 \\
0 & 1 & 1 & 1 \\
\end{smallmatrix}%
\Bigr)$\\
$u^\top = (\begin{smallmatrix}1&1&0   \end{smallmatrix})$\\
$v^\top = (\begin{smallmatrix}0&1&1&0 \end{smallmatrix})$
};
\node[anchor=west,xshift=\Shift] (pic) at (mat.east) {\includegraphics[height=\PHeight]{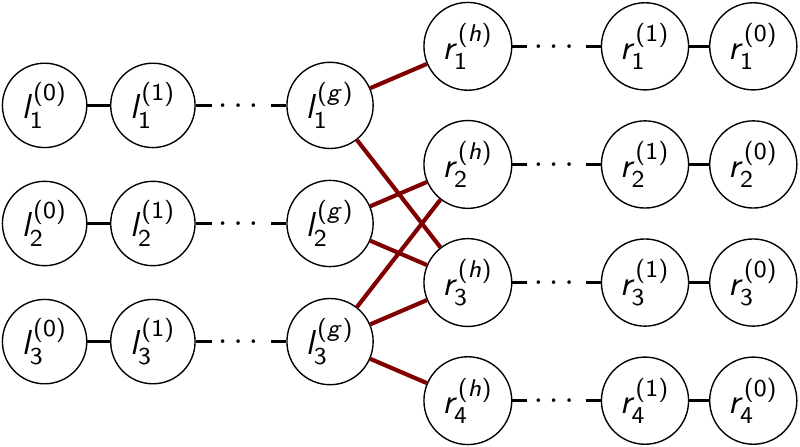}};
\node[anchor=west,xshift=\Shift] (cycle) at (pic.east) {\includegraphics[height=\PHeight]{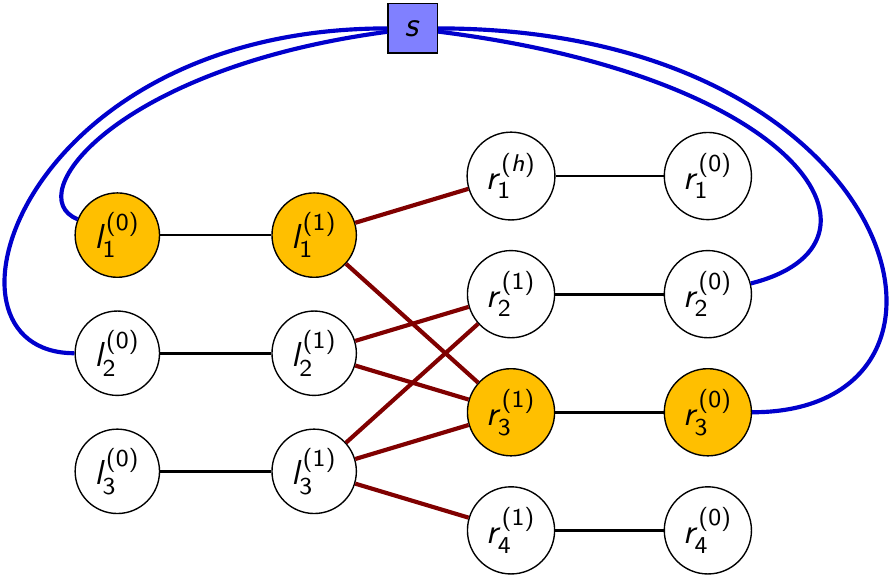}};
\node[anchor=west,xshift=\Shift] (diamond) at (cycle.east) {\includegraphics[height=\PHeight]{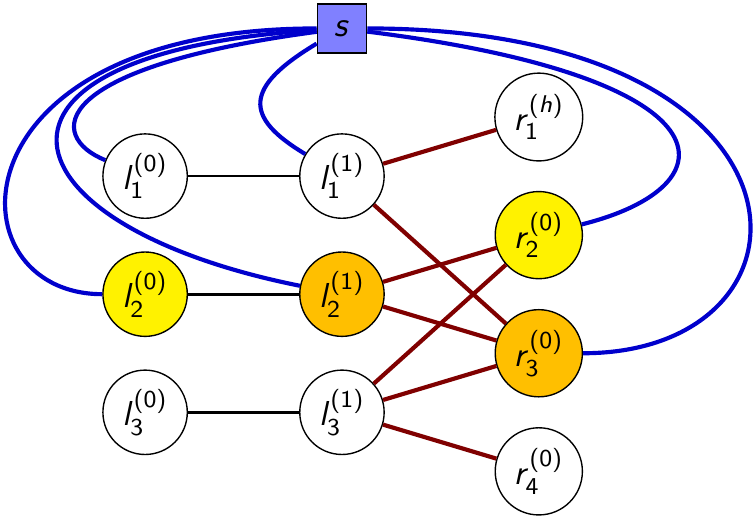}};
\end{tikzpicture}
\caption{Construction of $G_{M,g,h}$
and $G$ for $u^\top M v$ for
($s$)-$5$-cycle and
($s$)-diamond
detection.}%
\FigLabel{lb-construct-G}\FigLabel{lb-examples}
\end{figure}

Our constructions build on the following graph (see \Figure{lb-construct-G} for
an example).
\begin{definition}[$G_{M,g,h}$]\DefLabel{GMgh}
Given a matrix $M \in {\{0,1\}}^{n_1\times n_2}$ and two integers
$g, h \geq 0$, we denote by
$G_{M,g,h} =
(\bigcup\limits_{\mathclap{0 \leq p \leq g}} L^{(p)} \cup\,
\bigcup\limits_{\mathclap{0 \leq q \leq h}} R^{(q)},
E_L \cup E_R \cup E_M)$
the $(g+h+2)$-partite graph with
\setlength{\abovedisplayskip}{1pt}
\setlength{\belowdisplayskip}{1pt}
\begin{align*}
L^{(p)} &=\big\{l^{(p)}_1,\dots, l^{(p)}_{n_1}\big\},\quad 0 \leq p \leq g; &
E_L &= \big\{(l^{(p)}_{i}, l^{(p+1)}_{i}) \mid 1 \leq i \leq n_1 \wedge 0 \leq p < g \big\} \\
R^{(q)} &=\big\{r^{(q)}_1,\dots, r^{(q)}_{n_2}\big\},\quad 0 \leq q \leq h; &
E_R &= \big\{(r^{(q)}_{j}, r^{(q+1)}_{j}) \mid 1 \leq j \leq n_2 \wedge 0 \leq q < h \big\}
\end{align*}
and
$E_M = \{(l^{(g)}_{i}, r^{(h)}_j) \mid M_{ij} = 1 \}$.
$G_{M,g,h}$ has $(g+1) \cdot n_1 + (h+1) \cdot n_2$ vertices and at most
$n_1n_2 + g\cdot n_1 + h\cdot n_2$ edges.
All vertices $x$ in $L^{(p)}$ for $1 \leq p < g$ and in $R^{(q)}$ for $1 \leq q < h$ have $\Degree(x)=2$, %
whereas all vertices $y$ in $L^{(0)}$ and in $R^{(0)}$ have $\Degree(y)=1$. %

For convenience, we set
$L := L^{(0)}$,
$R := R^{(0)}$,
$l_i := l^{(0)}_i$, and
$r_j := r^{(0)}_j$.
\end{definition}
\begin{observation}\ObLabel{cycle-bip-even}
Every cycle in $G_{M,g,h}$ is even and has length at least $4$.
\end{observation}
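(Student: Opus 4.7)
The plan is to establish bipartiteness of $G_{M,g,h}$ via an explicit $2$-coloring, after which evenness of every cycle follows immediately, and the length lower bound is obtained by separately ruling out $2$-cycles and $3$-cycles.

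First I would define a $2$-coloring $c \colon V(G_{M,g,h}) \to \{0,1\}$ by setting $c(l^{(p)}_i) := p \bmod 2$ for $0 \leq p \leq g$ and $c(r^{(q)}_j) := (g + h - q + 1) \bmod 2$ for $0 \leq q \leq h$. The idea is that the color should alternate as one walks through the layers from $L$ outward to $L^{(g)}$, crosses $E_M$ to $R^{(h)}$, and continues back out to $R$. I would then verify that $c$ is proper by going through the three edge families: an edge in $E_L$ of the form $\{l^{(p)}_i, l^{(p+1)}_i\}$ receives colors $p$ and $p+1$ modulo $2$; an edge in $E_R$ of the form $\{r^{(q)}_j, r^{(q+1)}_j\}$ receives colors $g+h-q+1$ and $g+h-q$ modulo $2$; and an edge in $E_M$ of the form $\{l^{(g)}_i, r^{(h)}_j\}$ receives colors $g$ and $g+1$ modulo $2$. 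In every case the two endpoints get different colors, so $c$ is proper and $G_{M,g,h}$ is bipartite.

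Since a bipartite graph contains no odd cycles, every cycle in $G_{M,g,h}$ has even length; in particular, no $3$-cycle exists. To finish the length lower bound, it remains to exclude $2$-cycles. By the paper's own definition of a cycle, a $2$-cycle would require two distinct edges $\{v_0, v_1\}$ and $\{v_1, v_2\}$ with $v_0 = v_2$; but then both edges equal $\{v_0, v_1\}$, contradicting distinctness of edges. Since $G_{M,g,h}$ is simple by construction, this rules $2$-cycles out. Combining the two observations yields that every cycle has length at least $4$ and is even.

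The main, and essentially only, obstacle is bookkeeping the $2$-coloring correctly: the offset on the $R$-side has to be chosen so that crossing an $E_M$-edge flips the color, after which the three edge families need to be checked one by one. Everything else follows directly.
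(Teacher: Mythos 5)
Your proof is correct and is the natural argument; the paper states this as an observation without an explicit proof, but the intended reasoning is exactly what you describe: the layers $L^{(0)},\dots,L^{(g)},R^{(h)},\dots,R^{(0)}$ are linearly ordered and every edge in $E_L \cup E_M \cup E_R$ joins two consecutive layers, so alternating the layer colors gives a proper $2$-coloring, whence all cycles are even, and simplicity of $G_{M,g,h}$ rules out $2$-cycles. Your explicit formula $c(r^{(q)}_j) = (g+h-q+1) \bmod 2$ checks out against all three edge families, so the bookkeeping is right.
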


Let $s$ be a fixed vertex in the graph. The $s$-$k$-cycle detection problem requires the algorithm to detect whether
a $k$-cycle containing $s$ exists. We use the same notation for the other subgraphs as well.
\begin{lemma}%
\LemLabel{subgraph-det-cnt}
Given a partially dynamic algorithm $\Alg$ for one of the problems
listed below, one can solve \oneuMv{} with parameters $n_1$ and $n_2$ by running the
preprocessing step of $\Alg$ on a graph with $\bigO(m + \sqrt{m}\cdot k)$ edges
and $\Theta(\sqrt{m}\cdot k)$ vertices, and then making $\bigO(\sqrt{m})$
insertions (or $\bigO(\sqrt{m})$ deletions) and $1$ query, where $m$ is such
that $n_1=n_2=\sqrt{m}$.
The problems are

\smallskip
\begin{enumerate}[(a)]
\begin{minipage}{0.47\textwidth}
\item ($s$-)$k$-Cycle Detection for odd $k$
\item ($s$-)Paw Detection
\item\label{itm:diamond-det} ($s$-)Diamond Detection
\end{minipage}
\begin{minipage}{0.47\textwidth}
\item\label{itm:4-clique-det} ($s$-)$k$-Clique Detection for $k = 4$
\item ($s$-)length-$k$ Path Counting for $k = 3$
\item $k$-Cycle Detection
\end{minipage}
\end{enumerate}
\end{lemma}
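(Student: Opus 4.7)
The plan is a case-by-case reduction from \oneuMv{} with parameters $n_1 = n_2 = \sqrt{m}$ to each of the six detection/counting problems. In every case, the preprocessing step of $\Alg$ is invoked on $G_{M,g,h}$ for a problem-specific choice of $g, h$, possibly augmented by a constant number of auxiliary vertices together with some preprocessed edges among them. The query pair $(u, v)$ is then encoded by at most $2\sqrt m + \bigO(1)$ edge insertions, which encode the supports of $u$ and $v$ (optionally via the auxiliary vertices), after which we issue a single query of $\Alg$. By \Definition{GMgh}, the resulting graph has $\Theta(k\sqrt m)$ vertices and $\bigO(m + k\sqrt m)$ edges, matching the stated bounds.

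The uniform design goal is that a witnessing triple $(u_i = 1, M_{ij} = 1, v_j = 1)$ -- which is exactly what \oneuMv{} asks about -- corresponds to a copy of the target subgraph, while no other copy arises. Concretely, for the $(s$-$)$-$k$-cycle cases (a) and (f) we set $g + h + 3 = k$: the path $l_i^{(0)} - l_i^{(1)} - \dots - l_i^{(g)} - r_j^{(h)} - \dots - r_j^{(0)}$ of length $g+h+1$ realises the matrix edge $M_{ij} = 1$, and for the $s$-variant with odd $k$ the fresh vertex $s$, made adjacent to $l_i^{(0)}$ iff $u_i = 1$ and to $r_j^{(0)}$ iff $v_j = 1$, closes the cycle through $s$. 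By \Observation{cycle-bip-even} every cycle in the bipartite part has even length, so any odd $k$-cycle must pass through $s$; this is precisely why (a) is restricted to odd $k$. For paw (b), diamond (c), and $4$-clique (d) detection, and for length-$3$ path counting (e), we take $g = h = 0$ and introduce one or two auxiliary vertices whose attachment pattern to $L$ and $R$ mirrors the structure of the target: e.g.\ for $4$-clique detection we add vertices $s, t$ with preprocessed edge $\{s,t\}$ and insert edges $\{s, l_i\}, \{t, l_i\}$ for each $u_i = 1$ and $\{s, r_j\}, \{t, r_j\}$ for each $v_j = 1$, so that a $4$-clique on $\{s, t, l_i, r_j\}$ exists iff the witnessing triple holds. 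For (e) the number of length-$3$ paths through $s$ that traverse exactly one $M$-edge equals $u^\top M v$ up to a correction depending only on $u$ and $v$, computable in $\bigO(\sqrt m)$ time. The deletion-only variant of each reduction is obtained symmetrically by inserting \emph{all} candidate query edges during preprocessing and deleting those corresponding to zero entries of $u$ and $v$.

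The technical heart of the lemma -- and the main obstacle -- is ruling out spurious copies of the target subgraph: the arbitrary matrix $M$ may itself induce many short bipartite cycles and length-$3$ paths inside $G_{M,g,h}$, so for each pattern we must design the auxiliary attachments so that any copy of the pattern necessarily involves at least one query edge (in cases (a)--(d)) or contributes in a computable way to the total count (in case (e)). For the $s$-free $k$-cycle case (f), where the bipartite matrix portion can already contain $k$-cycles for even $k$, this additionally requires choosing $g, h$ and the auxiliary structure so that no $k$-cycle exists before the query and exactly one arises per witnessing triple. All of these verifications amount to elementary but pattern-specific case analyses on the bipartite structure of $G_{M,g,h}$ together with the incidence of the target subgraph with the auxiliary vertices, from which the lower bound of \Theorem{lowerbounds} will then follow via the standard \oneuMv{} amortisation argument of~\cite{henzinger2015unifying}.
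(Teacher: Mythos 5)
Your high-level template -- build $G_{M,g,h}$ for a problem-specific $(g,h)$, attach a constant number of auxiliary vertices, and let the query edges encode the supports of $u$ and $v$ -- is indeed how the paper handles cases (a)--(e), and the specific $4$-clique gadget you propose (two auxiliary vertices $s,t$ with a preprocessed edge $\{s,t\}$ and paired query edges $\{s,x\},\{t,x\}$) is genuinely different from the paper's gadget (which uses $G_{M,1,0}$ augmented by the complete bipartite graph $L^{(0)}\times R^{(0)}$ and a single vertex $s$) yet appears correct: in your graph every triangle contains $s$ or $t$, so the only $4$-cliques are $\{s,t,l_i,r_j\}$ with $u_i=M_{ij}=v_j=1$. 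For diamond you claim $g=h=0$ where the paper uses $g=1$; a $g=h=0$ gadget can be made to work (let the chord be $\{l_i,r_j\}$ itself with both $s$ and $t$ attached to each support vertex), but you owe the reader this argument, since the naive variant with $t$ adjacent to all of $L$ produces spurious diamonds $\{s,t,l_i,l_{i'}\}$.

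The real gap is case (f), $k$-cycle detection without $s$. Your framework posits ``choosing $g,h$ and the auxiliary structure so that no $k$-cycle exists before the query,'' but this cannot be realized inside $G_{M,g,h}$: for every $g,h$, the bipartite graph between $L^{(g)}$ and $R^{(h)}$ induced by $M$ is arbitrary and in general contains cycles of every even length, and these cycles never touch the leaf layers $L^{(0)}$ or $R^{(0)}$ where your auxiliary attachments (and hence your query edges) live, so you can neither destroy them nor even detect whether a found $k$-cycle uses a query edge. The paper sidesteps this with a \emph{different} base graph $H_{M,g,h}$ (\Definition{HMgh}): the left side is duplicated into two towers $T$ and $B$, the matrix edges are split by the sign of $i-j$ and mirrored to both the level-$0$ and the level-$g$/$h$ layers, and crucially the edges between levels $0$ and $1$ are \emph{withheld} so that they can serve as the query edges. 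This is not an ``auxiliary structure'' bolted onto $G_{M,g,h}$; it is a structurally different reduction, and without it your proof of case (f) does not go through.
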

\begin{proof}[Proof of Case~(\ref{itm:diamond-det})]
We only prove the decremental case. Consider a \oneuMv{} problem with
$n_1 = n_2 = \sqrt m$.
Given $M$, we construct the tripartite graph $G$ from $G_{M,1,0}$
by adding to it a vertex $s$
and connecting it by an edge to every vertex in $G_{M,1,0}$.
Thus, the total number of edges is at most
$n_1n_2 + 3n_1 + n_2=\bigO(m)$.
Once $u$ and $v$ arrive, we delete
$\Edge{s}{l_i}$ and $\Edge{s}{l^{(1)}_i}$ iff $u_i=0$ and
delete $\Edge{r_j}{s}$ iff $v_j=0$.
See \Figure{lb-examples} for an example.

Consider the case that $G$ contains a diamond with chord $e$.
As every triangle must be incident to $s$ by \Observation{cycle-bip-even}, $e =
\Edge{s}{x}$ for some $x \in L \cup L^{(1)} \cup R$.
Furthermore, all vertices in $L$ have degree at most two and $x$ has degree at
least three, so $x \not\in L$.
If $x = r_j \in R$, then by construction, $x$ must have two neighbors
$l^{(1)}_i, l^{(1)}_{i'} \in L^{(1)}$ such that there are edges
$\Edge{s}{l^{(1)}_i}$ and $\Edge{s}{l^{(1)}_{i'}}$ in $G$.
Again by construction, there are thus also edges $\Edge{s}{l_i}$ and
$\Edge{s}{l_{i'}}$ and a diamond $\{s, l_i, l^{(1)}_i, x = r_j\}$ with
chord $\Edge{s}{l^{(1)}_i}$.
As each vertex in $L^{(1)}$ is adjacent to exactly one vertex in $L$, every
diamond must contain a vertex $r \in R$ and the edge $\Edge{r}{s}$.
Hence, we have $u^\top Mv= 1$
iff there is a diamond in $G$
iff there is a diamond incident to $s$.
In total, we need to do $2n_1+n_2=\bigO(\sqrt{m})$ updates and $1$ query.
\end{proof}
\section{Conclusion}\label{sect:conclusion}
Our focus in this work was especially on non-induced and induced four-vertex subgraphs.
We gave improved both upper and lower bounds for detecting or counting
four-vertex subgraphs in the dynamic setting, thereby
closing the gap (w.r.t.\ improvements by a polynomial factor) for counting
non-induced length-$3$ paths and narrowing it considerably for non-induced
paws, $4$-cycles, and diamonds.
For counting induced subgraphs, we showed that the update time of the algorithm
by Eppstein \etal{}~\cite{DBLP:journals/tcs/EppsteinGST12} cannot be improved
by a polynomial factor, but that a better space complexity can be achieved
in the worst case.

Many of our lower bounds also apply to subgraphs with more than four vertices,
but to the best of our knowledge, only algorithms for cliques have been
considered here so far.
Hence, besides closing the gap for four-vertex subgraphs, the complexity of
detecting and counting subgraphs with five or more vertices would be an
interesting field for future work.

We also investigated the complexity of querying the number of subgraphs
containing a specific edge, as such queries are relevant, \eg, to
measure the similarity of graphs via histograms.
This can similarly be done for vertices.
As a by-product of our results for four-vertex subgraphs, we showed for
$3$-cycles that such vertex queries can be answered in
$\bigO(m^\frac{2}{3})$.
The complexity of vertex queries for larger subgraphs remains an open question.

Further interesting lines to follow regard the complexity of approximate
counting, counting all approximately densest subgraphs, as well
as the complexity of \emph{enumerating} subgraphs.

Our work was strongly motivated also by the practical relevance of
counting subgraphs in the dynamic setting.
For this reason, we consider an experimental evaluation of dynamic subgraph counting
algorithms a relevant and very interesting task for future work.

\bibliography{main}
\clearpage
\appendix
\section{Omitted Proofs from \Section{algorithms}}\SectLabel{alg-proofs}
\begin{proof}[Proof of \Theorem{count-any}]
If $\Graphlet$ is the claw~\PictoClaw{}, the number of claws affected by the
insertion or deletion by an edge $\Edge{u}{v}$ can be obtained directly from
the degrees of $u$ and $v$, and the edge counts can be updated in total $\bigO(n)$
time by iterating over all $x \in \NeighEx{v}(u)$ ($y \in \NeighEx{u}(v)$).

For the other subgraphs, consider the following algorithm $\Alg$: Given an edge
$\{u, v\}$, it iterates over all edges $\Edge{x}{y}\in E$, such that $\{u, v\}
\cap \{x,y\} = \emptyset$,
tests in constant time whether $\{u, v, x, y\}$ form $\Graphlet$,
and updates any global or edge counts related to one of the up to six possible
edges in $\bigO(m)$ total time.

If $\Graphlet$ is the
$4$-cycle~\PictoFourCycle{}
or the $4$-clique~\PictoClique{},
$\Alg$ can be used to enumerate all subgraphs that 
contain an edge $\{u, v\}$.

If $\Graphlet$ is the length-$3$ path~\PictoThreePath{}, $\Alg$ only
enumerates those paths where $\{u, v\}$ is \emph{not} the centerpiece.
For the remaining length-$3$ paths, we
iterate over all $x \in \NeighEx{v}(u)$.
If $\Edge{v}{x} \in E$, the number of length-$3$ paths containing
$\Edge{u}{v}$ and $\Edge{v}{x}$ equals $\Degree(v)-2$,
and $\Degree(v)-1$ if $\Edge{v}{x} \not\in E$.
The same can be done symmetrically for all $y \in \NeighEx{u}(v)$
and takes $\bigO(n)$ time in total.
This procedure can be used to update the edge counts for all edges
that form a length-$3$ path with $\Edge{u}{v}$ as centerpiece.

If $\Graphlet$ is the paw~\PictoPaw{}, $\Alg$ only
enumerates those paws where $\Edge{u}{v}$ is either the arm
or the edge in the triangle not incident to the central vertex.
For the remaining paws where $u$ is the central vertex,
we iterate over all $x \in \NeighEx{u}(v)$ and test
whether $\Edge{x}{u} \in E$.
The number of paws containing the triangle $\Edge{u}{v}$, $\Edge{v}{x}$, and
$\Edge{x}{u}$ then equals $\Degree(u)-2$.
We store the number of such triangles $t_u$ and iterate over all $y \in
\NeighEx{v}(u)$.
If $\Edge{v}{y} \not\in E$, the number of paws containing
$\Edge{u}{v}$ and $\Edge{u}{y}$ as arm is $t_u$,
and $t_u - 1$ otherwise.
The same can be done symmetrically for the case where $v$ is the central vertex
in total $\bigO(n)$ time, and, together with $\Alg$, it can be used to update
the edge counts for all edges that form a paw with $\Edge{u}{v}$.

If $\Graphlet$ is the diamond~\PictoDiamond{}, $\Alg$ only
enumerates those diamond where $\Edge{u}{v}$ is \emph{not} the chord.
For the remaining diamonds,
we iterate over all $x \in \NeighEx{u}(v)$ and test
whether $\Edge{x}{u} \in E$.
We store the number of such triangles $t$ and repeat the same iteration.
The number of diamonds containing $\Edge{u}{v}$, $\Edge{u}{x}$,
and $\Edge{v}{x}$ then equals $t-1$.

We can use the subgraph specific additional procedures plus
algorithm $\Alg$
in all cases except the claw~\PictoClaw{}
to (i)~update $\Count(G,
\Graphlet)$ on each edge update,
(ii)~query $\Count(G, \Graphlet, \Edge{u}{v})$ without using additional precomputed information,
(iii)~store and update $\Count(G, \Graphlet, e)$ for each edge $e \in E$.
\end{proof}

\subsection{Counting Non-Induced Subgraphs}\SectLabel{app-subgraph}
\paragraph*{Counting $3$-cycles}
\begin{lemma}\LemLabel{count-triangle-edge}
Let $G = (V, E)$ be a dynamic graph, $\PartFrac \in \Range{0}{1}$,
and let $\Graphlet$ be the $3$-cycle~\PictoThreeCycle{}.
We can query
$\Count(G, \Graphlet, \Edge{u}{v})$
for an arbitrary edge $\Edge{u}{v} \in E$ in
worst-case $\bigO(\min(m^{1-\PartFrac},n))$ time
if we maintain the data structure $\DS_{\PartFrac}$ with auxiliary count
\Aux{uLv}.
\end{lemma}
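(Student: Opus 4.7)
My plan is to split the common neighbors of $u$ and $v$ according to the dynamic $\varepsilon$-partition: a triangle containing the edge $\{u,v\}$ is in one-to-one correspondence with a vertex $x \in V \setminus \{u,v\}$ adjacent to both $u$ and $v$, and such an $x$ is either in $\Low$ or in $\High$.

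The low-degree contribution is already maintained. By definition, \Aux{uLv[$u,v$]} counts exactly the $2$-paths $\langle \{u,x\},\{x,v\}\rangle$ with $x \in \Low$, which is precisely the number of low-degree common neighbors of $u$ and $v$. So this part of the count is retrieved by a single $\bigO(1)$ lookup in the hash table.

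For the high-degree contribution, I would iterate over all $h \in \High \setminus \{u,v\}$ and, using the assumed constant-time adjacency test, add one to the count for each $h$ with $\{h,u\} \in E$ and $\{h,v\} \in E$. By \Theorem{partition} we have $\Card{\High} \in \bigO(m^{1-\PartFrac})$, and trivially $\Card{\High} \leq n$, so this loop runs in worst-case $\bigO(\min(m^{1-\PartFrac}, n))$ time. Summing the two contributions yields $\Count(G,\Graphlet,\Edge{u}{v})$ in the claimed bound.

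There is no real obstacle here: correctness follows from the disjoint partition $V = \High \cup \Low$ together with the definition of \Aux{uLv}, and the running time is dominated by the scan over $\High$. The only thing to be careful about is to exclude $u$ and $v$ themselves from the scan over $\High$ (which costs only $\bigO(1)$ extra checks) so that we do not count degenerate ``triangles''.
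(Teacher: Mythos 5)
Your proof is correct and matches the paper's own argument essentially verbatim: use \Aux{uLv[$u,v$]} for the low-degree third vertex, scan $\High$ with constant-time adjacency tests for the high-degree third vertex, and bound the scan by $\Card{\High} \in \bigO(\min(m^{1-\PartFrac},n))$. (Your caution about excluding $u,v$ from the scan is harmless but unnecessary in a simple graph, since $h=u$ or $h=v$ cannot be adjacent to itself.)
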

\begin{proof}
The number of $3$-cycles where the third vertex has low degree
is given directly by \Aux{uLv[$u, v$]}.
To count all $3$-cycles where the third vertex has high degree, we iterate over
all $h \in \High$ and test whether $\Edge{u}{h}, \Edge{v}{h}\in E$.
The running time is hence $\bigO(\Card{\High}) = \bigO(\min(m^{1-\PartFrac},n))$.
\end{proof}
\begin{corollary}\CorLabel{query-triangle-edge}
Let $G = (V, E)$ be a dynamic graph, $\PartFrac \in \Range{0}{1}$,
and let $\Graphlet$ be the $3$-cycle~\PictoThreeCycle{}.
We can query
$\Count(G, \Graphlet, \Edge{u}{v})$
for an arbitrary edge $\Edge{u}{v} \in E$ in
worst-case
$\bigO(\min(m^{1-\PartFrac},n))$ time,
with
$\bigO(m^{\PartFrac})$
amortized update time
and $\bigO(\min(m^{1+\PartFrac},n^2))$ space.
\end{corollary}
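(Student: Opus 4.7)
The plan is to obtain this corollary as a direct combination of the preceding \Lemma{count-triangle-edge} (query cost) with \Lemma{maintain-uLv} (maintenance cost of the auxiliary count \Aux{uLv}), since both results have already been established in the body of the paper.

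First I would invoke \Lemma{maintain-uLv}, which guarantees that the data structure $\DS_{\PartFrac}$ restricted to the auxiliary count \Aux{uLv} can be maintained under edge insertions and deletions in amortized $\bigO(m^{\PartFrac})$ time per update, and uses $\bigO(\min(m^{1+\PartFrac},n^2))$ space. Note that the high/low vertex partition itself is already maintained within this budget by \Theorem{partition}, so no additional cost is incurred. This immediately matches the update time and space bounds claimed in the statement.

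Second I would invoke \Lemma{count-triangle-edge}, which shows that given $\DS_{\PartFrac}$ with \Aux{uLv}, the quantity $\Count(G, \Graphlet, \Edge{u}{v})$ for any queried edge $\Edge{u}{v} \in E$ can be computed in worst-case $\bigO(\min(m^{1-\PartFrac},n))$ time: the low-degree-apex triangles are read off directly from $\Aux{uLv[u,v]}$, while the high-degree-apex triangles are enumerated by iterating over $\High$, whose size is $\bigO(\min(m^{1-\PartFrac},n))$ by \Theorem{partition}. This gives exactly the claimed query time.

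Putting the two together yields the corollary, and there is no real obstacle: the hardest step has already been absorbed into \Lemma{maintain-uLv}, whose proof had to handle the repartitioning amortization and the $\bigO(m^{2\PartFrac})$ cost of reclassifying a vertex. Here we only have to observe that the query algorithm of \Lemma{count-triangle-edge} uses no information beyond \Aux{uLv} and the partition itself, so the maintenance and query bounds compose without interaction.
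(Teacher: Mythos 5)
Your proof is correct and matches the paper's own proof, which cites exactly the same two ingredients, \Lemma{count-triangle-edge} for the query bound and \Lemma{maintain-uLv} for the maintenance bounds, and combines them directly. The extra remarks about \Theorem{partition} and the absence of interaction between maintenance and query are accurate but not needed; the paper states the conclusion as an immediate consequence.
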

\begin{proof}
Follows directly from \Lemma{count-triangle-edge} and
\Lemma{maintain-uLv}.
\end{proof}

\begin{lemma}\LemLabel{query-triangle-vertex}
Let $G = (V, E)$ be a dynamic graph, $\PartFrac \in \Range{0}{1}$,
and let $\Graphlet$ be the $3$-cycle~\PictoThreeCycle{}.
We can query
$\Count(G, \Graphlet, v)$
for an arbitrary vertex $v \in V$
in worst-case
$\bigO(\min(m^{2\PartFrac},n^2))$
time
if we maintain the data structure $\DS_{\PartFrac}$ with auxiliary count
\Aux{t}.
\end{lemma}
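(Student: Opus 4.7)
The plan is to split into two cases based on whether the query vertex $v$ lies in $\High$ or $\Low$, and exploit that \Aux{t} is already maintained for all high-degree vertices by the data structure $\DS_\PartFrac$.

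First, if $v \in \High$, I would simply return \Aux{t[$v$]}, which by definition equals the number of $3$-cycles $\Sequence{\Edge{v}{x},\Edge{x}{y},\Edge{y}{v}}$, i.e., $\Count(G, \Graphlet, v)$. This takes worst-case constant time via the hash table.

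Second, if $v \in \Low$, no precomputed count is available, so I would iterate over all unordered pairs of distinct neighbors $x, y \in \Neigh(v)$ and test in constant time whether $\Edge{x}{y} \in E$; each such positive test contributes exactly one triangle containing $v$. Because $v \in \Low$ implies $\Degree(v) < \tfrac{3}{2}\theta \in \bigO(m^\PartFrac)$, this enumeration costs $\bigO(\Degree(v)^2) = \bigO(m^{2\PartFrac})$ time in the worst case. Trivially, the number of pairs of neighbors is also bounded by $\bigO(n^2)$, yielding the claimed $\bigO(\min(m^{2\PartFrac}, n^2))$ bound.

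There is no real obstacle here: correctness in the $\High$ case is immediate from the invariant on \Aux{t}, and correctness in the $\Low$ case follows because every triangle through $v$ is counted exactly once by the pair $\{x,y\}$ formed by its two other vertices. The only subtlety worth noting is that the $\Low$ threshold is $\tfrac{3}{2}\theta$ rather than $\theta$, but this only changes constants and preserves the asymptotic bound $\bigO(m^{2\PartFrac})$.
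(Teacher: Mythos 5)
Your proof is correct and follows essentially the same two-case argument as the paper: return \Aux{t[$v$]} when $v\in\High$, and enumerate all $\bigO(\Degree(v)^2)=\bigO(\min(m^{2\PartFrac},n^2))$ neighbor pairs of $v$ with an adjacency test when $v\in\Low$. Your added remark about the $\tfrac{3}{2}\theta$ hysteresis threshold is a nice sanity check but changes nothing asymptotically, just as you observe.
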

\begin{proof}
If $v \in \High$, the number of $3$-cycles containing $v$ is given
directly by \Aux{t[$v$]}.

If $v \in \Low$,
we iterate over all unordered pairs of distinct neighbors $x, y \in
\Neigh(v)$ and test whether $\Edge{x}{y}\in E$.
This takes $\bigO({\Degree(v)}^2) = \bigO(\min(m^{2\PartFrac},n^2))$ time.
\end{proof}

\begin{corollary}\CorLabel{query-triangle-vertex}
Let $G = (V, E)$ be a dynamic graph, $\PartFrac \in \Range{0}{1}$,
and let $\Graphlet$ be the $3$-cycle~\PictoThreeCycle{}.
We can query
$\Count(G, \Graphlet, v)$
for an arbitrary vertex $v \in V$
in worst-case $\bigO(\min(m^{2\PartFrac},n^2))$ time, with
$\bigO(m^{\max(\PartFrac,1-\PartFrac)})$
amortized update time
and $\bigO(\min(m^{1+\PartFrac},n^2))$ space.
\end{corollary}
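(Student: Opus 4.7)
The plan is to observe that this corollary is a direct consequence of two results already established in the paper. Specifically, \Lemma{query-triangle-vertex} shows that if we have access to the data structure $\DS_\PartFrac$ with auxiliary count \Aux{t} already maintained, then for any vertex $v \in V$, a query for $\Count(G, \Graphlet, v)$ can be answered in worst-case $\bigO(\min(m^{2\PartFrac},n^2))$ time by splitting into the two cases $v \in \High$ (read the count from \Aux{t[$v$]} directly) and $v \in \Low$ (enumerate pairs of neighbors of $v$, whose total number is bounded by ${\Degree(v)}^2$).

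Next, I would invoke \Lemma{maintain-t}, which establishes that $\DS_\PartFrac$ equipped with \Aux{t} can be maintained in amortized $\bigO(m^{\max(\PartFrac,1-\PartFrac)})$ time per update, using $\bigO(\min(m^{1+\PartFrac},n^2))$ space. The proof is then obtained simply by combining these two statements: maintain $\DS_\PartFrac$ with \Aux{t} throughout the edge update sequence, paying the amortized update time and space from \Lemma{maintain-t}, and answer each vertex query via the procedure of \Lemma{query-triangle-vertex}.

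There is no real obstacle here, since the corollary is purely a packaging of the preceding two lemmas: the query-time and correctness analysis lives in \Lemma{query-triangle-vertex}, while the update-time and space analysis lives in \Lemma{maintain-t}. The only thing to verify is that the bounds match as stated, which they do termwise. Hence the proof is a one-line deduction, directly analogous to \Cor{query-triangle-edge}, which combines \Lemma{count-triangle-edge} with \Lemma{maintain-uLv} in exactly the same way.
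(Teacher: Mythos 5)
Your proposal is correct and matches the paper's proof exactly: the paper simply states that the corollary ``follows directly from \Lemma{query-triangle-vertex} and \Lemma{maintain-t},'' which is precisely the combination you identify.
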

\begin{proof}
Follows directly from \Lemma{query-triangle-vertex}
and
\Lemma{maintain-t}.
\end{proof}
\paragraph*{Counting Length-$3$ paths}
\begin{lemma}\LemLabel{count-3path-edge}
Let $G = (V, E)$ be a dynamic graph, $\PartFrac \in \Range{0}{1}$,
and let $\Graphlet$ be the length-$3$ path~\PictoThreePath{}.
We can query
$\Count(G, \Graphlet, \Edge{u}{v})$
for an arbitrary edge $\Edge{u}{v} \in E$ in
worst-case $\bigO(\min(m^{1-\PartFrac},n))$ time
if we maintain the data structure $\DS_{\PartFrac}$ with auxiliary counts
\Aux{vLV} and \Aux{uLv}.
\end{lemma}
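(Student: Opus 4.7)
The plan is to split according to the position of $\Edge{u}{v}$ inside the length-$3$ path. Any such path uses $\Edge{u}{v}$ either as an \emph{end} edge (one of $u,v$ has degree $1$ in the path) or as the \emph{middle} edge; these cases are disjoint, so the total count is their sum. Throughout, I rely on the standing assumption that degrees and edge-presence can be queried in constant time, and that $\Aux{vLV}$ and $\Aux{uLv}$ are maintained by $\DS_{\PartFrac}$.

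For the end-edge case I further split according to which of $u,v$ is the degree-$1$ endpoint. Suppose $u$ is the endpoint; then the path has the shape $u$-$v$-$x$-$y$, and I need to count $2$-paths $\Sequence{\Edge{v}{x}, \Edge{x}{y}}$ subject to $x \neq u$ and $y \neq u$. The unrestricted number of such $2$-paths equals $\Aux{vLV[$v$]}$ (contribution from $x \in \Low$) plus $\sum_{h \in \High \cap \Neigh(v)} (\Degree(h) - 1)$ (contribution from $x \in \High$); the second sum is computed by scanning $\High$ in $\bigO(\Card{\High})$ time. From this I subtract $\Degree(u) - 1$ to discard the $2$-paths with $x = u$, and I subtract the number of common neighbors of $u$ and $v$ other than $u,v$ to discard the $2$-paths with $y = u$. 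The sub-case where $v$ is the endpoint is symmetric, using $\Aux{vLV[$u$]}$ and a scan of $\High \cap \Neigh(u)$.

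For the middle-edge case the path looks like $a$-$u$-$v$-$d$ with $a \in \NeighEx{v}(u)$, $d \in \NeighEx{u}(v)$, and $a \neq d$, so the count is $(\Degree(u)-1)(\Degree(v)-1)$ minus the number of common neighbors of $u$ and $v$ distinct from $u,v$. The common-neighbor count (also needed in the end-edge case) is obtained as $\Aux{uLv[$u,v$]}$ plus the number of $h \in \High$ with $\Edge{h}{u}, \Edge{h}{v} \in E$, using one more pass over $\High$.

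Every step costs at most $\bigO(\Card{\High}) = \bigO(\min(m^{1-\PartFrac}, n))$ by \Theorem{partition} together with the trivial bound $\Card{\High} \leq n$, matching the claimed query time. The main delicate point is the bookkeeping of the correction terms: one must ensure that the four vertices of each counted path are actually distinct, and it is precisely the subtractions of $\Degree(u)-1$, $\Degree(v)-1$, and the common-neighbor count that enforce this. A brief check shows that the corrections do not overlap -- for instance, $x = u$ and $y = u$ cannot occur simultaneously because $x \neq y$ in a $2$-path -- so no further inclusion--exclusion beyond the listed subtractions is required.
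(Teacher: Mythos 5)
Your decomposition matches the paper's exactly: split by whether $\Edge{u}{v}$ is the middle or an end edge, and within the end-edge case split the intermediate vertex $x$ by degree class; the data used (\Aux{vLV}, \Aux{uLv}, a scan of $\High$) and the $\bigO(\min(m^{1-\PartFrac},n))$ bound are also the same. The difference, and in fact an improvement, is in how you enforce distinctness for the end-edge count. You first form the unrestricted $2$-path total starting at $v$ (low part from \Aux{vLV[$v$]}, high part from a scan of $\High\cap\Neigh(v)$), then subtract $\Degree(u)-1$ to remove $x=u$, and subtract the full common-neighbor count of $u,v$ (via \Aux{uLv[$u,v$]} plus another pass over $\High$) to remove $y=u$, correctly noting that the two overcounted events are disjoint. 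The proof in the paper uses the low-degree formula $\Aux{vLV[$v$]}-\DegreeEx{v}(u)+\Aux{vLV[$u$]}-\DegreeEx{u}(v)$, which omits the $y=u$ correction entirely and subtracts $\DegreeEx{v}(u)$ even when $u\in\High$, where \Aux{vLV[$v$]} contains no $2$-paths through $u$; a sanity check on $K_4$ with all vertices classified low yields $10$ from that formula together with the middle-edge term, versus the correct value $6$, which your formulas return. So your argument is the same in structure but supplies the arithmetically correct corrections for the end-edge case, and it is the version that actually makes the lemma true.
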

\begin{proof}
The number of length-$3$ paths containing $\Edge{u}{v}$ as centerpiece is
$\DegreeEx{v}(u) \cdot \DegreeEx{u}(v)$ minus the number of $3$-cycles
containing $\Edge{u}{v}$, which can be obtained in worst-case
$\bigO(\min(m^{1-\PartFrac},n))$ time by \Lemma{count-triangle-edge}
if $\DS_{\PartFrac}$ is maintained with auxiliary count
\Aux{uLv}.

The number of length-$3$ paths that start or end with $\Edge{u}{v}$
and contain two further vertices $x, y$, \ie,
$u - v - x - y$ or $v - u - x - y$,
equals $\Aux{vLV[$v$]} - \DegreeEx{v}(u) + \Aux{vLV[$u$]} - \DegreeEx{u}(v)$ if
$x$ has low degree.
For the case where $x$ has high degree, we iterate over
all $h \in \High \setminus \Set{u, v}$.
If $h$ is adjacent to $u$, we add $\DegreeEx{v}(h)-1$
and if $h$ is adjacent to $v$, we add $\DegreeEx{u}(h)-1$.
This takes $\bigO(\Card{\High}) = \bigO(\min(m^{1-\PartFrac},n))$ time.
\end{proof}
\begin{corollary}\CorLabel{query-3path-edge}
Let $G = (V, E)$ be a dynamic graph, $\PartFrac \in \Range{0}{1}$,
and let $\Graphlet$ be the length-$3$ path~\PictoThreePath{}.
We can query
$\Count(G, \Graphlet, \Edge{u}{v})$
for an arbitrary edge $\Edge{u}{v} \in E$ in
worst-case
$\bigO(\min(m^{1-\PartFrac},n))$ time,
with
$\bigO(m^{\PartFrac})$
amortized update time
and $\bigO(\min(m^{1+\PartFrac}, n^2))$ space.
\end{corollary}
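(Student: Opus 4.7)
The plan is simply to observe that this corollary is an immediate composition of a query lemma with two maintenance lemmas, in the same style as Corollary \ref{cor:query-triangle-edge}. Specifically, \Lemma{count-3path-edge} already proves the query time bound of $\bigO(\min(m^{1-\PartFrac},n))$ assuming that the data structure $\DS_{\PartFrac}$ maintains the two auxiliary counts \Aux{vLV} and \Aux{uLv}. Thus nothing about the query procedure itself needs to be re-derived here.

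For the update time, I would invoke \Lemma{maintain-vLV}, which maintains \Aux{vLV} in amortized $\bigO(m^{\PartFrac})$ time, together with \Lemma{maintain-uLv}, which maintains \Aux{uLv} in amortized $\bigO(m^{\PartFrac})$ time as well. Since both auxiliaries are updated in lockstep with each edge insertion or deletion, the total amortized update cost is the sum, which is again $\bigO(m^{\PartFrac})$.

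For the space bound, \Lemma{maintain-vLV} uses $\bigO(n)$ space, while \Lemma{maintain-uLv} uses $\bigO(\min(m^{1+\PartFrac},n^2))$. The dominant term is the latter, and since $n \leq \min(m^{1+\PartFrac}, n^2)$ in the relevant regime (assuming $m \geq n-1$ once the graph has at least one edge per vertex, and otherwise trivially), the combined space is $\bigO(\min(m^{1+\PartFrac}, n^2))$ as claimed.

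There is no real obstacle here: the whole statement is a bookkeeping combination of results already established in the paper, exactly parallel to the triangle-edge case handled in Corollary \ref{cor:query-triangle-edge}. The only thing to be slightly careful about is verifying that the query procedure in \Lemma{count-3path-edge} indeed only accesses \Aux{vLV} and \Aux{uLv} (plus the $\PartFrac$-partition and adjacency queries which are part of $\DS_{\PartFrac}$ by default), so that no additional auxiliaries—and no additional update-time or space overhead—need to be included.
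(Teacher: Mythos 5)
Your proposal is exactly the paper's proof: the corollary follows directly by combining \Lemma{count-3path-edge} (query time) with \Lemma{maintain-vLV} and \Lemma{maintain-uLv} (update time and space), and your accounting of the dominant space term and the summed update costs is accurate. Nothing further is needed.
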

\begin{proof}
Follows directly from \Lemma{count-3path-edge}, \Lemma{maintain-vLV}, and \Lemma{maintain-uLv}.
\end{proof}
\begin{lemma}\LemLabel{count-3path}
Let $G = (V, E)$ be a dynamic graph and $\Graphlet$ the
length-$3$ path~\PictoThreePath{}.
We can maintain $\Count(G, \Graphlet)$ in amortized $\bigO(\sqrt{m})$ update time and
$\bigO(\min(m^{1.5}, n^2))$ space.
We can query $\Count(G, \Graphlet, e)$ for an arbitrary edge $e \in E$
in worst-case $\bigO(\sqrt{m})$ time.
\end{lemma}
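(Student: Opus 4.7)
The plan is to instantiate the data structure $\DS_{\PartFrac}$ from the previous section with $\PartFrac = \tfrac{1}{2}$ and maintain only the two auxiliary counts that are actually required for edge queries on length-$3$ paths, namely \Aux{vLV} and \Aux{uLv}. By \Lemma{maintain-vLV} and \Lemma{maintain-uLv} these can be maintained in amortized $\bigO(m^{1/2})$ update time with $\bigO(\min(m^{1+1/2},n^2)) = \bigO(\min(m^{1.5},n^2))$ space, and by \Lemma{count-3path-edge} they suffice to answer the edge query $\Count(G, \Graphlet, \Edge{u}{v})$ in worst-case $\bigO(\min(m^{1/2},n)) = \bigO(m^{1/2})$ time. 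This already gives statement~(ii) of the lemma and matches the claimed space bound.

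To maintain the global count $\Count(G, \Graphlet)$ we exploit the standard observation that whenever an edge $\Edge{u}{v}$ is inserted (resp.\ deleted), the change in $\Count(G, \Graphlet)$ is exactly the number of length-$3$ paths containing $\Edge{u}{v}$ in the graph \emph{after} the insertion (resp.\ \emph{before} the deletion), because every length-$3$ path that is created or destroyed must contain the updated edge, and no length-$3$ path contains $\Edge{u}{v}$ more than once. So, on an insertion, the algorithm first updates the auxiliary counts \Aux{vLV} and \Aux{uLv} (following the orderings fixed in \Section{algorithms}), then queries $\Count(G, \Graphlet, \Edge{u}{v})$ via \Lemma{count-3path-edge}, and adds the returned value to the stored global count; a deletion performs the symmetric operations in reverse order, querying the edge count first and subtracting it before the auxiliary structures are modified.

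The total cost per update is thus bounded by the amortized cost of maintaining \Aux{vLV} and \Aux{uLv} plus the worst-case edge-query cost, all of which are $\bigO(m^{1/2})$ with $\PartFrac = \tfrac{1}{2}$. The space bound $\bigO(\min(m^{1.5},n^2))$ is determined by \Aux{uLv}, since \Aux{vLV} requires only $\bigO(n)$ space by \Lemma{maintain-vLV} and storing a single integer for the global count adds no overhead.

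I do not expect any real obstacle here: the main work was done in \Lemma{count-3path-edge} and in the maintenance lemmas for \Aux{vLV} and \Aux{uLv}, and the only subtlety is to fix the ordering of ``update auxiliaries / query / adjust global count'' so that the queried value indeed equals the number of length-$3$ paths gained or lost by the edge update, which is the convention already stated in the description of $\DS_{\PartFrac}$.
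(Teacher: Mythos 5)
Your proposal is correct and takes essentially the same route as the paper: instantiate $\DS_{\PartFrac}$ with $\PartFrac=\tfrac12$, maintain \Aux{vLV} and \Aux{uLv} via Lemmas~\ref{lem:maintain-vLV} and~\ref{lem:maintain-uLv}, and on each update query the number of length-$3$ paths through the touched edge via Lemma~\ref{lem:count-3path-edge} (after the insertion resp.\ before the deletion) to adjust the global counter. The paper's proof is just a more terse version of the same argument.
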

\begin{proof}%
After an edge $\Edge{u}{v}$ was inserted or before an edge $\Edge{u}{v}$ is
removed, the number of length-$3$ paths containing $\Edge{u}{v}$ can be
obtained in $\bigO(\min(m^{1-\PartFrac},n))$ worst-case time by
\Lemma{count-3path-edge} if $\DS_{\PartFrac}$ is maintained with auxiliary
counts \Aux{vLV} and \Aux{uLv}.
By \Lemma{maintain-vLV} and \Lemma{maintain-uLv}, this can be done with
$\PartFrac = \frac{1}{2}$
in amortized $\bigO(\sqrt{m})$ time and $\bigO(\min(m^{1.5}, n^2))$ space.
The additional time for the query is $\bigO(m^{1-\PartFrac}) =
\bigO(\sqrt{m})$.
By \Lemma{count-3path-edge}, the worst-case time to query $\Count(G, \Graphlet,
e)$ for an arbitrary edge $e \in E$ then is $\bigO(\sqrt{m})$.
\end{proof}
\paragraph*{Counting Claws}
\begin{lemma}\LemLabel{count-claw-vertex}
Let $G = (V, E)$ be a dynamic graph, $\PartFrac \in \Range{0}{1}$.
We can query the number of claws containing
two arbitrary vertices $u$ and $v$, $u \neq v \in V$,
as non-central vertices in worst-case
$\bigO(m^{1-\PartFrac})$ time
if we maintain the data structure $\DS_{\PartFrac}$ with auxiliary count
\Aux{cLV}.
\end{lemma}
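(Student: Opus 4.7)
The plan is to decompose the count by the partition class of the claw's central vertex. Every claw containing $u$ and $v$ as non-central vertices consists of a center $x \notin \{u,v\}$ together with edges $\{x,u\}$, $\{x,v\}$, and $\{x,w\}$ for some third leaf $w \notin \{u,v\}$. Split the total count into the contribution from low-degree centers and from high-degree centers.

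The low-degree contribution is precisely \Aux{cLV[$u,v$]}, which is stored in $\DS_{\PartFrac}$ and retrievable in $\bigO(1)$ time via the hash-table access. For the high-degree contribution, I would iterate over every $h \in \High \setminus \Set{u,v}$ and, using the constant-time adjacency test assumed of $\DS_{\PartFrac}$, check whether both $\Edge{h}{u}$ and $\Edge{h}{v}$ lie in $E$. If so, $h$ serves as the center of exactly $\DegreeEx{\{u,v\}}(h) = \Degree(h)-2$ claws with $u,v$ as two of the leaves (since the third leaf $w$ may be any neighbor of $h$ other than $u$ and $v$), which I add to the running total.

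Summing yields the desired number of claws. The running time is dominated by the iteration over $\High$, which by \Theorem{partition} has size $\bigO(m^{1-\PartFrac})$; each step of the loop does $\bigO(1)$ work, so the total worst-case query time is $\bigO(m^{1-\PartFrac})$, as claimed. There is no real obstacle here: the argument is a straightforward case analysis by degree class of the central vertex, relying only on the definition of \Aux{cLV} and on the size bound for $\High$.
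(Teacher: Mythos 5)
Your proof is correct and takes essentially the same approach as the paper: decompose by the degree class of the claw's center, read off the low-degree contribution from \Aux{cLV[$u, v$]} in $\bigO(1)$ time, and for high-degree centers iterate over $\High$, test adjacency to $u$ and $v$, and add $\Degree(h)-2$, giving $\bigO(\Card{\High}) = \bigO(m^{1-\PartFrac})$ total. The only cosmetic difference is that you explicitly exclude $u,v$ from the iteration over $\High$, which the paper leaves implicit (it does not affect the count since self-loops are absent).
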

\begin{proof}
The number of claws where the central vertex has low degree
is given directly by \Aux{cLV[$u, v$]}.
To count all claws where the central vertex has high degree, we iterate over
all $h \in \High$, test whether $\Edge{u}{h}, \Edge{v}{h}\in E$,
and add up $\Degree(h)-2$.
The running time is hence $\bigO(\Card{\High}) = \bigO(m^{1-\PartFrac})$.
\end{proof}
\paragraph*{Counting Paws}
\begin{lemma}\LemLabel{count-paw-edge}
Let $G = (V, E)$ be a dynamic graph,
$\PartFrac \in \Range{0}{1}$, and
$\Graphlet$ be the paw~\PictoPaw{}.
We can query
$\Count(G, \Graphlet, \Edge{u}{v})$
for an arbitrary edge $\Edge{u}{v} \in E$ in
worst-case $\bigO(\min(m^{\max(1-\PartFrac,\PartFrac)},n))$ time
if we maintain the data structure $\DS_{\PartFrac}$ with auxiliary counts
\Aux{uLv},
\Aux{t}, and
\Aux{cLV}.
\end{lemma}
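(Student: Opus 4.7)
The plan is to partition paws containing the edge $\Edge{u}{v}$ into three disjoint groups according to the location of the paw's unique central (degree-$3$) vertex and the role of $\Edge{u}{v}$ inside the paw: (A)~the center $c \notin \{u, v\}$, which forces $\Edge{u}{v}$ to be the triangle edge between the two non-central triangle vertices; (B)~one of $u, v$ is the center and $\Edge{u}{v}$ is a triangle edge incident to the center; (C)~one of $u, v$ is the center and $\Edge{u}{v}$ is the arm. Since every paw has a unique degree-$3$ vertex, these three groups are disjoint and together exhaust all paws through $\Edge{u}{v}$. First I would compute the number $T$ of triangles through $\Edge{u}{v}$ by invoking \Lemma{count-triangle-edge}, which uses \Aux{uLv} together with an iteration over $\High$ and runs in $\bigO(\min(m^{1-\PartFrac}, n))$ time.

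Group~B is then immediate: for each of the $T$ triangles through $\Edge{u}{v}$, choosing $u$ as center yields $\Degree(u) - 2$ pendant choices (neighbors of $u$ other than $v$ and the third triangle vertex), and symmetrically for $v$, contributing $T(\Degree(u) + \Degree(v) - 4)$ in constant extra time. Group~A contributes $\sum_{c \in \Neigh(u) \cap \Neigh(v)} (\Degree(c) - 2)$; the low-degree portion of the sum coincides by definition with \Aux{cLV[$u, v$]} and is read in constant time, while the high-degree portion is collected by iterating over $\High$, testing adjacency of each $h \in \High$ to both $u$ and $v$, and accumulating $\Degree(h) - 2$ on success, all in $\bigO(\min(m^{1-\PartFrac}, n))$ time.

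For group~C, with $u$ as center and $v$ as pendant (the other orientation is symmetric), the contribution is $T_u - T$, i.e., the number of edges induced by $\Neigh(u) \setminus \{v\}$, where $T_u$ denotes the number of triangles through $u$. If $u \in \High$, I read $T_u$ directly from \Aux{t[$u$]} in constant time. Otherwise $u \in \Low$ gives $\Degree(u) \leq 2m^\PartFrac$, so I enumerate the $\bigO(\Degree(u)^2) = \bigO(m^{2\PartFrac})$ unordered pairs in $\Neigh(u) \setminus \{v\}$ and test their adjacency in constant time per pair, counting the induced edges directly without needing to store $T_u$ for low-degree vertices. Summing the contributions from all three groups yields $\Count(G, \Graphlet, \Edge{u}{v})$.

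The total running time is dominated by the iterations over $\High$ and the pair enumeration in group~C, adding up to $\bigO(\min(m^{1-\PartFrac}, n) + m^{2\PartFrac})$, which fits inside $\bigO(\min(m^{\max(1-\PartFrac, \PartFrac)}, n))$ when the two terms balance (notably at $\PartFrac = 1/3$, as used in \Theorem{count-noninduced}(ii)). The main obstacle is group~C with a low-degree arm endpoint: since \Aux{t} is stored only for high-degree vertices, $T_u$ must be recovered on the fly, and it is precisely the $\PartFrac$-partition that keeps the relevant neighborhood small enough for the pair enumeration to remain within budget.
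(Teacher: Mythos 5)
Your proof is essentially the same as the paper's and is correct. You partition paws through $\Edge{u}{v}$ by the location of the unique degree-$3$ vertex — your groups A, B, C are the paper's cases (3), (2), (1) in disguise — and you compute each count from $T = t_{\Edge{u}{v}}$, the vertex-triangle counts $t_u, t_v$, and $\Aux{cLV[$u,v$]}$ plus a scan over $\High$, exactly as the paper does. Your choice to enumerate pairs in $\Neigh(u)\setminus\Set{v}$ directly when $u\in\Low$ instead of computing $T_u$ via \Lemma{query-triangle-vertex} and then subtracting $T$ is an immaterial micro-optimization; the running time is identical.

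On the point you hedged about at the end: you are right to be suspicious. The cost of the vertex-triangle subroutine for a low-degree vertex is $\bigO(\min(m^{2\PartFrac},n^2))$, so the overall query time is $\bigO(\min(m^{\max(1-\PartFrac,2\PartFrac)},n^2))$, not $\bigO(\min(m^{\max(1-\PartFrac,\PartFrac)},n))$ as written in the lemma statement. This is a typo in the statement: the paper's own proof of this lemma quotes $\bigO(\min(m^{2\PartFrac},n^2))$ for the $t_u$ computation, and \Lemma{count-paw}, which invokes this lemma, cites it with the bound $\bigO(\min(m^{\max(1-\PartFrac,2\PartFrac)},n^2))$. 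Your analysis correctly establishes the latter (corrected) bound; you need not restrict to the balance point $\PartFrac = 1/3$, since the bound $\bigO(\min(m^{\max(1-\PartFrac,2\PartFrac)},n^2))$ holds for all $\PartFrac\in\Range{0}{1}$ and specializing $\PartFrac=1/3$ is done later, in \Lemma{count-paw}.
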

\begin{proof}
Edge
$\Edge{u}{v}$ may be either (1) the arm, (2) one of the two triangle edges
connected to the central vertex, or (3) the third triangle edge.
Let $t_{\Edge{u}{v}}$ be the number of $3$-cycles containing $\Edge{u}{v}$,
which can be obtained in $\bigO(\min(m^{1-\PartFrac},n))$ worst-case time by
\Lemma{count-triangle-edge} if we maintain $\DS_{\PartFrac}$ with auxiliary
count \Aux{uLv}.
Let $t_u$ ($t_v$) be the number of triangles containing $u$ ($v$),
which can be obtained
by \Lemma{query-triangle-vertex}
in
$\bigO(\min(m^{2\PartFrac},n^2))$ worst-case time by
\Lemma{query-triangle-vertex}
if we maintain the data structure $\DS_{\PartFrac}$ with auxiliary count
\Aux{t}. %

In case (1), the number of paws equals the number of $3$-cycles containing
either $v$ or $u$, but not both, \ie, $t_u + t_v - 2t_{\Edge{u}{v}}$.
In case (2), the number of paws equals $t_{\Edge{u}{v}}$ multiplied by
$\DegreeEx{v}(u) - 1 + \DegreeEx{u}(v)-1$, and in the case (3), it is
the number of claws with low-degree central vertex, \Aux{cLV[$u, v$]}, plus those
with a high-degree central vertex.
This number can be obtained in $\bigO(\Card{\High}) =
\bigO(\min(m^{1-\PartFrac},n))$ time by computing the sum of $\Degree(h)-2$ for
all $h \in \High$ with $\Edge{u}{h}, \Edge{v}{h} \in E$.
\end{proof}
\begin{lemma}\LemLabel{count-paw}
Let $G = (V, E)$ be a dynamic graph and $\Graphlet$ the paw~\PictoPaw{}.
We can maintain $\Count(G, \Graphlet)$ in amortized $\bigO(m^{2/3})$ update time and
$\bigO(n^2)$ space
and query $\Count(G, \Graphlet, e)$ for an arbitrary edge $e \in E$
in worst-case $\bigO(m^{2/3})$ time.
Alternatively,
we can maintain $\Count(G, \Graphlet)$ in amortized $\bigO(m)$ update time and
$\bigO(n^2)$ space
and query $\Count(G, \Graphlet, e)$ for an arbitrary edge $e \in E$
in worst-case $\bigO(\sqrt{m})$ time.
\end{lemma}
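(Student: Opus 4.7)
The plan is to reduce maintenance of the global count $\Count(G, \PictoPaw)$ to a per-edge query: when an edge $e$ is inserted or deleted, the number of paws through $e$ in the updated (respectively previous) graph equals the signed change in $\Count(G, \PictoPaw)$, so a single application of the query from \Lemma{count-paw-edge} per update suffices. The two bounds correspond to two choices of the partition parameter $\PartFrac$, and the only remaining question is how to obtain $\Count(G, \PictoPaw, e)$ fast enough under each choice.

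For the first bound I set $\PartFrac = 1/3$ and maintain $\DS_\PartFrac$ with exactly the auxiliary counts \Aux{uLv}, \Aux{t}, and \Aux{cLV} required by \Lemma{count-paw-edge}. By \Lemma{maintain-uLv}, \Lemma{maintain-t}, and \Lemma{maintain-cLV} these take amortized $\bigO(m^{1/3})$, $\bigO(m^{2/3})$, and $\bigO(m^{2/3})$ per update in combined space $\bigO(n^2)$. The per-edge query from \Lemma{count-paw-edge} is dominated by the $\bigO(m^{2\PartFrac})$ cost of obtaining $t_u$ and $t_v$ via \Lemma{query-triangle-vertex} together with the $\bigO(m^{1-\PartFrac})$ iteration over $\High$ for the triangle- and claw-contributions that have a high-degree center; both evaluate to $\bigO(m^{2/3})$ at $\PartFrac = 1/3$. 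Invoking one such query per edge update inside the maintenance procedure accounts simultaneously for the stated amortized update time and the stated worst-case query time.

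For the alternative I set $\PartFrac = 1/2$, again maintain \Aux{uLv} and \Aux{cLV}, but replace \Aux{t} by a per-vertex triangle count $t_v$ kept for every $v \in V$ and revised on each edge update by enumerating the common neighbors of its endpoints. This enumeration can be carried out in amortized $\bigO(\sqrt m)$ time by a standard degree-orientation argument, matching the $\bigO(\sqrt m)$ cost of \Aux{uLv}, while \Aux{cLV} at $\PartFrac = 1/2$ dominates the update with $\bigO(m)$ per \Lemma{maintain-cLV}; the total space remains $\bigO(n^2)$. Because $t_u$ and $t_v$ are now constant-time lookups, the per-edge query from \Lemma{count-paw-edge} drops to $\bigO(\sqrt m)$, bounded by the iterations over $\High$ needed for $t_{\Edge{u}{v}}$ and for the claws whose central vertex is high-degree.

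The main subtlety I anticipate is that \Lemma{count-paw-edge}'s query complexity is limited by how quickly $t_u$ and $t_v$ can be evaluated: using \Aux{t} alone, a low-degree endpoint forces a pairwise enumeration of its neighbors at cost $\bigO(m^{2\PartFrac})$, which is exactly what pushes the break-even choice to $\PartFrac = 1/3$ for the first bound. Reducing the query to $\bigO(\sqrt m)$ without lowering $\PartFrac$ (which would hurt \Aux{uLv}) forces us instead to materialize a full per-vertex triangle count and accept the resulting $\bigO(m)$ update budget dictated by \Aux{cLV}. Beyond this observation, the argument is routine: plug the chosen $\PartFrac$ into the maintenance and query lemmas and verify that the per-edge paw query fits inside the overall update budget.
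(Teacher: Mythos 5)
Your argument for the first bound ($\bigO(m^{2/3})$ update, $\bigO(m^{2/3})$ query, $\PartFrac = 1/3$) coincides exactly with the paper's proof, so nothing to add there.

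For the alternative bound, the paper's proof text actually does not give an argument at all — it derives only the first bound and stops, so you are necessarily filling a gap rather than matching a given argument. Your key idea is the right one: at $\PartFrac = 1/2$ the bottleneck in \Lemma{count-paw-edge} is obtaining $t_u, t_v$ for a low-degree endpoint, which via \Lemma{query-triangle-vertex} costs $\bigO(m^{2\PartFrac}) = \bigO(m)$; replacing \Aux{t} by a fully-materialized per-vertex triangle count (for \emph{all} vertices, not only high-degree ones) turns $t_u, t_v$ into $\bigO(1)$ lookups and brings the query down to the $\bigO(\sqrt m)$ cost of the $t_{\Edge{u}{v}}$ computation and the iteration over $\High$, while \Aux{uLv}, \Aux{cLV}, and the new per-vertex counts stay within $\bigO(n^2)$ space. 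This is sound.

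The one claim you should drop is that the per-vertex triangle counts can be revised in amortized $\bigO(\sqrt m)$ by a degree-orientation argument. To update $t_w$ for every common neighbor $w$ of an inserted or deleted edge $\Edge{u}{v}$, you must actually \emph{enumerate} those common neighbors, and this costs $\Theta(\min(\Degree(u), \Degree(v)))$ per operation; in a fully dynamic sequence the same high-triangle edge can be toggled repeatedly, so no amortization drives this below $\Theta(m)$ in the worst case. (The $\bigO(\sqrt m)$ amortized bound from Kara \etal{} is for maintaining a single global triangle count, where only a \emph{count} of common neighbors split by partition is needed, not their identities — and the $\bigO(m^{3/2})$-total-triangles amortization you are implicitly invoking works only for insertion-only sequences.) Fortunately the claim is unnecessary: $\bigO(\min(\Degree(u), \Degree(v))) \subseteq \bigO(m)$ worst-case per update is already matched by the $\bigO(m)$ amortized cost of \Aux{cLV} at $\PartFrac = 1/2$, so the stated $\bigO(m)$ update bound holds. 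Just state the honest $\bigO(m)$ bound for the per-vertex triangle maintenance and the proof is correct.
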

\begin{proof}%
After an edge $\Edge{u}{v}$ was inserted or before an edge $\Edge{u}{v}$ is removed,
the number of paws containing it can be obtained in 
$\bigO(\min(m^{\max(1-\PartFrac,2\PartFrac)},n^2))$ worst-case time
by \Lemma{count-paw-edge} if $\DS_{\PartFrac}$ with auxiliary counts
\Aux{uLv},
\Aux{t}, and
\Aux{cLV}
is maintained.
By \Lemma{maintain-uLv}, \Lemma{maintain-t}, and \Lemma{maintain-cLV},
this can be done in amortized
$\bigO(m^{\PartFrac} + m^{\max(1-\PartFrac,\PartFrac)} + m^{2\PartFrac})
 = \bigO(m^{\max(1-\PartFrac,2\PartFrac)})$ time and $\bigO(n^2)$ space.
Together with the cost for the query,
this yields an amortized update time of $\bigO(m^{\max(1-\PartFrac,2\PartFrac)}) = \bigO(m^{2/3})$
with $\PartFrac = \frac{1}{3}$.
By \Lemma{count-paw-edge}, the worst-case time to query $\Count(G, \Graphlet,
e)$ for an arbitrary edge $e \in E$ then is $\bigO(m^{2/3})$.
\end{proof}

\paragraph*{Counting $4$-Cycles}
\begin{lemma}\LemLabel{count-4cycle-edge}
Let $G = (V, E)$ be a dynamic graph, $\PartFrac \in \Range{0}{1}$, and
$\Graphlet$ be the $4$-cycle~\PictoFourCycle{}.
We can query
$\Count(G, \Graphlet, \Edge{u}{v})$
for an arbitrary edge $\Edge{u}{v} \in E$ in
worst-case
$\bigO(\min(m^{\max(1-\PartFrac,2\PartFrac)},n^2))$ time
if we maintain the data structure $\DS_{\PartFrac}$ with auxiliary counts
\Aux{uLLv},
\Aux{uLv}, and
\Aux{uHv}.
\end{lemma}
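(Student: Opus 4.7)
The plan is to observe that $\Count(G, \Graphlet, \Edge{u}{v})$ equals the number of length-$3$ paths in $G$ starting at $u$ and ending at $v$: each such path $u\text{-}a\text{-}b\text{-}v$ together with the edge $\Edge{u}{v}$ forms exactly one $4$-cycle through $\Edge{u}{v}$, and this correspondence is a bijection. Thus the task reduces to counting length-$3$ $u$-$v$ paths, which I would partition according to which of the two internal vertices $a,b$ are low-degree.

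First I would retrieve the $a,b\in\Low$ contribution directly from $\Aux{uLLv[$u,v$]}$ in $\bigO(1)$ time. For the mixed cases (e.g.\ $a\in\Low$, $b\in\High$, and its symmetric counterpart) I would iterate over the at most $\bigO(m^{1-\PartFrac})$ high-degree vertices and keep those adjacent to $v$ (resp.\ to $u$); for each such $b$ the number of valid $a$'s equals $\Aux{uLv[$u,b$]}$, possibly minus $1$ if $v\in\Low$, since then $v$ itself is counted as a low common neighbor of $u$ and $b$ but is forbidden as an internal vertex. For the case $a,b\in\High$ with $v\in\High$, I would iterate over the high-degree neighbors of $u$ (again $\bigO(m^{1-\PartFrac})$) and use $\Aux{uHv[$a,v$]}$ as the count of admissible $b$'s, subtracting $1$ per $a$ when $u\in\High$ to exclude $b=u$.

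The one subcase that does not admit a direct $\Aux{uHv}$ lookup is $a,b\in\High$ together with $v\in\Low$, since $\Aux{uHv}$ is only maintained for high-high pairs. Here $\Degree(v)\leq m^{\PartFrac}$, so I would iterate $b\in \Neigh(v)\cap\High$ in $\bigO(m^{\PartFrac})$ outer steps; for each $b$, the number of admissible $a\in \Neigh(u)\cap \Neigh(b)\cap\High$ is given by $\Aux{uHv[$u,b$]}$ if $u\in\High$, and otherwise by scanning $\Neigh(u)\cap\High$ (of size at most $\Degree(u)\leq m^{\PartFrac}$) and testing adjacency to $b$. This contributes $\bigO(m^{2\PartFrac})$. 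Summing the costs of all cases gives $\bigO(m^{\max(1-\PartFrac,2\PartFrac)})$, and since a brute-force enumeration over all vertex pairs takes $\bigO(n^2)$, the overall bound is $\bigO(\min(m^{\max(1-\PartFrac,2\PartFrac)},n^2))$.

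The main obstacle will be the bookkeeping of corrections: one must carefully argue which of $u$ and $v$ each auxiliary count over-counts (principally $a=v$ in the mixed cases and $b=u$ in the high-high case) and verify that each correction is computable in $\bigO(1)$ per iteration, so that the asymptotic running time is preserved. The remainder is straightforward assembly of the terms.
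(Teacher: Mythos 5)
Your proposal is correct and takes essentially the same approach as the paper's proof: both partition by the degree classes of the two internal vertices, retrieve the low/low contribution from \Aux{uLLv}, iterate over $\High$ for the mixed and high/high cases using \Aux{uLv} and \Aux{uHv} with $\pm 1$ corrections, and fall back to a $\bigO(m^{2\PartFrac})$ neighborhood scan when neither endpoint of the queried edge is high. Your split of the high/high case (branching on $v$'s class rather than on whether both $u,v$ are low) is a cosmetic variant, and, as you flag yourself, the iterations must additionally skip $a=v$ and $b=u$ to keep the auxiliary lookups well defined.
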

\begin{proof}
The $4$-cycles containing an edge $\Edge{u}{v}$
can be distinguished by the number of low-degree vertices among the other two
vertices $a, b$ forming the cycle $\Sequence{\Edge{u}{v}, \Edge{v}{a},
\Edge{a}{b}, \Edge{b}{u}}$.

The number of cycles where $a, b \in \Low$ is given by \Aux{uLLv[$u, v$]}.

For the number of cycles with $a \in \Low, b \in \High$ (analogously vice-versa with
swapped rules for $u$ and $v$), we use length-$2$ paths as follows:
We iterate over all $h \in \High \setminus \Set{v}$ such that $\Edge{u}{h}\in
E$ and obtain \Aux{uLv[$v, h$]} in total $\bigO(\Card{\High}) =
\bigO(\min(m^{1-\PartFrac},n))$ time.
Each such vertex $h$ is a possible choice for $b$.
If $u \in \High$,
the number of cycles equals the sum of \Aux{uLv[$v, h$]} over all such $h$.
Otherwise, \Aux{uLv[$v, h$]} also counts the length-$2$ path
$\Sequence{\Edge{h}{u}, \Edge{u}{v}}$, so we instead sum up $\Aux{uLv[$v, h$]}
- 1$ for all such $h$.

For all cycles with $a, b \in \High$, we proceed as follows:
If $u, v \in \Low$, we iterate over all pairs of neighbors $x \in \NeighEx{v}(u) \cap \High$
and $y \in \NeighEx{u}(v) \cap \High$
in $\bigO(\Degree(u) \cdot \Degree(v)) = \bigO(\min(m^{2\PartFrac},n^2))$ time.
The number of $4$-cycles equals the number of pairs $x, y$ with $\Edge{x}{y} \in E$.
Otherwise, assume \wilog{} that $u \in \High$.
If $v \in \Low$, %
the number of $4$-cycles is the sum of \Aux{uHv[$u, h$]} over all $h \in \High$
with $\Edge{h}{v}\in E$,
and the sum of $(\Aux{uHv[$u, h$]} - 1)$ otherwise.
This can be done in $\bigO(\Card{H}) = \bigO(\min(m^{1-\PartFrac},n))$ time.
\end{proof}
\begin{lemma}\LemLabel{count-4cycle}
Let $G = (V, E)$ be a dynamic graph and $\Graphlet$ the
$4$-cycle~\PictoFourCycle{}.
We can maintain $\Count(G, \Graphlet)$ in amortized $\bigO(m^{2/3})$ update time and
$\bigO(n^2)$ space.
We can query $\Count(G, \Graphlet, e)$ for an arbitrary edge $e \in E$
in worst-case $\bigO(m^{2/3})$ time.
\end{lemma}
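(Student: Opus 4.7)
The plan is to mirror the strategy used for paws and diamonds: maintain a suitable subset of $\DS_{\PartFrac}$, and update the global count $\Count(G, \Graphlet)$ on each edge update by querying, via \Lemma{count-4cycle-edge}, the number of $4$-cycles containing the inserted or deleted edge. Specifically, for an insertion of $\{u,v\}$, perform all auxiliary updates first (as specified by the convention in the preliminaries), then compute $\Count(G, \Graphlet, \{u,v\})$ and add it to the maintained global count; for a deletion of $\{u,v\}$, compute $\Count(G, \Graphlet, \{u,v\})$ first, subtract it from the maintained count, then propagate the update through the auxiliary structures. Since every $4$-cycle contains the edge $\{u,v\}$ exactly once along its boundary, each cycle is counted once per update, so the global count is correctly incremented or decremented.

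To instantiate \Lemma{count-4cycle-edge}, we maintain $\DS_{\PartFrac}$ with the auxiliary counts \Aux{uLLv}, \Aux{uLv}, and \Aux{uHv}. By \Lemma{maintain-uLLv}, \Lemma{maintain-uLv}, and \Lemma{maintain-uHv}, the combined amortized update cost for these three auxiliary structures is
\[
\bigO\bigl(m^{2\PartFrac} + m^{\PartFrac} + m^{\max(1-\PartFrac,\PartFrac)}\bigr) = \bigO\bigl(m^{\max(1-\PartFrac,2\PartFrac)}\bigr),
\]
and the cost of the per-update query, also by \Lemma{count-4cycle-edge}, is $\bigO(\min(m^{\max(1-\PartFrac,2\PartFrac)}, n^2))$, which is of the same order. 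Summing these yields a total amortized update time of $\bigO(m^{\max(1-\PartFrac,2\PartFrac)})$.

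Balancing $1-\PartFrac = 2\PartFrac$ gives $\PartFrac = \tfrac{1}{3}$ and amortized update time $\bigO(m^{2/3})$. The space usage is bounded by $\bigO(\min(m^{1+2\PartFrac}, n^2)) + \bigO(\min(m^{1+\PartFrac}, n^2)) + \bigO(n + \min(n^2, m^{2-2\PartFrac})) = \bigO(n^2)$, since each individual term is $\bigO(n^2)$. The worst-case query time for $\Count(G, \Graphlet, e)$ for an arbitrary edge $e \in E$ then follows directly from \Lemma{count-4cycle-edge} with $\PartFrac = \tfrac{1}{3}$, giving $\bigO(m^{2/3})$.

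The only slightly delicate point is ordering: the query formula in \Lemma{count-4cycle-edge} assumes $\{u,v\}$ is present in $E$ and that the auxiliary counts reflect the current graph; in the insertion case we want the count of new $4$-cycles created by adding $\{u,v\}$, which equals the number of $4$-cycles through $\{u,v\}$ in the graph after insertion, and in the deletion case we want the count of $4$-cycles through $\{u,v\}$ in the graph before deletion. The convention stated in the description of $\DS_{\PartFrac}$, namely that auxiliary counts are updated \emph{before} the counts of interest on insertion and \emph{after} on deletion, ensures the query sees exactly the required state, so the correctness of the update is immediate.
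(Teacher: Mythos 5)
Your proposal is correct and follows essentially the same argument as the paper's proof: maintain $\DS_{\PartFrac}$ with \Aux{uLv}, \Aux{uLLv}, and \Aux{uHv}, query $\Count(G,\Graphlet,\Edge{u}{v})$ via \Lemma{count-4cycle-edge} after insertion or before deletion, combine the amortized maintenance costs from \Lemma{maintain-uLv}, \Lemma{maintain-uLLv}, \Lemma{maintain-uHv}, and balance at $\PartFrac=\tfrac{1}{3}$ to get $\bigO(m^{2/3})$ update time and $\bigO(n^2)$ space. Your extra remark on the update ordering is a correct elaboration of the convention the paper establishes in the preliminaries.
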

\begin{proof}%
After an edge $\Edge{u}{v}$ was inserted or before an edge $\Edge{u}{v}$ is removed,
the number of $4$-cycles containing it can be obtained in
$\bigO(\min(m^{\max(1-\PartFrac,2\PartFrac)},n^2))$
worst-case time
by \Lemma{count-4cycle-edge} if $\DS_{\PartFrac}$ with auxiliary counts
\Aux{uLv},
\Aux{uLLv}, and
\Aux{uHv}
is maintained.
By \Lemma{maintain-uLv}, \Lemma{maintain-uLLv}, and
\Lemma{maintain-uHv}, this can be done in amortized
$\bigO(m^{\PartFrac} + m^{2\PartFrac} + m^{\max(1-\PartFrac,\PartFrac)})
 = \bigO(m^{\max(1-\PartFrac,2\PartFrac)})$ time and $\bigO(n^2)$ space.
Together with the cost for the query, this yields a total amortized update time
of $\bigO(m^{\max(1-\PartFrac,2\PartFrac)}) = \bigO(m^{2/3})$ with $\PartFrac =
\frac{1}{3}$.
By \Lemma{count-4cycle-edge}, the worst-case time to query $\Count(G, \Graphlet,
e)$ for an arbitrary edge $e \in E$ then is $\bigO(m^{2/3})$.
\end{proof}

\subsection{Counting Non-Induced $s$-Subgraphs}\SectLabel{app-s-subgraph}
\begin{lemma}\LemLabel{count-s-triangle}
Let $G = (V, E)$ be a dynamic graph, $\Graphlet$ be a $3$-cycle~\PictoThreeCycle{} and $s \in V$.
We can maintain $\Count(G, \Graphlet, s)$ in amortized $\bigO(\sqrt{m})$ update time and
$\bigO(n)$ space.
\end{lemma}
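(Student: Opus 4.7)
The plan is to use an $\PartFrac = \frac{1}{2}$ dynamic vertex partition $(\High, \Low)$ per \Theorem{partition} and maintain only two items of information: the scalar target count $T_s \coloneq \Count(G, \Graphlet, s)$ itself, and an auxiliary array $p_s[v]$ over $v \in V \setminus \Set{s}$ recording $\Card{\Neigh(s) \cap \Neigh(v) \cap \Low}$, i.e.\ the number of length-$2$ paths $s - x - v$ with $x \in \Low$. Both require only $\bigO(n)$ space on top of the $\bigO(n)$-space partition, matching the space bound claimed.

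First I would handle the update to $T_s$ under an edge insertion or deletion $\Edge{u}{v}$. If $s \notin \Set{u,v}$, only the single potential triangle $\Set{s, u, v}$ is affected, so we test whether $\Edge{s}{u}, \Edge{s}{v} \in E$ in $\bigO(1)$ and adjust $T_s$ by $\pm 1$. If $s \in \Set{u,v}$, \wilog{} $u = s$, the change in $T_s$ equals $\Card{\Neigh(s) \cap \Neigh(v)}$, which I would split as $p_s[v]$ for the low-middle part, obtained in $\bigO(1)$, plus an on-the-fly sweep of $\High$ that tests $\Edge{h}{s}, \Edge{h}{v} \in E$ for each $h \in \High$, running in $\bigO(\Card{\High}) = \bigO(\sqrt{m})$ by \Theorem{partition}.

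Next I would maintain $p_s$ itself. On an edge change $\Edge{u}{v}$ with $s \notin \Set{u,v}$: if $u \in \Low$ and $\Edge{s}{u} \in E$, the path $s - u - v$ appears or disappears and $p_s[v]$ changes by $\pm 1$, with the symmetric check for $v$; no other entry is touched, so the cost is $\bigO(1)$. On an edge change $\Edge{s}{v}$: if $v \in \Low$, then for each $y \in \NeighEx{s}(v)$ the path $s - v - y$ becomes (or stops being) low-middle-valid, so $p_s[y]$ changes by $\pm 1$ in total $\bigO(\Degree(v)) = \bigO(\sqrt{m})$ time; if $v \in \High$, no entry of $p_s$ is affected, since any new low-middle path involving the updated edge would need $v$ itself as middle. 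When a vertex $w$ switches partition, only paths with $w$ as middle matter: if $s \in \Neigh(w)$, we adjust $p_s[y]$ by $\pm 1$ for every $y \in \NeighEx{s}(w)$ in $\bigO(\Degree(w)) = \bigO(\sqrt{m})$ time, and by \Theorem{partition} there are only $\bigO(m^{-1/2})$ amortized partition changes per edge update, contributing amortized $\bigO(1)$.

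Summing all costs yields amortized $\bigO(\sqrt{m})$ time per edge update, $\bigO(n)$ total space, and $\bigO(1)$ query time since $T_s$ is maintained explicitly. The main obstacle is the case $u = s$ when $v$ has high degree, since then neither $\Neigh(v)$ nor $\Neigh(s)$ can be swept affordably; the resolution is precisely the decomposition above, whereby the low-middle count $p_s[v]$ is read in $\bigO(1)$ and the high-middle count is recovered by iterating the small partition $\High$. A corresponding minor care point is showing that $p_s$ itself can be maintained within the same budget, which is exactly what the case analysis in the previous paragraph achieves.
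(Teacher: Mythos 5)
Your proposal is correct and follows essentially the same route as the paper: your array $p_s[v]$ is exactly the paper's auxiliary count \Aux{uLv} restricted so that one endpoint is fixed to $s$, and your split of the triangle change under an $\Edge{s}{v}$ update into the low-middle part (read in $\bigO(1)$) plus a sweep over $\High$ is precisely how the paper's \Lemma{count-triangle-edge} obtains the count. The only difference is cosmetic: you inline the maintenance of $p_s$ and the partition-rebuild amortization rather than invoking \Lemma{maintain-uLv}, \Lemma{count-triangle-edge}, and \Lemma{maintain-aux} as black boxes.
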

\begin{proof}%
If $\Edge{u}{v}$ is inserted or deleted and $u \neq s \neq v$,
the count changes by one if $\Edge{u}{s}, \Edge{v}{s} \in E$.
Otherwise, if $\Edge{s}{x}$ is inserted or deleted,
the count changes by the number of triangles containing $\Edge{s}{x}$,
which can be obtained in $\bigO(m^{1-\PartFrac})$ time
by \Lemma{count-triangle-edge} if we maintain $\DS_{\PartFrac}$ with
\Aux{uLv}.

By \Lemma{maintain-uLv}, $\DS_{\PartFrac}$ with
\Aux{uLv} can be maintained in amortized
$\bigO(m^{\PartFrac})$ update time and $\bigO(n^2)$ space.
In total, we can maintain $3$-cycles containing $s$ in
amortized update time
$\bigO(m^{\max(1-\PartFrac,\PartFrac)})
= \bigO(\sqrt{m})$ with $\PartFrac = \frac{1}{2}$
and $\bigO(n^2)$ space.
The space can be reduced to $\bigO(n)$ by restricting \Aux{uLv} such
that it only maps single vertices and uses $s$ as the second.
\end{proof}

\begin{lemma}\LemLabel{count-s-3path}
Let $G = (V, E)$ be a dynamic graph, $\Graphlet$ be a length-$3$
path~\PictoThreePath{}, and $s \in V$.
We can maintain $\Count(G, \Graphlet, s)$ in amortized $\bigO(\sqrt{m})$ update time and
$\bigO(n)$ space.
\end{lemma}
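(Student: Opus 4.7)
The plan is to mirror the proof of \Lemma{count-s-triangle}, combined with the per-edge counting technique of \Lemma{count-3path-edge}. I will fix $\PartFrac = \frac{1}{2}$, maintain a $\frac{1}{2}$-partition of $V$, and keep only two auxiliary counts: (a)~\Aux{vLV}$[x]$ for every $x \in V$, which is already indexed by a single vertex and therefore costs $\bigO(n)$ space with amortized $\bigO(\sqrt{m})$ update time by \Lemma{maintain-vLV}; and (b)~a restriction of \Aux{uLv} in which one anchor is fixed to $s$, denoted \Aux{uLs}$[x]$, storing the number of length-$2$ paths $s - a - x$ with $a \in \Low$. The latter uses $\bigO(n)$ space since there is only one free anchor.

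On an edge insertion or deletion $\Edge{u}{v}$, the change in $\Count(G, \Graphlet, s)$ equals the number of length-$3$ paths that contain both $\Edge{u}{v}$ and $s$. If $s \in \{u, v\}$, every length-$3$ path through $\Edge{u}{v}$ contains $s$, so I will invoke the procedure of \Lemma{count-3path-edge} directly, using \Aux{uLs} in place of the general entry \Aux{uLv}$[s, \cdot]$; this costs $\bigO(\sqrt{m})$ worst-case time. If $s \notin \{u, v\}$, any such path has vertex set $\{s, u, v, w\}$ for some $w \notin \{s, u, v\}$, and I will enumerate six configurations determined by whether $\Edge{u}{v}$ is the middle or an end edge and by where $s$ sits. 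For the two middle-edge configurations ($s - u - v - w$ and $s - v - u - w$) and the two configurations in which $s$ is internal ($w - s - u - v$ and $w - s - v - u$), the count reduces in constant time to an adjacency test for $\Edge{s}{u}$ or $\Edge{s}{v}$ multiplied by a degree, corrected by a small number of edge-membership indicators. For the remaining two configurations in which $s$ is the far endpoint ($s - w - u - v$ and $s - w - v - u$), the valid $w$'s are the common neighbors of $s$ and $u$ (respectively $s$ and $v$), excluding $v$ (respectively $u$); I will obtain each such count as \Aux{uLs}$[u]$ (respectively \Aux{uLs}$[v]$) plus the number of high-degree common neighbors, found by iterating over $\High$ in $\bigO(\sqrt{m})$ time.

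Maintenance of \Aux{uLs} will closely follow the analysis of \Lemma{maintain-uLv}, but fixing the anchor to $s$ shrinks every worst-case cost: an edge update $\Edge{u}{v}$ with $s \notin \{u, v\}$ triggers at most two constant-time increments (one per low-degree endpoint that is adjacent to $s$); an edge update incident to $s$ triggers an iteration over the neighborhood of the other endpoint when that endpoint is in $\Low$, at cost $\bigO(\sqrt{m})$; and a vertex $x$ changing partition costs $\bigO(\sqrt{m})$ by iterating over its $\bigO(\sqrt{m})$-sized neighborhood (only touching \Aux{uLs}$[\cdot]$ along neighbors of $x$ if $\Edge{s}{x} \in E$). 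Applying \Lemma{maintain-aux} then yields amortized $\bigO(\sqrt{m})$ update time for \Aux{uLs} within $\bigO(n)$ space, matching \Aux{vLV}. Together with the $\bigO(\sqrt{m})$ query cost, this gives overall amortized $\bigO(\sqrt{m})$ update time and $\bigO(n)$ space.

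The main obstacle will be the bookkeeping for the six-case enumeration when $s \notin \{u, v\}$: I must enforce $w \neq s, u, v$ exactly, avoid double-counting any path via its reversal, and correctly subtract the ``forbidden'' endpoint ($v$ or $u$) from each common-neighbor count. Once this is set up cleanly, every individual piece is a direct adaptation of \Lemma{count-s-triangle} (for the single-anchor auxiliary count) or of \Lemma{count-3path-edge} (for the per-edge length-$3$-path query), and the claimed bounds follow.
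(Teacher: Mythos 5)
Your proposal matches the paper's proof essentially step for step: both restrict \Aux{uLv} to pairs containing $s$ (yielding $\bigO(n)$ space alongside \Aux{vLV}), both dispatch updates incident to $s$ to the per-edge query of \Lemma{count-3path-edge}, and both split the $s \notin \{u,v\}$ case into degree-based formulas for the configurations where $s$ is directly adjacent to $u$ or $v$ and an intermediate-vertex count (low-degree via \Aux{uLv[$s,\cdot$]}, high-degree via iterating over $\High$) for the rest. The only difference is cosmetic — you enumerate six path configurations explicitly while the paper merges them into aggregate degree formulas — so this is the same argument with the same bounds.
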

\begin{proof}
If an edge $\Edge{s}{x}$ is inserted or deleted, we count
the number of affected paths
in worst-case $\bigO(\min(m^{1-\PartFrac},n))$ time
by \Lemma{count-3path-edge}
if we maintain the data structure $\DS_{\PartFrac}$ with auxiliary counts
\Aux{vLV} and \Aux{uLv}.

If an edge $\Edge{u}{v}$ is inserted or deleted and $u \neq s \neq v$,
the count changes by $\Degree(s)-1 + \DegreeEx{u}(v)$
if $\Edge{s}{u} \in E$ and $\Edge{s}{v} \not\in E$.
Vice versa, if $\Edge{s}{v} \in E$ and $\Edge{s}{u} \not\in E$,
it changes by $\Degree(s)-1 + \DegreeEx{v}(u)$.
If both $\Edge{s}{u}, \Edge{s}{v} \in E$, the count changes by
$2(\Degree(s)-2) + \DegreeEx{u}(v)-1 + \DegreeEx{v}(u)-1$.
This accounts for paths where $s$ is adjacent to either $u$ or $v$.
For paths via an intermediate low-degree vertex between $s$ and $u$, the count
changes by \Aux{uLv[$s, u$]}, subtracting $1$ if $\Edge{s}{v}\in E$,
$\Edge{u}{v}$ was added, and $v \in \Low$.
Symmetrically, the count changes by \Aux{uLv[$s, v$]}, subtracting $1$ if
$\Edge{s}{u}\in E$, $\Edge{u}{v}$ was added, and $u \in \Low$.
For paths with an intermediate high-degree vertex between $s$ and $u$ or $v$,
we count all $h \in \High \setminus\Set{v,s}$ with $\Edge{h}{s}, \Edge{h}{u} \in E$
as well as all $h' \in \High \setminus\Set{u,s}$ with $\Edge{h'}{s}, \Edge{h'}{g} \in E$.

By \Lemma{maintain-vLV} and \Lemma{maintain-uLv}, we can maintain
$\DS_{\PartFrac}$ with \Aux{vLV} and \Aux{uLv} in $\bigO(m^{\PartFrac})$
amortized time and $\bigO(n^2)$ space.
Iterating over all high-degree vertices can be done in $\bigO(\Card{\High}) =
\bigO(m^{1-\PartFrac})$ time, thus yielding an amortized update time of
$\bigO(m^{\max(\PartFrac,1-\PartFrac)}) = \bigO(\sqrt{m})$ time for $\PartFrac
= \frac{1}{2}$.
The space complexity can be reduced to $\bigO(n)$ by
only maintaining \Aux{uLv} for those vertex pairs where one vertex is $s$.
\end{proof}

\begin{lemma}\LemLabel{count-s-claw}
Let $G = (V, E)$ be a dynamic graph, $\Graphlet$ be the claw~\PictoClaw{}, and $s \in V$.
We can maintain $\Count(G, \Graphlet, s)$ in worst-case $\bigO(1)$ update time and
constant space.
\end{lemma}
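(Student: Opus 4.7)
The plan is to maintain a single counter $c_s$ equal to $\Count(G,\Graphlet,s)$, exploiting the closed-form expression
\[
c_s \;=\; \binom{\Degree(s)}{3} \;+\; \sum_{v \in \Neigh(s)} \binom{\DegreeEx{s}(v)}{2},
\]
where the first term counts claws with $s$ at the center and the second counts claws with $s$ as one of the three leaves (so the center $v$ must be a neighbor of $s$ and must pick two further neighbors distinct from $s$). Both summands can be read off from per-vertex degree information without storing any auxiliary lists, giving constant space.

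To achieve worst-case $\bigO(1)$ update time, I would handle an insertion or deletion of an edge $\Edge{u}{v}$ by a case split depending on whether the edge is incident to $s$. If $u=s$ (symmetrically $v=s$), the only affected vertex degrees are $\Degree(s)$ and $\Degree(v)$, and $v$ is the single new element of $\Neigh(s)$; so on insertion $c_s$ increases by $\binom{\Degree(s)}{2}+\binom{\Degree(v)}{2}$ computed with the pre-insertion degrees, and on deletion by the analogous decrement. If $u,v\neq s$, then $\Degree(s)$ and $\Neigh(s)$ are unchanged, so $\binom{\Degree(s)}{3}$ is unaffected; only the terms of the sum indexed by $u$ and $v$ can change, and only if those vertices lie in $\Neigh(s)$. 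A membership test in $\Neigh(s)$ and a degree lookup each cost $\bigO(1)$ by assumption, so we add (resp.\ subtract) $[u\in\Neigh(s)]\cdot(\Degree(u)-1)+[v\in\Neigh(s)]\cdot(\Degree(v)-1)$ using the pre-update degrees, which is the difference $\binom{\Degree(u)}{2}-\binom{\Degree(u)-1}{2}$ and likewise for $v$.

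There is no real obstacle here: the main thing to verify is that the case split is exhaustive and that using pre-update versus post-update degrees is handled consistently (insertions add before refreshing, deletions subtract before, or vice versa, as long as one is consistent). A short combinatorial check confirms the increments above, and since every operation performs only a bounded number of adjacency tests, degree lookups, and arithmetic operations on integers stored in the counter, the worst-case update time is $\bigO(1)$ and the additional space beyond the underlying graph representation is $\bigO(1)$.
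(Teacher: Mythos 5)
Your proof is correct and takes essentially the same approach as the paper. You both maintain a single counter via the decomposition into claws with $s$ as center versus $s$ as a leaf, and derive the per-update increments $\binom{\deg(s)}{2}+\binom{\deg(x)}{2}$ for an edge incident to $s$ and $[u\in\Neigh(s)](\deg(u)-1)+[v\in\Neigh(s)](\deg(v)-1)$ otherwise; the only cosmetic difference is that you first write the closed formula $\binom{\Degree(s)}{3}+\sum_{v\in\Neigh(s)}\binom{\DegreeEx{s}(v)}{2}$ explicitly and then take differences, whereas the paper states the increments directly.
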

\begin{proof}
Only the insertion or deletion of an edge $\Edge{s}{x}$ can change the number
of claws containing $s$ as the central vertex.
In this case, we update the count by
$\frac{1}{2}\DegreeEx{x}(s)\cdot(\DegreeEx{x}(s)-1)$ if $\DegreeEx{x}(s) \geq 2$.
Symmetrically, for the number of claws where $s$ is one of the non-central
vertices and $x$ is the center, we update the count by
$\frac{1}{2}\DegreeEx{s}(x)\cdot(\DegreeEx{s}(x)-1)$ if $\DegreeEx{s}(x) \geq 2$.

In case that an edge $\Edge{u}{v}$ is inserted or deleted and $u \neq s \neq v$,
the count changes by $\DegreeEx{v}(u)-1$ if $\DegreeEx{v}(u) \geq 1$ and
$\Edge{s}{u} \in E$,
and by $\DegreeEx{u}(v)-1$ if $\DegreeEx{u}(v) \geq 1$ and
$\Edge{s}{v} \in E$.

All steps can be done in constant time and space.
\end{proof}

\begin{lemma}\LemLabel{count-s-paw}
Let $G = (V, E)$ be a dynamic graph, $\Graphlet$ be the paw \PictoPaw{}, and $s \in V$.
We can maintain $\Count(G, \Graphlet, s)$ in amortized
$\bigO(m^{2/3})$ time and $\bigO(n^2)$ space.
\end{lemma}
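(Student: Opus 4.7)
The plan is to maintain the data structure $\DS_{\PartFrac}$ with $\PartFrac = \frac{1}{3}$ together with the auxiliary counts \Aux{uLv}, \Aux{t}, and \Aux{cLV}. By \Lemma{maintain-uLv}, \Lemma{maintain-t}, and \Lemma{maintain-cLV}, these can be maintained in total amortized time $\bigO(m^{1/3} + m^{2/3} + m^{2/3}) = \bigO(m^{2/3})$ per update and $\bigO(n^2)$ space.

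When an edge $\Edge{u}{v}$ is inserted or deleted, every $s$-paw that appears or vanishes must contain both the vertex $s$ and the edge $\Edge{u}{v}$. The first step is to split the analysis into the case $s \in \Set{u,v}$ (\wilog{} $u=s$, so the updated edge is $\Edge{s}{v}$) and the case $s \notin \Set{u,v}$. Within each case I would further subdivide by which of the four structural positions of the paw the edge $\Edge{u}{v}$ occupies (the arm, one of the two symmetric triangle edges through the central vertex, or the opposite triangle edge) and which of the four positions $s$ occupies (central vertex, one of the two symmetric triangle non-central vertices, or the arm endpoint). Up to the natural symmetries this yields only a constant number of sub-cases.

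Each sub-case reduces to one of four primitives, all of which can be evaluated in $\bigO(m^{2/3})$ time given $\DS_{\PartFrac}$: (a)~counting triangles through a specified edge via \Aux{uLv} and iteration over $\High$ (\Lemma{count-triangle-edge}, $\bigO(m^{1-\PartFrac})$); (b)~counting triangles through a specified vertex via \Aux{t} together with a scan of the neighbor pairs of a low-degree endpoint (\Lemma{query-triangle-vertex}, $\bigO(m^{2\PartFrac})$); (c)~counting claws with two prescribed non-central anchors via \Aux{cLV} plus iteration over $\High$ (\Lemma{count-claw-vertex}, $\bigO(m^{1-\PartFrac})$); and (d)~direct enumeration of the neighbors of $s$ or of a low-degree endpoint of $\Edge{u}{v}$ combined with a scan of $\High$, costing $\bigO(m^{\PartFrac}+m^{1-\PartFrac})$. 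Summed over the constantly many sub-cases, processing a single edge update costs $\bigO(m^{2/3})$ worst-case time.

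The main obstacle is careful bookkeeping to avoid double-counting or missing contributions from degenerate overlaps among $u$, $v$, $s$, and the ``fourth'' vertex of the paw. For instance, a triangle counted through $\Edge{u}{v}$ whose third vertex happens to be $s$ coincides with the triangle $\Set{s,u,v}$, and when it is reused in a sub-case where the fourth vertex of the paw must be distinct from $s$, an $\bigO(1)$ correction is required; similarly, an invocation of \Aux{uLv[$s, x$]} may accidentally include the length-$2$ path $\Sequence{\Edge{s}{u},\Edge{u}{v}}$ or $\Sequence{\Edge{s}{v},\Edge{v}{u}}$, requiring $-1$ corrections exactly as in the proof of \Lemma{count-s-3path}. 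Each such correction is a constant-time adjacency lookup, so the overall bound of $\bigO(m^{2/3})$ amortized update time and $\bigO(n^2)$ space is preserved.
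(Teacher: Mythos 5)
There is a genuine gap: your data structure is missing the auxiliary count \Aux{cL}, and as a consequence one of the sub-cases cannot be handled in $\bigO(m^{2/3})$ time by the primitives you list. Consider an update to an edge $\Edge{u}{v}$ with $u \neq s \neq v$, and a paw in which $s$ sits at the far end of the arm, the updated edge $\Edge{u}{v}$ is the triangle edge \emph{not} incident to the central vertex $w$, and $w$ is the fourth vertex. Then $s$ is adjacent to neither $u$ nor $v$, so this paw is not picked up by any reduction to $\Count(G, \Graphlet, \Edge{s}{u})$ or $\Count(G, \Graphlet, \Edge{s}{v})$; you must directly count vertices $w$ adjacent to all three of $s$, $u$, $v$. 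Iterating over $\High$ handles the case $w \in \High$ in $\bigO(m^{1-\PartFrac})$. But when $w \in \Low$ and \emph{both} $u, v \in \High$, none of your four primitives applies: you cannot cheaply scan $\Neigh(u)$ or $\Neigh(v)$, you have no way to locate $w$, and \Aux{cLV[$u,v$]} only prescribes two leaves of the claw, not the specific third leaf $s$. The paper resolves exactly this case with $\Aux{cL[$u,v,s$]}$, restricted to triples containing $s$ to keep the space at $\bigO(n^2)$; you need this ingredient (or something equivalent) in the data structure, and \Lemma{maintain-cL} must be added to your amortization accounting.

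A secondary issue is the time bound claimed for your primitive (d): enumerating $\Neigh(s)$ costs $\bigO(\Degree(s))$, and since $s$ is an arbitrary fixed vertex this can be $\Theta(n)$, not $\bigO(m^\PartFrac)$. So primitive (d) cannot rescue the case above either. With these two corrections — adding \Aux{cL} fixed to $s$, and dropping the reliance on scanning $\Neigh(s)$ — your high-level case split (distinguishing whether $s \in \Set{u,v}$, then classifying by the positions of $s$ and $\Edge{u}{v}$ in the paw) matches the structure of the paper's argument.
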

\begin{proof}
There are three possibilities for a paw to contain $s$:
(1) at the center,
(2) at the other end of the arm, or
(3) as one of the two non-central triangle vertices.
Let $t_v$ be the number of triangles containing a vertex $v$ after the
insertion or before the deletion of an edge, respectively.

If an edge $\Edge{s}{x}$ is inserted or deleted,
then by \Lemma{count-paw-edge},
we obtain the number of paws containing $\Edge{s}{x}$
in
worst-case $\bigO(\min(m^{\max(1-\PartFrac,2\PartFrac)},n^2))$ time
if we maintain the data structure $\DS_{\PartFrac}$ with auxiliary counts
\Aux{uLv},
\Aux{t}, and
\Aux{cLV}.

If an edge $\Edge{u}{v}$ is inserted or deleted and $u \neq s \neq v$,
we distinguish two cases.
If $\Edge{s}{u}\in E$ ($\Edge{s}{v}\in E$), the count changes by the difference
in the number of paws containing $\Edge{s}{u}$ ($\Edge{s}{v}$) before and after
the update.
The difference can be obtained in at most four queries using again
\Lemma{count-paw-edge}.
If both $\Edge{s}{u}, \Edge{s}{v}\in E$, we subtract
$\deg(s) - 2  + \deg(u) -2 + \deg(v)-2$
from the sum of both differences to avoid double-counting paws that contain the
triangle $s, u, v$.
By this, we cover possibility (1), (3), and partially also (2).
The remaining situation is that $s$ is the non-central vertex incident
to the arm and $\Edge{u}{v}$ is not incident to the central vertex.
The number of paws in this case equals the number of claws containing $u$,
$v$, and $s$ with a low-degree central vertex, which is given by \Aux{cL[$u, v,
s$]}, and the number of claws with a high-degree central vertex, which we
obtain by counting the number of vertices $h \in \High$ with $\Edge{h}{s},
\Edge{h}{u}, \Edge{h}{v}\in E$ in $\bigO(\Card{\High}) =
\bigO(m^{1-\PartFrac})$ time.

By \Lemma{maintain-uLv}, \Lemma{maintain-t}, and \Lemma{maintain-cLV},
we can maintain
\Aux{uLv},
\Aux{t}, and
\Aux{cLV}
in amortized
$\bigO(m^{\PartFrac} + m^{\max(1-\PartFrac,\PartFrac)} + m^{2\PartFrac})
 = \bigO(m^{\max(1-\PartFrac,2\PartFrac)})$ time and $\bigO(n^2)$ space.
\Aux{cL} can be maintained in $\bigO(m^{2\PartFrac})$ time and
$\bigO(n\min(n^2,m^{3\PartFrac}))$ space by \Lemma{maintain-cL}.
With additional
$\bigO(\min(m^{\max(1-\PartFrac,2\PartFrac)},n^2) + m^{1-\PartFrac})$
time to update the count, we obtain a total amortized update time
of $\bigO(m^\frac{2}{3})$ with $\PartFrac = \frac{1}{3}$.
The total space complexity can be reduced to $\bigO(n^2)$ by not
maintaining \Aux{cL} for arbitrary triples of vertices, but
only for pairs and fixing $s$ as third vertex.
\end{proof}
\begin{lemma}\LemLabel{count-s-4cycle}
Let $G = (V, E)$ be a dynamic graph, $\Graphlet$ be the $4$-cycle \PictoFourCycle{} and $s \in V$.
We can maintain $\Count(G, \Graphlet, s)$ in amortized
$\bigO(m^{2/3})$ time and $\bigO(n^2)$ space.
\end{lemma}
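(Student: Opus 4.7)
The approach mirrors that of \Lemma{count-s-paw}: we split each update by whether the inserted or deleted edge is incident to $s$, and we maintain the data structure $\DS_{\PartFrac}$ restricted to the auxiliary counts \Aux{uLv}, \Aux{uLLv}, and \Aux{uHv}, eventually fixing $\PartFrac = \frac{1}{3}$.

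If the updated edge is $\Edge{s}{x}$, the change to $\Count(G, \Graphlet, s)$ equals $\pm$ the number of $4$-cycles through $\Edge{s}{x}$, which by \Lemma{count-4cycle-edge} is computable in worst-case $\bigO(\min(m^{\max(1-\PartFrac, 2\PartFrac)}, n^2))$ time from these three auxiliary counts.

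Otherwise, the updated edge $\Edge{u}{v}$ satisfies $u \neq s \neq v$, and every $4$-cycle containing both $s$ and $\Edge{u}{v}$ has vertex set $\{u, v, s, t\}$ for some $t \notin \{u, v, s\}$. The two vertices opposite the edge $\Edge{u}{v}$ are filled by $\{s, t\}$ in one of two orders, giving cycles $u$-$v$-$s$-$t$-$u$ or $u$-$v$-$t$-$s$-$u$. These have distinct edge sets (the first contains $\Edge{v}{s}$ and $\Edge{t}{u}$, the second $\Edge{v}{t}$ and $\Edge{s}{u}$), so they are genuinely different $4$-cycles. Writing $p_2(s, y)$ for the number of length-$2$ paths from $s$ to $y$, the contribution of $\Edge{u}{v}$ to the count thus equals
\[
[\Edge{v}{s} \in E] \cdot (p_2(s, u) - [\Edge{v}{s} \in E]) + [\Edge{s}{u} \in E] \cdot (p_2(s, v) - [\Edge{s}{u} \in E]),
\]
where the correction terms exclude the invalid choice $t = v$ (resp.\ $t = u$), which would otherwise be counted in $p_2(s, u)$ (resp.\ $p_2(s, v)$) precisely because $\Edge{u}{v} \in E$ after insertion (or before deletion). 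We compute $p_2(s, y)$ as \Aux{uLv[$s, y$]} plus the number of $h \in \High$ with $\Edge{s}{h}, \Edge{y}{h} \in E$, the latter by iterating over $\High$ in $\bigO(m^{1-\PartFrac})$ time. For deletions, the change is the negation of the analogous quantity computed against the pre-deletion graph.

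For $\PartFrac = \frac{1}{3}$, \Lemma{maintain-uLv}, \Lemma{maintain-uLLv}, and \Lemma{maintain-uHv} provide amortized auxiliary maintenance cost $\bigO(m^{\PartFrac} + m^{2\PartFrac} + m^{\max(1-\PartFrac, \PartFrac)}) = \bigO(m^{2/3})$ and total space $\bigO(n^2)$. Combined with the $\bigO(m^{2/3})$ per-update query cost in both cases, this yields an amortized update time of $\bigO(m^{2/3})$ in $\bigO(n^2)$ space, as claimed. The main subtlety is the correctness of the Case~2 formula: one has to verify that the two labelings of $\{s, t\}$ around $\Edge{u}{v}$ contribute to genuinely distinct $4$-cycles and that each $\pm 1$ correction removes exactly the invalid common neighbor $t \in \{u, v\}$; after this, the remaining analysis is routine bookkeeping.
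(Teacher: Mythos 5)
Your proof is correct and follows essentially the same route as the paper: handle updates incident to $s$ via \Lemma{count-4cycle-edge} using \Aux{uLv}, \Aux{uLLv}, and \Aux{uHv}, and for updates $\Edge{u}{v}$ with $u\neq s\neq v$ count the two cycle orientations $u$-$v$-$s$-$t$-$u$ and $u$-$v$-$t$-$s$-$u$ via length-$2$ paths from $s$ (low-degree part from \Aux{uLv[$s,\cdot$]}, high-degree part by iterating over $\High$), subtracting $1$ to exclude the degenerate choice $t\in\{u,v\}$. Your bracket-notation formula makes the $\pm1$ corrections slightly more explicit than the paper's prose, but the decomposition, auxiliary structures, choice $\PartFrac=\frac13$, and resulting $\bigO(m^{2/3})$ / $\bigO(n^2)$ bounds coincide.
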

\begin{proof}
If an edge $\Edge{s}{x}$ is inserted or deleted, the count changes by
the number of $4$-cycles containing $\Edge{s}{x}$, which can be
obtained in worst-case
$\bigO(\min(m^{\max(1-\PartFrac,2\PartFrac)},n^2))$ time
by \Lemma{count-4cycle-edge}
if $\DS_{\PartFrac}$ with auxiliary counts
\Aux{uLLv},
\Aux{uLv}, and
\Aux{uHv} is maintained.

If an edge $\Edge{u}{v}$ is inserted or deleted and $u \neq s \neq v$,
and $\Edge{s}{u} \in E$,
the count changes by the number of length-2 paths between $s$ and $v$
besides $\Sequence{\Edge{s}{u}, \Edge{u}{v}}$,
which the sum of $\Aux{uLv[$s, v$]}$
and the number of high-degree vertices $h \in \High$ with $\Edge{s}{h},
\Edge{h}{v}$ minus 1.
Analogously, if $\Edge{s}{v} \in E$, the count changes by the number of
length-2 paths between $s$ and $u$ besides $\Sequence{\Edge{s}{v},
\Edge{v}{u}}$, which is $\Aux{uLv[$s, u$]}$ plus the number of high-degree
vertices $h \in \High$ with $\Edge{s}{h}, \Edge{h}{u}$
minus 1.
This can be done in $\bigO(\Card{\High}) = \bigO(m^{1-\PartFrac})$ time.

By
\Lemma{maintain-uLv},
\Lemma{maintain-uLLv}, and
\Lemma{maintain-uHv},
we can maintain $\DS_{\PartFrac}$
with an amortized update time of
$\bigO(m^{\max(2\PartFrac,1-\PartFrac)})$ and $\bigO(n^2)$ space.
Together with the additional time for updating the count, we arrive at an
amortized update time of $\bigO(m^{\max(2\PartFrac,1-\PartFrac)}) =
\bigO(m^{2/3})$ with $\PartFrac = \frac{1}{3}$ and a space complexity of
$\bigO(n^2)$.
\end{proof}

\begin{lemma}\LemLabel{count-s-diamond}
Let $G = (V, E)$ be a dynamic graph, $\Graphlet$ be the diamond \PictoDiamond{} and $s \in V$.
We can maintain $\Count(G, \Graphlet, s)$ in amortized
$\bigO(m^{2/3})$ update time and $\bigO(n^2)$ space.
\end{lemma}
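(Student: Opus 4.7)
The plan is to adapt the approach of \Lemma{count-s-paw} and \Lemma{count-s-4cycle}: maintain $\DS_{\PartFrac}$ with auxiliary counts \Aux{uLv}, \Aux{pLL}, \Aux{uHv}, and \Aux{cL} for $\PartFrac = 1/3$, and upon each edge update compute the change of $\Count(G, \Graphlet, s)$ by invoking \Lemma{count-diamond-edge} on edges incident to $s$ and correcting for overcounting. When the inserted or deleted edge is $\Edge{s}{x}$, the change equals the number of diamonds through $\Edge{s}{x}$, which \Lemma{count-diamond-edge} returns in $\bigO(m^{2/3})$ worst-case time.

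The interesting case is an update to an edge $\Edge{u}{v}$ with $u \neq s \neq v$. The structural observation I would rely on is that every diamond containing both $s$ and $\Edge{u}{v}$ must contain at least one of the edges $\Edge{s}{u}$ or $\Edge{s}{v}$: the unique anti-chord (the single missing edge of the diamond on its four vertices) cannot be $\Edge{u}{v}$, and a short case analysis on the five remaining choices for the anti-chord shows that at least one of $\Edge{s}{u}, \Edge{s}{v}$ survives in the diamond. In particular, if neither of these two edges is in $E$, the $s$-diamond count is unchanged.

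Otherwise, let $X_1$ and $X_2$ denote the diamonds in the current graph containing $\{\Edge{u}{v}, \Edge{s}{u}\}$ and $\{\Edge{u}{v}, \Edge{s}{v}\}$, respectively, and let $X_{12}$ denote those containing the entire triangle on $\{s,u,v\}$. By inclusion--exclusion the change equals $|X_1| + |X_2| - |X_{12}|$, with any term omitted when its corresponding edge is absent. I compute each $|X_i|$ as the difference between two \Lemma{count-diamond-edge} queries on $\Edge{s}{u}$ (resp.\ $\Edge{s}{v}$), evaluated before and after the modification of $\Edge{u}{v}$. For $|X_{12}|$ I note that a diamond containing the triangle on $\{s,u,v\}$ is determined by its fourth vertex $w$ together with the choice of chord among the three triangle edges; hence $|X_{12}|$ is the sum of the three common-neighbor counts of the pairs $\{s,u\}, \{s,v\}, \{u,v\}$, each excluding the opposite triangle vertex. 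Each common-neighbor count is delivered in $\bigO(m^{\max(\PartFrac,1-\PartFrac)}) = \bigO(m^{2/3})$ time by combining \Aux{uLv} with a scan over $\High$.

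Combining with \Lemma{maintain-uLv}, \Lemma{maintain-pLL}, \Lemma{maintain-uHv}, and \Lemma{maintain-cL} for $\PartFrac = 1/3$ gives amortized $\bigO(m^{2/3})$ per update. For the space bound I would restrict \Aux{cL} to triples one of whose anchors is $s$: every call to \Lemma{count-diamond-edge} made by this algorithm is on an edge incident to $s$, so each access to \Aux{cL} already involves a triple containing $s$. This restriction shrinks \Aux{cL} to a table of $\bigO(n^2)$ entries, and the proof of \Lemma{maintain-cL} still goes through once each update loop is restricted to triples containing $s$. All remaining tables of $\DS_{\PartFrac}$ are $\bigO(n^2)$ for $\PartFrac = 1/3$, giving the claimed total space. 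The main technical obstacle will be sequencing the ``before'' and ``after'' queries so that the data structure sees a consistent snapshot at each call to \Lemma{count-diamond-edge}, in accordance with the update ordering prescribed for $\DS_{\PartFrac}$.
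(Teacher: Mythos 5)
Your proof is correct, but it takes a genuinely different route than the paper. The paper's proof for the case $u\neq s\neq v$ enumerates the diamonds through $\Edge{u}{v}$ by a case analysis on which edge is the chord relative to $s$: (1) $\Edge{u}{v}$ is the chord, (2) $s$ is incident to the chord, (3) $u$ or $v$ (but not $s$) is incident to the chord. Each case is counted directly from triangle-edge queries (\Lemma{count-triangle-edge}) and, for case~(3), from \Aux{cL[$s,u,v$]} plus a scan over $\High$. You instead observe that every $s$-diamond through $\Edge{u}{v}$ must contain $\Edge{s}{u}$ or $\Edge{s}{v}$ (the paper uses this implicitly as the precondition of case~(3) but never states it) and apply inclusion--exclusion: $|X_1|+|X_2|-|X_{12}|$, where the $|X_i|$ are obtained as before/after differences of \Lemma{count-diamond-edge} queries on $\Edge{s}{u}$ and $\Edge{s}{v}$, and $|X_{12}|$ is a short sum of common-neighbor counts. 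I verified your $|X_{12}|$ formula: a fourth vertex $w$ adjacent to exactly two of $\{s,u,v\}$ contributes $1$ and one adjacent to all three contributes $3$, matching the three diamonds in a $K_4$ that contain the fixed triangle. The two approaches reach the same $\bigO(m^{2/3})$ and $\bigO(n^2)$ bounds; yours is conceptually cleaner because it reuses \Lemma{count-diamond-edge} as a black box and needs no fresh case analysis, at the cost of a constant factor more edge queries and the before/after sequencing discipline you already flag. Your restriction of \Aux{cL} to triples anchored at $s$ is the same trick the paper uses to get the space down to $\bigO(n^2)$, and the update loops of \Lemma{maintain-cL} restrict cleanly as you claim.
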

\begin{proof}
If an edge $\Edge{s}{x}$ is inserted or deleted, the change in the number of
diamonds can be obtained in
$\bigO(\min(m^{\max(1-\PartFrac,2\PartFrac)},n^2))$ worst-case time
by \Lemma{count-diamond-edge}
if we maintain the data structure $\DS_{\PartFrac}$ with auxiliary counts
\Aux{uLv}, \Aux{pLL}, \Aux{uHv}, and \Aux{cL}.

If an edge $\Edge{u}{v}$ is inserted or deleted and $u \neq s \neq v$,
we distinguish three cases:
(1) $\Edge{u}{v}$ is the chord,
(2) $s$ is incident to the chord,
(3) $u$ or $v$ is incident to the chord, but not $s$ and $\Edge{u}{s}$ or $\Edge{v}{s} \in E$.
For case (1), if $\Edge{u}{s}, \Edge{v}{s} \in E$, the count changes by the
number of triangles containing $\Edge{u}{v}$ minus $1$ to correct for the
triangle containing $s$.
For case (2), if $\Edge{u}{s}, \Edge{v}{s} \in E$, the count changes by the
number of triangles containing $\Edge{s}{u}$ minus $1$
plus
the number of triangles containing $\Edge{s}{v}$ minus $1$,
correcting with minus $1$ again for triangles containing $v$ and $u$,
respectively.
For case (3),
the count changes by the number of claws \Aux{cL[$s, u, v$]} with
a low-degree fourth vertex plus the number of high-degree vertices $h \in \High$
such that $\Edge{h}{s}, \Edge{h}{u}, \Edge{h}{v} \in E$
if either $\Edge{s}{u}$ or $\Edge{s}{v} \in E$, and by twice this amount
if both $\Edge{s}{u}, \Edge{s}{v} \in E$.

The number of triangles containing an edge $\Edge{u}{v}$, $\Edge{s}{u}$,
or $\Edge{s}{v}$
can be queried in $\bigO(m^{1-\PartFrac})$ worst-case time if $\DS_{\PartFrac}$
with \Aux{uLv} is maintained by \Lemma{count-triangle-edge}.
We hence need to maintain $\DS_{\PartFrac}$ with
auxiliary counts
\Aux{uLv}, \Aux{pLL}, \Aux{uHv}, and \Aux{cL}.
By
\Lemma{maintain-uLv},
\Lemma{maintain-pLL},
\Lemma{maintain-uHv}, and
\Lemma{maintain-cL}
and together with the query times to update the count,
we arrive at an
amortized total update time of $\bigO(m^{\max(2\PartFrac,1-\PartFrac)}) =
\bigO(m^{2/3})$ with $\PartFrac =
\frac{1}{3}$. %
The space complexity can be reduced to $\bigO(n^2)$ by maintaining \Aux{cL}
only for those triplets that contain $s$.
\end{proof}

\begin{lemma}\LemLabel{count-s-4clique}
Let $G = (V, E)$ be a dynamic graph, $\Graphlet$ be the $4$-clique
\PictoClique{}, and $s \in V$.
We can maintain $\Count(G, \Graphlet, s)$ in amortized
$\bigO(m)$ update time and constant space.
\end{lemma}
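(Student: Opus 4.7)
The plan is to maintain $\Count(G, \Graphlet, s)$ as a single integer counter and, on every edge update, recompute in $\bigO(m)$ time the number of $4$-cliques on $s$ that are created or destroyed by that update, adjusting the counter accordingly. The guiding observation is that a $4$-clique containing $s$ is precisely $\{s\} \cup T$, where $T$ is a triangle in the subgraph induced by $\Neigh(s)$; hence a $4$-clique through $s$ is affected by an edge update if and only if the updated edge lies in this induced triangle or is incident to $s$ itself. I would split the argument into two cases depending on whether $s$ is an endpoint of the updated edge.

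First, I would handle the insertion or deletion of an edge $\Edge{s}{x}$. The affected cliques are exactly those of the form $\{s, x, u, v\}$ with $u, v \in \Neigh(s) \cap \Neigh(x) \setminus \{s, x\}$ and $\Edge{u}{v} \in E$. To count them I would iterate over every $u \in \Neigh(s)$, skip $u$ unless $u \in \Neigh(x)$ (a constant-time check), then iterate over $v \in \Neigh(u)$ and increment a local accumulator whenever $v \in \Neigh(s) \cap \Neigh(x)$ and $v \neq x$. Dividing the accumulator by two at the end compensates for the fact that each unordered pair $\{u,v\}$ is discovered twice. Since every membership and degree query is $\bigO(1)$, the total work in this case is $\bigO(\sum_{u \in \Neigh(s)} \Degree(u)) \subseteq \bigO(m)$.

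Second, I would handle the insertion or deletion of an edge $\Edge{u}{v}$ with $u \neq s \neq v$. Here the affected cliques are those of the form $\{s, u, v, w\}$ with $w \in \Neigh(s) \cap \Neigh(u) \cap \Neigh(v) \setminus \{s, u, v\}$. I would iterate over one of the three neighborhoods, e.g.\ $\Neigh(u)$, and for each candidate $w$ verify the remaining two adjacencies in constant time. Each match contributes exactly one $4$-clique, so there is no double-counting, and the total work is $\bigO(\Degree(u)) \subseteq \bigO(m)$.

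Since the counter and a handful of local variables are the only state kept beyond the underlying graph representation, the auxiliary space is $\bigO(1)$. Combining the two cases gives worst-case (and hence amortized) $\bigO(m)$ update time and $\bigO(1)$ space, as claimed. I expect the only subtle step to be the bookkeeping of the first case --- avoiding double counting of the unordered pair $\{u,v\}$ and getting the sign of the update right for insertions versus deletions --- but this is entirely analogous to the corresponding bookkeeping already used in \Lemma{count-s-diamond}.
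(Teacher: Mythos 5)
Your proposal follows essentially the same two-case route as the paper (split by whether $s$ is an endpoint of the updated edge, recompute the local contribution in $\bigO(m)$ time with only a constant-size accumulator), so the overall approach is correct. The paper handles the $\Edge{s}{x}$ case by invoking \Theorem{count-any} to query $\Count(G, \Graphlet, \Edge{s}{x})$ directly, and handles the $u \neq s \neq v$ case by scanning all of $V$ for a common neighbor; you instead unroll both scans explicitly over neighborhoods, which gives the same $\bigO(m)$ bound.

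There is one small but genuine gap in your second case. You characterize the affected cliques as $\{s,u,v,w\}$ with $w \in \Neigh(s) \cap \Neigh(u) \cap \Neigh(v)$ and then ``verify the remaining two adjacencies'' of $w$, but the algorithm as written never checks that $\Edge{s}{u}$ and $\Edge{s}{v}$ are themselves in $E$. Without that check, a vertex $w$ adjacent to all of $s$, $u$, $v$ would be counted even when $s$ is not adjacent to $u$ or to $v$, in which case $\{s,u,v,w\}$ is not a $4$-clique at all. The fix is the same constant-time guard the paper uses: return zero immediately unless $\Edge{u}{s}, \Edge{v}{s} \in E$, and only then iterate over $\Neigh(u)$ (or $V$). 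With that guard added, the argument is complete.
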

\begin{proof}
If an edge $\Edge{s}{x}$ is inserted or deleted,
we obtain the change in the number of $4$-cliques by
querying  $\Count(G, \Graphlet, \Edge{s}{x})$,
which can be done in $\bigO(m)$ worst-case time
and constant space by \Theorem{count-any}.
If an edge $\Edge{u}{v}$ is inserted or deleted, $u \neq s \neq v$,
and $\Edge{u}{s}, \Edge{v}{s}\in E$,
we iterate over all $y \in V$ and test whether $\{s, u, v, y\}$ form
a $4$-clique.
This takes $\bigO(n)$ time.
In total, we arrive at a worst-case update time of $\bigO(m)$ and constant space.
\end{proof}
\section{Omitted Proofs from \Section{lowerbounds}}\SectLabel{lb-proofs}
\begin{observation}\ObLabel{cycle-side-long}
Every path in $G_{M,g,h}$ that contains two distinct vertices
$l_i \neq l_{i'} \in L$
(resp.~$r_j \neq r_{j'} \in R$) has length at least $2g+2$ (resp.~$2h+2$).
\end{observation}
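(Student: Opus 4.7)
The plan is to exploit the very rigid ladder structure of $G_{M,g,h}$. By \Definition{GMgh}, every vertex $l_i \in L = L^{(0)}$ has degree exactly one (its single neighbor is $l^{(1)}_i$), and every intermediate vertex $l^{(p)}_i$ with $0 < p < g$ has degree exactly two (its neighbors are $l^{(p-1)}_i$ and $l^{(p+1)}_i$). The same holds on the $R$-side. This rigidity will essentially force any simple path between two $L$-vertices to climb all the way to $L^{(g)}$, cross to $R^{(h)}$, and climb back down.

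First, I would argue that since $l_i$ has degree $1$, it can only appear as an endpoint of any simple path. Hence, whenever a path contains two distinct $L$-vertices $l_i, l_{i'}$, both must be endpoints, so the path goes from $l_i$ to $l_{i'}$. Next, I would observe that the initial edge is forced to be $\{l_i, l^{(1)}_i\}$, and inductively each intermediate $l^{(p)}_i$ has only one unused neighbor available (namely $l^{(p+1)}_i$, since $l^{(p-1)}_i$ is already on the path). Therefore the first $g$ edges of the path are $\Sequence{\{l_i, l^{(1)}_i\}, \{l^{(1)}_i, l^{(2)}_i\}, \dots, \{l^{(g-1)}_i, l^{(g)}_i\}}$. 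Symmetrically, the last $g$ edges of the path are forced to be the analogous column from $l^{(g)}_{i'}$ down to $l_{i'}$.

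It then remains to bound the length of the middle portion connecting $l^{(g)}_i$ to $l^{(g)}_{i'}$. Because $i \neq i'$, these two vertices are distinct, and $L^{(g)}$ contains no internal edges (all edges leaving $L^{(g)}$ go either to $L^{(g-1)}$ or to $R^{(h)}$ via $E_M$). Since the edges back into $L^{(g-1)}$ from $l^{(g)}_i$ and $l^{(g)}_{i'}$ are already committed to the two column segments above, the middle portion must leave $L^{(g)}$ via $E_M$, contributing at least $2$ additional edges. Adding up gives length $\geq g + 2 + g = 2g+2$, and the $R$-side case is fully symmetric with $h$ in place of $g$.

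The main obstacle I anticipate is pedantic bookkeeping to rule out the possibility that the ``middle'' portion somehow re-enters the $L$ columns in a way that shortens the path. This is handled by noting that all column vertices $l^{(p)}_i$ and $l^{(p)}_{i'}$ with $p < g$ are already used up by the two forced prefixes/suffixes, so any detour from $L^{(g)}$ must proceed via $R^{(h)}$, and that any such detour contributes at least two edges before returning to $L^{(g)}$; a short induction on how many times the middle portion visits $L^{(g)}$ makes this rigorous.
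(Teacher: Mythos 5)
The paper itself gives no proof for this observation; it is stated as obvious. Your argument is essentially the right one and is correct whenever $g\ge 1$: a degree-one vertex can only be a path endpoint, so both $l_i$ and $l_{i'}$ are the endpoints; the ladder $l_i,l^{(1)}_i,\dots,l^{(g)}_i$ is forced edge by edge because each intermediate rung has degree exactly two and only one unused neighbor; the middle portion joins the two distinct, non-adjacent vertices $l^{(g)}_i, l^{(g)}_{i'}\in L^{(g)}$ and hence has length at least two (you do not even need the ``must exit via $E_M$'' discussion — non-adjacency already gives the bound, since you only need a lower bound and detours can only lengthen the path). Summing gives $g+2+g=2g+2$, and the $R$-side is symmetric.

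The one genuine gap is the degenerate case $g=0$ (resp.\ $h=0$), which \Definition{GMgh} allows and which actually arises, e.g.\ in \Lemma{k-odd-cycle} with $k=3$ where $g=h=0$. In that case $L=L^{(0)}=L^{(g)}$, so $l_i$ is adjacent to vertices of $R^{(h)}$ via $E_M$ and need not have degree one; the paper's degree claim in \Definition{GMgh} is itself only valid for $g,h\ge 1$. Your degree-one argument therefore does not apply. The observation is still trivially true there, because $L^{(g)}$ is an independent set in the multipartite graph, so any path containing two distinct vertices of $L$ already has length at least $2=2\cdot 0+2$. You should either treat $g=0$ separately in this way or restrict your degree-based argument explicitly to $g\ge 1$.
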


Each case in \Lemma{subgraph-det-cnt} corresponds to one of the following lemmas:
\paragraph*{($s$-)$k$-Odd Cycle Detection}
\begin{lemma}[($s$-)$k$-Odd Cycle Detection]\LemLabel{k-odd-cycle}
Given a partially dynamic algorithm $\Alg$ for ($s$-)$k$-cycle detection with $k
\geq 3$ and $k$ odd, one can solve \oneuMv{} with parameters $n_1$ and $n_2$ by running the
preprocessing step of $\Alg$ on a graph with $\bigO(m + \sqrt{m}\cdot k)$ edges
and $\Theta(\sqrt{m}\cdot k)$ vertices, and then making $\bigO(\sqrt{m})$
insertions (or $\bigO(\sqrt{m})$ deletions) and $1$ query, where $m$ is such
that $n_1=n_2=\sqrt{m}$.
\end{lemma}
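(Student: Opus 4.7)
The plan is to reduce \oneuMv{} to $(s\text{-})k$-cycle detection by extending the diamond construction in Case~(\Item{diamond-det}) so that the distinguishing cycle has length exactly $k$. Given an $n_1 \times n_2$ matrix $M$ with $n_1 = n_2 = \sqrt m$, I would take $G_{M,g,h}$ from \Definition{GMgh} with $g = h = (k-3)/2$ (a non-negative integer since $k \geq 3$ is odd), add a new vertex $s$, and connect $s$ by an edge to every vertex of $L \cup R$. This yields $\bigO(k\sqrt m)$ vertices and $n_1 n_2 + g n_1 + h n_2 + n_1 + n_2 = \bigO(m + k\sqrt m)$ edges, on which we run the preprocessing step of the assumed algorithm~$\Alg$. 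Once $u, v$ arrive, in the decremental case I delete $\{s, l_i\}$ iff $u_i = 0$ and $\{s, r_j\}$ iff $v_j = 0$, which is $\bigO(\sqrt m)$ deletions; the incremental case is symmetric (preprocess without the $s$-incident edges, then insert only those for the $1$-entries of $u$ and $v$). A single query then decides whether a $k$-cycle (or $s$-$k$-cycle) is present.

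The correctness claim to prove is that $u^\top M v \geq 1$ iff the resulting graph contains a $k$-cycle, and that every such $k$-cycle automatically passes through $s$. The easy direction is witnessed by the canonical ``spine'' $l_i, l^{(1)}_i, \ldots, l^{(g)}_i, r^{(h)}_j, \ldots, r_j$ of length $g + 1 + h = k - 2$, which together with the edges $\{s, l_i\}$ and $\{s, r_j\}$ closes into a $k$-cycle through $s$ whenever $u_i = v_j = M_{ij} = 1$. For the converse, I would first verify that $G_{M,g,h}$ is bipartite (the coloring $l^{(p)}_i \mapsto p \bmod 2$, $r^{(q)}_j \mapsto (g+h+q+1) \bmod 2$ is proper), so by \Observation{cycle-bip-even} every cycle avoiding $s$ is even; since $k$ is odd, this immediately rules out $k$-cycles not through $s$ and handles both the $s$- and non-$s$-variants in one stroke. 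Hence any $k$-cycle has the form $s, x, P, y, s$ with $x, y \in L \cup R$ neighbors of $s$ and $P$ a length-$(k-2)$ path in $G_{M,g,h}$.

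The main obstacle is showing that the only length-$(k-2)$ paths in $G_{M,g,h}$ between neighbors of $s$ are the canonical single-matrix-edge paths corresponding to nonzero entries of $M$. Using the bipartition, $l_i$ and $r_j$ lie in different color classes (because $g + h + 1 = k - 2$ is odd), so paths with both endpoints in $L$ or both in $R$ have even length and cannot equal $k - 2$; hence \wilog{} $x \in L$ and $y \in R$. Because $L$ and $R$ each induce a disjoint union of chains in $G_{M,g,h}$, every $L$-to-$R$ path uses an odd positive number of $E_M$-edges, and a short case analysis (in the spirit of \Observation{cycle-side-long}) shows that using exactly one matrix edge forces the path to be a canonical spine of length $k-2$ requiring $M_{ij}=1$, while using three or more matrix edges forces length at least $g + h + 3 = k > k - 2$. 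Together with the graph size and $\bigO(\sqrt m)$ update count established above, this yields the reduction in the form required by the lemma.
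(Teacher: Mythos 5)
Your construction is identical to the paper's (take $G_{M,g,h}$ with $g=h=(k-3)/2$, add $s$ adjacent to all of $L\cup R$, delete the $s$-incident edges corresponding to zero entries of $u,v$), and the size and update-count accounting agree. The one place you diverge is in how you rule out $k$-cycles through $s$ whose two $s$-neighbours lie on the same side: the paper invokes \Observation{cycle-side-long} (a path through two distinct $L$-vertices has length at least $2g+2$, so such a cycle would have length at least $k+1$), whereas you exhibit an explicit proper $2$-colouring of $G_{M,g,h}$ and argue that a path between two vertices of $L$ (or two of $R$) must have even length, which cannot equal the odd number $k-2$. Both arguments are correct; yours is a parity argument, the paper's a length lower bound, and they are interchangeable here. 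You also spell out the final step — that a length-$(k-2)$ path from $l_i$ to $r_j$ in $G_{M,g,h}$ must use exactly one $E_M$-edge and hence be the canonical spine certifying $M_{ij}=1$, since using three or more matrix edges already costs $g+h+3=k>k-2$ — which the paper leaves implicit after establishing that the $k$-cycle must touch both $L$ and $R$. Overall this is the same reduction with a mildly different and somewhat more explicit correctness argument.
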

\begin{proof}%
We only prove the decremental case, the incremental case has a symmetric proof.
Consider a \oneuMv{} problem with
$n_1 = n_2 = \sqrt m$.
Given $M$, we construct the $(k-1)$-partite graph $G$ from
$G_{M,g,h}$ with $g = h = \frac{1}{2}(k-3)$
by adding to it a vertex $s$
and edges $\Edge{s}{l_i}$, $\Edge{r_j}{s}$ for all $r_j \in R$, $l_i\in L$.
Thus, the total number of edges is at most
$n_1n_2 + \frac{1}{2}(k-1)(n_1 + n_2)=\bigO(m + k\cdot\sqrt{m})$.
Once $u$ and $v$ arrive, we delete $\Edge{s}{l_i}$ iff $u_i=0$ and
delete $\Edge{r_j}{s}$ iff $v_j=0$.
See \Figure{lb-examples} for an example with $k=5$.

We have $u^\top Mv= 1$ iff there is a cycle of length $k$ in $G$ iff
there is a cycle of length $k$ incident to $s$:
By \Observation{cycle-bip-even},
every cycle that does not contain $s$ is even.
By \Observation{cycle-side-long}, every cycle that contains $s$ but
does not contain both a vertex in $L$ and $R$ has length at least $k+1$.
In total, we need to do $n_1+n_2=\bigO(\sqrt{m})$ updates and $1$ query.
\end{proof}

\paragraph*{($s$-)Paw Detection}
\begin{lemma}[($s$-)Paw Detection]\LemLabel{lb-paw}
Given a partially dynamic algorithm $\Alg$ for ($s$-)paw detection, one can
solve \oneuMv{} with parameters $n_1$ and $n_2$ by running the preprocessing
step of $\Alg$ on a graph with $\bigO(m + \sqrt{m}\cdot k)$ edges and
$\Theta(\sqrt{m}\cdot k)$ vertices, and then making $\bigO(\sqrt{m})$
insertions (or $\bigO(\sqrt{m})$ deletions) and $1$ query, where $m$ is such
that $n_1=n_2=\sqrt{m}$.
\end{lemma}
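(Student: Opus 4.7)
The plan is to mirror the diamond construction, but on a simpler base graph so that triangles can appear only through $s$. Working in the decremental setting (the incremental case will be symmetric), I would take a \oneuMv{} instance with $n_1 = n_2 = \sqrt{m}$ and build $G$ from $G_{M,0,0}$ by adjoining an apex vertex $s$ connected to every vertex in $L \cup R$ and a single pendant vertex $a$ whose only neighbour is $s$. The resulting graph has $\bigO(m)$ edges and $\Theta(\sqrt{m})$ vertices, fitting within the statement's preprocessing budget. When $u, v$ arrive, I would delete $\Edge{s}{l_i}$ whenever $u_i = 0$ and $\Edge{s}{r_j}$ whenever $v_j = 0$; this uses $n_1 + n_2 = \bigO(\sqrt{m})$ decremental updates before the single detection query.

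The correctness claim to establish is that after the deletions $G$ contains a (non-induced) paw iff $u^\top M v \geq 1$, and likewise for the $s$-paw variant. The central observation will be that $G_{M,0,0}$ is bipartite and therefore triangle-free by \Observation{cycle-bip-even}, so every triangle of $G$ must be incident to $s$. Since $a$ is a pendant and both $L$ and $R$ carry no internal edges, the only triangles that can appear are of the form $\Set{s, l_i, r_j}$ with $M_{ij} = 1$, and such a triangle survives iff $u_i = v_j = 1$. Consequently $G$ has a triangle iff $u^\top M v \geq 1$.

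It remains to upgrade the triangle criterion to a paw criterion without introducing spurious paws. The pendant edge $\Edge{s}{a}$ is the key device: whenever a triangle $\Set{s, l_i, r_j}$ is present, the set $\Set{s, l_i, r_j, a}$ forms a paw with $s$ at the centre and the edge $\Edge{s}{a}$ as the arm. Conversely, no paw can exist without a triangle, so the absence of a triangle implies the absence of a paw. Since $s$ lies in every triangle of $G$, the same argument also covers the $s$-paw variant. I expect the main subtlety to be precisely the guarantee that the pendant vertex $a$ cannot accidentally take part in a further triangle, which it cannot because its only neighbour is $s$; this cleanly secures the biconditional between paw existence and the value of $u^\top M v$, and the \OMv{}-based lower bound then transfers as in~\cite{henzinger2015unifying}.
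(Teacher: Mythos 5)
Your construction is identical to the paper's: build $G_{M,0,0}$, attach an apex $s$ adjacent to all of $L\cup R$, add a pendant vertex (the paper calls it $t$, you call it $a$), and delete $\Edge{s}{l_i}$ or $\Edge{s}{r_j}$ according to the zeros of $u$ and $v$. The paper's own proof is a one-line reference to the $k=3$ odd-cycle construction plus the pendant; your write-up simply spells out the correctness argument (bipartiteness of $G_{M,0,0}$ forces every triangle through $s$, the pendant guarantees a paw precisely when a triangle survives, and since every triangle contains $s$ the detection and $s$-detection variants coincide), which is accurate and matches the intended reasoning.
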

\begin{proof}%
Construct $G$ as in the proof of \Lemma{k-odd-cycle} for $k = 3$ and add an
additional vertex $t$ and edge $\Edge{s}{t}$.
Update $G$ as in the proof of \Lemma{k-odd-cycle}.
Then, $u^\top Mv= 1$ iff there is a paw in $G$ iff there is a paw incident to
$s$ and
we need to do $n_1+n_2=\bigO(\sqrt{m})$ updates and $1$ query.
\end{proof}
\begin{lemma}[($s$-)$4$-Clique Detection]\LemLabel{lb-clique}
Given a partially dynamic algorithm $\Alg$ for ($s$-)$4$-clique detection,
one can solve \oneuMv{} with parameters $n_1$ and $n_2$ by running the
preprocessing step of $\Alg$ on a graph with $\bigO(m)$ edges
and $\Theta(\sqrt{m})$ vertices, and then making $\bigO(\sqrt{m})$
insertions (or $\bigO(\sqrt{m})$ deletions) and $1$ query, where $m$ is such
that $n_1=n_2=\sqrt{m}$.
\end{lemma}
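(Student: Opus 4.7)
The plan is to mirror the construction style used in the other cases of \Lemma{subgraph-det-cnt}: start from $G_{M,0,0}$ with $n_1 = n_2 = \sqrt m$, attach a few extra vertices and edges so that a 4-clique exactly witnesses a nonzero entry in $u^\top M v$, and encode the two online vectors via $\bigO(\sqrt m)$ edge updates incident to a single vertex.

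Concretely, I would take $G$ to consist of $G_{M,0,0}$ together with two extra vertices $s$ and $t$, the edge $\Edge{s}{t}$, all edges $\Edge{t}{x}$ for $x \in L \cup R$, and initially all edges $\Edge{s}{x}$ for $x \in L \cup R$. Once $u$ and $v$ arrive, for the decremental case I would delete $\Edge{s}{l_i}$ iff $u_i = 0$ and $\Edge{s}{r_j}$ iff $v_j = 0$, giving $n_1 + n_2 = \bigO(\sqrt m)$ deletions followed by one query; the incremental case is symmetric, starting without any $s$-to-$L \cup R$ edges and inserting exactly those $\Edge{s}{l_i}$ and $\Edge{s}{r_j}$ for which $u_i = 1$ or $v_j = 1$, respectively. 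The edge count is $n_1 n_2 + 2(n_1 + n_2) + 1 = \bigO(m)$ and the vertex count is $n_1 + n_2 + 2 = \Theta(\sqrt m)$, as required.

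The heart of the argument is correctness, namely that a 4-clique exists in the final graph iff $u^\top M v = 1$, and moreover every such 4-clique is incident to $s$ (so the reduction also covers $s$-4-clique detection). I would argue this in two steps. First, $G_{M,0,0}$ is bipartite and therefore triangle-free (a special case of \Observation{cycle-bip-even}), so any triangle of $G$ must use $s$ or $t$; consequently any 4-clique of $G$ must use both $s$ and $t$, because omitting either would force a triangle on three vertices of $L \cup R$. Second, any 4-clique thus has the form $\Set{s, t, x, y}$ with $x, y \in L \cup R$ adjacent; since $L$ and $R$ are independent, this forces $x = l_i$ and $y = r_j$ for some $i, j$. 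The edges $\Edge{s}{t}$, $\Edge{t}{l_i}$, $\Edge{t}{r_j}$ are present by construction, while $\Edge{l_i}{r_j}$, $\Edge{s}{l_i}$, $\Edge{s}{r_j}$ are present iff $M_{ij} = u_i = v_j = 1$, respectively. Hence a 4-clique exists iff some $i, j$ satisfy $u_i = M_{ij} = v_j = 1$, i.e., $u^\top M v = 1$, and every such 4-clique contains $s$.

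The main obstacle I anticipate is not computational but combinatorial bookkeeping: one has to rule out spurious 4-cliques (e.g., those avoiding $s$ or $t$, or using two vertices from the same side $L$ or $R$). However, bipartiteness of $G_{M,0,0}$ together with the fact that $s$ and $t$ are the only vertices outside $L \cup R$ makes this case analysis quite short, so no additional gadgets beyond $t$ are needed.
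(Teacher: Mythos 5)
Your construction is correct, and it is genuinely different from the one in the paper. You build on $G_{M,0,0}$ (a single bipartite layer $L$--$R$ encoding $M$) and add \emph{two} auxiliary vertices $s$ and $t$, both joined to all of $L\cup R$ and to each other; $u$ and $v$ are then encoded by pruning (or growing) the $s$-edges. Since $G_{M,0,0}$ is bipartite and hence triangle-free, and $s,t$ are the only vertices outside $L\cup R$, any $4$-clique must be exactly $\{s,t,l_i,r_j\}$, which exists iff $u_i=M_{ij}=v_j=1$; this gives both $4$-clique and $s$-$4$-clique detection in $\bigO(\sqrt m)$ updates. The paper instead uses $G_{M,1,0}$ (so a duplicated left column $L^{(0)},L^{(1)}$), a single auxiliary vertex $s$, and additionally overlays the complete bipartite graph $L^{(0)}\times R^{(0)}$; the $4$-cliques there are $\{s,l_i,l^{(1)}_i,r_j\}$. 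Your version needs roughly $n_1n_2+2(n_1+n_2)+1$ edges versus the paper's roughly $2n_1n_2+3n_1+n_2$, so it is both simpler to verify and slightly leaner, while matching all the stated bounds ($\bigO(m)$ edges, $\Theta(\sqrt m)$ vertices, $\bigO(\sqrt m)$ updates, one query). The case analysis you anticipate as the main hurdle (spurious $4$-cliques avoiding $s$ or $t$, or with two vertices on the same side) is indeed dispatched exactly as you say by bipartiteness of $G_{M,0,0}$ and independence of $L$ and of $R$, so the argument is complete.
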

\begin{proof}[Proof of \Lemma{lb-clique}]
We only prove the decremental case.
Consider a \oneuMv{} problem with
$n_1 = n_2 = \sqrt m$.
Given $M$, we construct the tripartite graph $G$ from $G_{M,1,0}$
by adding to it a vertex $s$
and edges
$\Edge{s}{l_i}$,
$\Edge{s}{l^{(1)}_i}$,
$\Edge{r_j}{s}$,
$\Edge{l_i}{r_j}$
for all $r_j \in R$, $l_i\in L$.
Thus, the total number of edges is at most
$2n_1n_2 + 3n_1 + n_2=\bigO(m)$.
Once $u$ and $v$ arrive, we delete
$\Edge{s}{l_i}$ and $\Edge{s}{l^{(1)}_i}$ iff $u_i=0$ and
delete $\Edge{r_j}{s}$ iff $v_j=0$.
See \Figure{lb-examples} for an example.

Consider the case that $G$ contains a $4$-clique (\ie, $K_4$). %
Note that each edge of the clique is shared by exactly two triangles, whose
vertices together induce $K_4$.
In consequence of \Observation{cycle-bip-even}, every triangle must be
incident to $s$ or contain an edge $\Edge{l_i}{r_j}$, $l_i \in L, r_j \in R$
for some $1 \leq i \leq n_1$, $1 \leq j \leq n_2$.
As $s$ can be only one vertex of $K_4$ and the three remaining vertices also
form a triangle in $G$, $K_4$ must contain an edge $\Edge{l_i}{r_j}, l_i \in L,
r_j \in R$ for some $1 \leq i \leq n_1$, $1 \leq j \leq n_2$.
By construction, each vertex $l_i \in L$ has an edge to $l^{(1)}_i$ and
possibly $s$, and apart from that has only neighbors in $R$.
Thus, $\Edge{l_i}{r_j}$ is always part of a triangle $\{l_i,
l^{(1)}_i, r_j\}$, and possibly a second triangle $\{s, l_i, r_j\}$.
However, by construction, no triangle $\{l_i, r_j, r_{j'}\}$ with $j \neq j'$
can exist, as $G_{M,1,0}$ is tripartite and no edge between two vertices in $R$
has been added.
Thus, if $\Edge{l_i}{r_j}$ is part of $K_4$,
there are triangles $\{l_i,l_i^{(1)}, r_j\}$ and $\{s, l_i, r_j\}$.
By construction, $G$ then also contains $\Edge{s}{l^{(1)}_i}$, which implies
that $K_4$ is induced by the vertices $\{s, l_i, l^{(1)}_i, r_j\}$.

Hence, we have $u^\top Mv= 1$
iff there is a $4$-clique in $G$
iff $s$ is a vertex of a $4$-clique in $G$.
In total, we need to do $2n_1+n_2=\bigO(\sqrt{m})$ updates and $1$ query.
\end{proof}
\begin{lemma}[($s$-)Length-$3$ Path Counting]\LemLabel{three-path-counting}
Given a partially dynamic algorithm $\Alg$ for counting the number of ($s$-)length-$3$ paths,
one can solve \oneuMv{} with parameters $n_1$ and $n_2$ by
running the preprocessing step of $\Alg$ on a graph with $\bigO(m +
\sqrt{m}\cdot k)$ edges and $\Theta(\sqrt{m}\cdot k)$ vertices, and then making
$\bigO(\sqrt{m})$ insertions (or $\bigO(\sqrt{m})$ deletions) and $1$ query,
where $m$ is such that $n_1=n_2=\sqrt{m}$.
\end{lemma}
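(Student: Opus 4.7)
The plan is to adapt the $G_{M,0,0}$ construction by attaching two selector vertices $s_L$ and $s_R$, and then exploit the fact that \emph{counting} length-$3$ paths (as opposed to mere detection) lets one subtract off easily computed ``spurious'' paths and isolate $u^\top M v$.

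Concretely, I would take $G_{M,0,0}$ (vertices $L\cup R$, edges $E_M$) and add two extra vertices $s_L, s_R$ together with edges $\{s_L,l_i\}$ for every $i$ and $\{s_R,r_j\}$ for every $j$; this graph has $\Theta(\sqrt m)$ vertices and $\bigO(m)$ edges and is handed to the preprocessing step of $\Alg$. When $u,v$ arrive in the decremental case, one deletes $\{s_L,l_i\}$ whenever $u_i=0$ and $\{s_R,r_j\}$ whenever $v_j=0$, for a total of $\bigO(\sqrt m)$ updates (the incremental variant is symmetric: start with no $s_L,s_R$-edges at all and insert exactly those with $u_i=1$ or $v_j=1$). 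Finally, issue a single query for the total length-$3$-path count $T$.

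Next, I would partition the length-$3$ paths of the resulting graph by how many of $\{s_L,s_R\}$ they contain: let $T_0, T_1^{(L)}, T_1^{(R)}, T_2$ denote the counts of paths containing neither, only $s_L$, only $s_R$, and both, respectively. Because $s_L$ and $s_R$ are non-adjacent, any path containing both must have them as its two endpoints, which forces the shape $s_L-l_i-r_j-s_R$; such a path exists iff $u_i=M_{ij}=v_j=1$, so $T_2=u^\top M v$. The bipartite background count $T_0$, which counts paths $l_i-r_j-l_{i'}-r_{j'}$ inside $(L\cup R,E_M)$, is independent of $u,v$ and is precomputed.

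It remains to compute $T_1^{(L)}$ (and symmetrically $T_1^{(R)}$) online in $\bigO(\sqrt m)$ time. A case analysis on whether $s_L$ is an endpoint or an interior vertex of the path gives
\[
T_1^{(L)} \;=\; \sum_{j}(u^\top M)_j\,(d_j-1) \;+\; (d_u-1)\,u^\top c \;=\; u^\top M d + (d_u - 2)\,u^\top c,
\]
where $c_i=\sum_j M_{ij}$ is the $i$-th row sum of $M$, $d_j=\sum_i M_{ij}$ is its $j$-th column sum, and $d_u=\|u\|_1$. The vectors $Md$, $c$ (and their $R$-side analogues) are fixed and can be precomputed from $M$, so every term is an inner product with $u$ (respectively $v$) that is evaluable in $\bigO(\sqrt m)$ time, and one recovers $T_2 = T - T_0 - T_1^{(L)} - T_1^{(R)} = u^\top M v$. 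For the $s$-variant set $s := s_L$; the algorithm's query then returns $T_1^{(L)}+T_2$ and the same arithmetic extracts $u^\top M v$. The main obstacle will be the bookkeeping in the case analysis: one must verify that each unordered length-$3$ subgraph containing $s_L$ is enumerated exactly once and that its dependence on $u$ factors cleanly through the precomputed vectors, so that the ``noise'' really is subtracted in $\bigO(\sqrt m)$ time and not by smuggling in a second computation of $u^\top M v$.
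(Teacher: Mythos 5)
Your proposal is correct and is essentially the same reduction as the paper's: both use $G_{M,0,0}$ with two appended selector vertices whose edges encode $u$ and $v$, both rely on the fact that the only length-$3$ paths touching both selectors are exactly the $s_L\text{--}l_i\text{--}r_j\text{--}s_R$ paths (so their count is $\sum_{i,j}u_iM_{ij}v_j$), and both subtract an explicitly precomputable correction for the remaining paths. In fact, your $T_1^{(L)} = u^\top M d + (d_u-2)\,u^\top c$ is exactly the paper's per-$i$ increment ``(number of $2$-paths from $l_i$) $+ (k_u-1)\Degree(l_i)$'' summed over the $i$ with $u_i=1$, just written in closed form; the vectorized presentation via $Md$, $c$, and $d$ makes the $\bigO(\sqrt m)$ online cost a little more transparent than the paper's operational ``increment the counter as edges are inserted,'' but the underlying arithmetic is identical, and you also spell out the $s$-variant (query returns $T_1^{(L)}+T_2$) which the paper leaves implicit.
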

\begin{proof}
We only prove the incremental case, the decremental case is symmetric.
Consider a \oneuMv{} problem with
$n_1 = n_2 = \sqrt m$.
Given $M$, we construct the tripartite graph $G$ from $G_{M,g,h}$ with $g
= h = 0$ by
adding two isolated vertices $s, t$.
We also count for each vertex $v \in R \cup L$ the number of paths
of length $2$ starting at $v$
and the total number of length-$3$ paths in $G$.
Once $u$ and $v$ arrive, we add $\Edge{s}{l_i}$ iff $u_i=1$ and
$\Edge{r_j}{t}$ iff $v_j=1$.
Furthermore, we count the number $k_u$ of $1$-entries in $u$
and the number $k_v$ of $1$-entries in $v$.
If $u_i=1$, we increment the length-$3$ path counter by
the number of paths of length $2$ starting at $l_i$
plus $(k_u-1)\cdot\Degree(l_i)$.
Likewise, if $v_j=1$, we increment the length-$3$ path counter by
the number of paths of length $2$ starting at $r_j$
plus $(k_v-1)\cdot\Degree(r_j)$.
We have $u^\top Mv= 1$ iff the number of $3$-paths in $G$
is greater than the length-$3$ path counter.
In total, we need to do $n_1+n_2=\bigO(\sqrt{m})$ updates and $1$ query.
\end{proof}

For lower bounds on detecting and counting cycles of even length,
we modify $G_{M,g,h}$ by duplicating the left-hand side vertices
as well as the edges depending on $M$
(see \Figure{lb-construct-H} for an example):
\begin{definition}[$H_{M,g,h}$]\DefLabel{HMgh}
Given a matrix $M \in {\{0,1\}}^{n_1\times n_2}$ and two integers
$g, h \geq 1$, we denote by
$H_{M,g,h} =
(T^{(0)} \cup \dots \cup T^{(g)}
\cup B^{(0)} \cup \dots \cup B^{(g)}
\cup R^{(0)} \cup \dots \cup R^{(h)}
, E_L \cup E_R \cup E_{M_T} \cup E_{M_B})$ the
$(g+h+2)$-partite graph where
\setlength{\abovedisplayskip}{1pt}
\setlength{\belowdisplayskip}{1pt}
\begin{align*}
T^{(p)} &=\left\{t^{(p)}_1,\dots, t^{(p)}_{n_1}\right\}, \quad %
B^{(p)} =\left\{b^{(p)}_1,\dots, b^{(p)}_{n_1}\right\}, \quad 0 \leq p \leq g, \\
E_L &= \left\{\Edge{t^{(p)}_{i}}{t^{(p+1)}_{i}}, \Edge{b^{(p)}_{i}}{b^{(p+1)}_{i}} \mid 1 \leq i \leq n_1 \wedge 1 \leq p < g \right\} \\
R^{(q)} &=\left\{r^{(q)}_1,\dots, r^{(q)}_{n_2}\right\}, \quad 0 \leq q \leq h; \qquad
E_R = \{(r^{(q)}_{j}, r^{(q+1)}_{j}) \mid 1 \leq j \leq n_2 \wedge 1 \leq q < h \} \\
E_{M_T} &= \left\{\Edge{t^{(0)}_{i}}{r^{(0)}_j}, \Edge{t^{(g)}_{i}}{r^{(h)}_{j}} \mid M_{ij} = 1 \wedge i \leq j \right\} \\
E_{M_B} &= \left\{(b^{(0)}_{i}, r^{(0)}_j), (b^{(h)}_{i}, r^{(h)}_{j}) \mid M_{ij} = 1  \wedge i > j \right\}.
\end{align*}
$H_{M,g,h}$ has $2\cdot (g+1) \cdot n_1 + (h+1) \cdot n_2$ vertices and at most
$2n_1n_2 + 2\cdot g\cdot n_1 + h\cdot n_2$ edges.
All vertices $x$ in $T^{(p)}$, $B^{(p)}$ for $1 \leq p < g$ and in $R^{(q)}$ for $1 \leq q < h$ have $\Degree(x) = 2$.
\end{definition}
\begin{observation}\ObLabel{h-tree}
$H_{M,g,h}$ is acyclic (a forest).
\end{observation}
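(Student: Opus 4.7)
To show that $H_{M,g,h}$ is a forest, I plan to decompose the graph into two edge-disjoint pieces based on the ``column'' of each vertex and argue that each piece is a forest.

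First I would observe that because $E_L$ is restricted to the range $1 \le p < g$ and $E_R$ to $1 \le q < h$, no chain edge is incident to any vertex of $T^{(0)} \cup B^{(0)} \cup R^{(0)}$. Hence $H_{M,g,h}$ is the edge-disjoint union of two subgraphs: the \emph{column-$0$ subgraph} on $T^{(0)} \cup B^{(0)} \cup R^{(0)}$, whose only edges are those of $E_{M_T} \cup E_{M_B}$ incident to column-$0$ vertices, and the \emph{interior subgraph} on the remaining vertices, containing all chain edges of $E_L \cup E_R$ together with the $M$-edges at columns $g$ and $h$. Since the two subgraphs share no vertex and no edge, it suffices to show each is a forest.

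Next I would analyze the interior subgraph. Ignoring $M$-edges, it is a disjoint union of paths: for each $1 \le i \le n_1$ a $T$-path $t^{(1)}_i - \dots - t^{(g)}_i$ and a $B$-path, and for each $1 \le j \le n_2$ an $R$-path $r^{(1)}_j - \dots - r^{(h)}_j$. These paths are joined only at the chain endpoints $t^{(g)}_i, b^{(g)}_i, r^{(h)}_j$ via the $M$-edges at column $g/h$. The key structural property is the strict upper/lower triangular split of $M$: every $1$-entry $M_{ij}$ contributes either to the $T$-side (if $i \le j$) or to the $B$-side (if $i > j$), never both. I would argue, by tracking a hypothetical cycle, that such a cycle would have to traverse two chain-endpoints of the same chain through the internal $M$-edge structure — something ruled out by the fact that each chain has only one endpoint in the interior subgraph eligible to carry an $M$-edge. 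The same structural argument applies to the column-$0$ subgraph, which is the analogous bipartite graph on $(T^{(0)} \cup B^{(0)}, R^{(0)})$ split by the triangular partition.

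The main obstacle will be making the bipartite-tree step precise: namely, verifying that the bipartite $M$-edge graphs induced at column $0$ and at column $g/h$ are themselves acyclic, which is the heart of the argument and relies crucially on the triangular decomposition of $M$ together with the ``doubled'' $M$-edges at the two ends appearing in distinct subgraphs. Once this step is in place, combining the forest structures of the column-$0$ and interior subgraphs yields that $H_{M,g,h}$ is a forest.
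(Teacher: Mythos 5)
The paper states this as an observation without proof, so there is nothing to compare against directly. Your decomposition of $H_{M,g,h}$ into the column-$0$ subgraph on $T^{(0)} \cup B^{(0)} \cup R^{(0)}$ and the interior subgraph on the remaining levels is correct: since $E_L$ and $E_R$ only connect levels $1$ through $g$ (resp.\ $1$ through $h$), the level-$0$ vertices touch only $M$-edges, and the two pieces are indeed vertex- and edge-disjoint.

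However, the step you flag as the heart of the argument---that the bipartite $M$-edge graph at column $0$ (and likewise at column $g/h$) is acyclic---is not a technical gap to be filled but a false claim. Splitting the $1$-entries of $M$ by $i \le j$ versus $i > j$ routes each entry to a $T$- or a $B$-vertex, but does not reduce the number of $M$-edges, which equals the number of $1$s in $M$ and can far exceed the number of vertices. Take $M$ to be the all-ones $3\times3$ matrix: the column-$0$ subgraph then has $9$ vertices and $9$ edges (six in $E_{M_T}$, three in $E_{M_B}$), and $t^{(0)}_1 - r^{(0)}_2 - t^{(0)}_2 - r^{(0)}_3 - t^{(0)}_1$ is a $4$-cycle, since $M_{12}=M_{22}=M_{23}=M_{13}=1$ all satisfy $i \le j$ and hence yield edges of $E_{M_T}$. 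The same $4$-cycle recurs in the interior subgraph on the vertices $t^{(g)}_1, r^{(h)}_2, t^{(g)}_2, r^{(h)}_3$. So neither piece of your decomposition is a forest, and no refinement of the ``bipartite-tree step'' can change that. In fact this counterexample shows that $H_{M,g,h}$ as written in \Definition{HMgh} need not be acyclic at all, so the observation---and its downstream use in \Lemma{lb-cycle}, which relies on every cycle passing through a level-$1$ vertex---appears to require a revised definition of the $M$-edge sets before a proof along your lines, or any other, could go through.
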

\begin{observation}\ObLabel{h-cycle-side-long}
Every path in $H_{M,g,h}$ that contains two distinct vertices
$x_i \neq x_{j} \in T^{(1)} \cup B^{(1)}$
($r_i \neq r_{j} \in R^{(1)}$) has length at least
$2g$ ($2h$).
\end{observation}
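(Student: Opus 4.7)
The plan is to exploit the fact that vertices in $T^{(1)} \cup B^{(1)}$ are degree-$1$ in $H_{M,g,h}$ (assuming $g \geq 2$), which turns any path containing two of them into a forced traversal of two full chains with a crossing through $R^{(h)}$.

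First I would verify the degree claim by inspecting the definition of $E_L$: the chain edges exist only for indices $1 \leq p < g$, so there is no edge between $T^{(0)}$ and $T^{(1)}$, and the cross edges in $E_{M_T}$ attach only to $T^{(0)}$ and $T^{(g)}$, never to $T^{(1)}$. Hence the unique neighbor of every $t^{(1)}_i \in T^{(1)}$ is $t^{(2)}_i$, and the same argument applies to $B^{(1)}$.

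Next I would observe that in any simple path $P$ containing two distinct vertices $x_i \neq x_j \in T^{(1)} \cup B^{(1)}$, both must be endpoints of $P$, because a degree-$1$ vertex cannot appear internally. Starting at $x_i$, the path is forced edge-by-edge along the chain $x_i \to x^{(2)}_i \to \cdots \to x^{(g)}_i$, since every intermediate $x^{(p)}_i$ has degree $2$ and only one continuation avoids an immediate revisit; this contributes $g-1$ edges. From $x^{(g)}_i$ the only non-chain edges lead into $R^{(h)}$ via $E_{M_T} \cup E_{M_B}$, contributing one more edge. The symmetric argument applied in reverse starting from $x_j$ accounts for another $g-1$ chain edges on the $x_j$-chain plus one more cross edge arriving from $R^{(h)}$ to $x^{(g)}_j$. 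Summing, $\lvert P\rvert \geq (g-1) + 1 + 1 + (g-1) = 2g$, with equality achievable only when the two cross edges share the same endpoint in $R^{(h)}$.

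The bound for distinct $r_i \neq r_j \in R^{(1)}$ follows by an entirely analogous argument, swapping the role of the $T/B$-chains with the $R$-chain and $g$ with $h$: vertices in $R^{(1)}$ are leaves whose sole neighbor lies in $R^{(2)}$, so any path between two of them must climb the $R$-chain to $R^{(h)}$, cross via $E_{M_T} \cup E_{M_B}$ into $T^{(g)} \cup B^{(g)}$, cross back, and descend the $R$-chain, giving length at least $2h$. I do not expect genuine obstacles; the only delicate point is the degenerate cases $g = 1$ or $h = 1$, where the chain collapses to a single level so that $T^{(1)} = T^{(g)}$ (resp.\ $R^{(1)} = R^{(h)}$), but there the bound $2g$ (resp.\ $2h$) is realized directly by the two cross edges through $R^{(h)}$ (resp.\ $T^{(g)} \cup B^{(g)}$).
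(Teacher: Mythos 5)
Your proof is correct, and since the paper states this as an unproven observation there is nothing to compare it against directly; what follows is a confirmation and a couple of clarifying remarks.

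The core of your argument is the degree computation: because the chain edges in $E_L$ are only defined for $1 \le p < g$ (and the cross edges in $E_{M_T} \cup E_{M_B}$ touch only levels $0$ and $g$), a vertex $t^{(1)}_i$ (or $b^{(1)}_i$) really is a leaf whose unique neighbor is $t^{(2)}_i$ when $g \ge 2$, and the analogous statement holds for $r^{(1)}_j$ when $h \ge 2$. This is a point worth stressing, because it differs from the literal wording of \Definition{HMgh} (which asserts degree $2$ for $1 \le p < g$); the range in that degree statement should read $2 \le p < g$, as the edge $\Edge{t^{(0)}_i}{t^{(1)}_i}$ is added only later in the reduction. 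Your subsequent forcing argument — both leaves are path endpoints, the path is forced along each chain up to level $g$ contributing $g-1$ edges per side, and each side must then take a cross edge to $R^{(h)}$ — correctly yields four pairwise edge-disjoint edge sets (the two chains are disjoint because the endpoints lie on distinct chains, and the two cross edges have distinct endpoints $x^{(g)}_i \ne x^{(g)}_j$), giving length at least $2(g-1) + 2 = 2g$. The symmetric case for $R^{(1)}$ is handled the same way, and your treatment of the degenerate levels $g=1$ or $h=1$ (where $T^{(1)}=T^{(g)}$ or $R^{(1)}=R^{(h)}$, the degree-$1$ claim fails, but the two vertices remain non-adjacent so length $\ge 2$ still holds) closes the remaining gap. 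The observation is also consistent with the analogous \Observation{cycle-side-long} for $G_{M,g,h}$, where starting from level $0$ gives the shifted bound $2g+2$.
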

\begin{lemma}[$k$-Cycle Detection]\LemLabel{lb-cycle}
Given a partially dynamic algorithm $\Alg$ for $k$-cycle detection with $k
\geq 4$, one can solve \oneuMv{} with parameters $n_1$ and $n_2$ by running the
preprocessing step of $\Alg$ on a graph with $\bigO(m + \sqrt{m}\cdot k)$ edges
and $\Theta(\sqrt{m}\cdot k)$ vertices, and then making $\bigO(\sqrt{m})$
insertions (or $\bigO(\sqrt{m})$ deletions) and $1$ query, where $m$ is such
that $n_1=n_2=\sqrt{m}$.
\end{lemma}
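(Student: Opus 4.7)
The plan is to mirror the proof of \Lemma{k-odd-cycle} but replace the base graph $G_{M,g,h}$ with the doubled construction $H_{M,g,h}$ from \Definition{HMgh}. The critical feature of $H_{M,g,h}$ is that it handles the upper-triangular ($i \le j$) entries of $M$ through the $T$-copies and the strictly lower-triangular ($i > j$) entries through the $B$-copies. This separation prevents the ``short-circuit'' odd cycles that $G_{M,g,h}$ would produce, and by \Observation{h-tree} the base graph is a forest, so any cycle in the final reduction must pass through the auxiliary vertex we add.

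Concretely, given a \oneuMv{} instance with $n_1=n_2=\sqrt{m}$, I build $H_{M,g,h}$ with $g$ and $h$ chosen so that a cycle through the added vertex $s$ closing one $T$-path or $B$-path to the corresponding $R$-chain via a single $M$-edge has length exactly $k$; roughly $g+h$ equals $k$ minus the small constant contributed by $s$ and the one $M$-edge traversed. I then add $s$ together with all potential edges from $s$ to the designated row-entries ($T^{(p)}\cup B^{(p)}$ for appropriate $p$) and column-entries ($R^{(q)}$ for appropriate $q$). When $u$ and $v$ arrive (decremental case; the incremental case is symmetric), I delete $\{s,t_i^{(p)}\}$ and $\{s,b_i^{(p)}\}$ iff $u_i=0$ and $\{s,r_j^{(q)}\}$ iff $v_j=0$, totalling $\bigO(\sqrt{m})$ updates, then make a single $k$-cycle query.

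Correctness splits as usual. For the forward direction, any $(i,j)$ with $u_iM_{ij}v_j=1$ determines a concrete $k$-cycle through $s$: if $i\le j$ the cycle goes $s$–$t_i^{(p)}$–$\cdots$–$t_i^{(g)}$–$r_j^{(h)}$–$\cdots$–$r_j^{(q)}$–$s$ through the single $M$-edge of $E_{M_T}$, and if $i>j$ the symmetric cycle through $B$ uses the single edge of $E_{M_B}$. For the reverse direction, \Observation{h-tree} forces every cycle to use $s$, so a $k$-cycle corresponds to a path of length $k-2$ in $H_{M,g,h}$ between two activated entry vertices. \Observation{h-cycle-side-long} lower-bounds path lengths between two distinct vertices both in $T^{(1)}\cup B^{(1)}$ or both in $R^{(1)}$ by $2g$ or $2h$ respectively; choosing $g,h$ balanced prevents these bounds from equaling $k-2$, ruling out ``same-side'' $k$-cycles. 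The surviving cycles must connect a $T/B$-side entry vertex to an $R$-side entry vertex, which in $H_{M,g,h}$ is possible only via exactly one $M$-edge, and that $M$-edge witnesses some $(i,j)$ with $u_iM_{ij}v_j=1$.

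Finally, $H_{M,g,h}$ contributes $2(g+1)n_1+(h+1)n_2=\Theta(\sqrt{m}\cdot k)$ vertices and $2n_1n_2+2gn_1+hn_2=\bigO(m+\sqrt{m}\cdot k)$ edges; adding $s$ with its $\bigO(\sqrt{m})$ incident edges stays within the bounds. The main obstacle is the calibration of $g$ and $h$: one must ensure that the inequality from \Observation{h-cycle-side-long} is strict enough to exclude same-side $k$-cycles while still allowing the intended cross-side path to have length exactly $k-2$. The tight case is $k=4$, which requires carefully choosing the attachment levels of $s$ so that no spurious $4$-cycle can form inside the bipartite $M$-gadgets.
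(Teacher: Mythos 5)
Your proposal takes a genuinely different route from the paper's, and that route has a fatal numerical obstruction. The paper's proof of \Lemma{lb-cycle} does \emph{not} add an auxiliary vertex $s$. Instead, with $g = \lceil k/2\rceil - 1$ and $h = \lfloor k/2\rfloor - 1$, it adds the \emph{missing} level-$0$-to-level-$1$ edges $\Edge{t^{(0)}_i}{t^{(1)}_i}$, $\Edge{b^{(0)}_i}{b^{(1)}_i}$, $\Edge{r^{(0)}_j}{r^{(1)}_j}$ (these are absent from $H_{M,g,h}$ because $E_L$ and $E_R$ in \Definition{HMgh} only start at level $1$) and deletes them according to $u$ and $v$. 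The intended $k$-cycle then traverses \emph{both} $M$-edge copies in $E_{M_T}$ (or $E_{M_B}$): $t^{(0)}_i - r^{(0)}_j - r^{(1)}_j - \cdots - r^{(h)}_j - t^{(g)}_i - \cdots - t^{(1)}_i - t^{(0)}_i$, of length $g + h + 2 = k$. Using both $M$-levels is exactly what your $s$-gadget does not do.

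Here is why the $s$-gadget must fail for even $k$, no matter how you calibrate. Write $a = g - p$ and $b = h - q$ for the distances from your attachment levels to the $M$-levels. A cross-side cycle through $s$ uses one $M$-edge and has length $a + b + 3$, so you need $a + b = k - 3$. But the same-side cycle $s - r^{(q)}_j - \cdots - r^{(h)}_j - t^{(g)}_i - r^{(h)}_{j'} - \cdots - r^{(q)}_{j'} - s$ has length $2b + 4$ and imposes \emph{no} constraint on $u$ whatsoever (it only needs $v_j = v_{j'} = 1$ and two $1$-entries of $M$ sharing a row); symmetrically, the $T/B$-side cycle has length $2a + 4$ with no constraint on $v$, and longer same-side excursions inside the level-$(g,h)$ gadget add arbitrary even increments. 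Since \Observation{h-cycle-side-long} is only a lower bound and the adversary controls $M$, ruling out a same-side $k$-cycle forces $2a + 4 > k$ and $2b + 4 > k$, i.e., $a \geq k/2 - 1$ and $b \geq k/2 - 1$ (both integers since $k$ is even). This gives $a + b \geq k - 2 > k - 3$, a contradiction. No choice of $g, h, p, q$ works, and this is precisely why \Theorem{lowerbounds} states the $s$-$k$-cycle lower bound only for \emph{odd} $k$.
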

\begin{proof}%
Consider a \oneuMv{} problem with
$n_1 = n_2 = \sqrt m$.
Given $M$, we construct the $k$-partite graph $H$ from
$H_{M,g,h}$ with $g = \lceil\frac{k}{2}\rceil - 1$, $h = \lfloor\frac{k}{2}\rfloor - 1$
by adding edges
$\Edge{t^{(0)}_i}{t^{(1)}_i}$,
$\Edge{b^{(0)}_{i'}}{b^{(1)}_{i'}}$,
$\Edge{r^{(0)}_j}{r^{(1)}_j}$,
for all $t^{(0)}_i \in T, b^{(0)}_{i'} \in B, r_j \in R^{(0)}$.
Thus, the total number of edges is at most
$2n_1n_2 + 2(\lceil\frac{k}{2}\rceil-1)\cdot n_1 + (\lfloor\frac{k}{2}\rfloor - 1)\cdot n_2=\bigO(m + k\cdot\sqrt{m})$.
Once $u$ and $v$ arrive, we delete $(t^{(0)}_i,t^{(1)}_i)$
and $\Edge{b^{(0)}_{i}}{b^{(1)}_{i}}$ iff $u_i=0$ and
delete $\Edge{r^{(0)}_j}{r^{(1)}_j}$ iff $v_j=0$.
See \Figure{even-4-cycle} for an example with $k=4$.

We have $u^\top Mv= 1$ iff there is a cycle of length $k$ in $H$:
By \Observation{h-tree},
every cycle must contain at least one vertex in $T^{(1)}$, $B^{(1)}$,
or $R^{(1)}$.
By \Observation{h-cycle-side-long}, every cycle that contains a vertex
in $T^{(1)} \cup B^{(1)}$, but not in $R^{(1)}$ or vice versa has length at
least $2(\lfloor\frac{k}{2}\rfloor -1) + 4 \geq k+1$.
In total, we need to do $2n_1+n_2=\bigO(\sqrt{m})$ updates and $1$ query.
\end{proof}

\begin{figure}[tb]
\centering
\includegraphics[width=.3\textwidth]{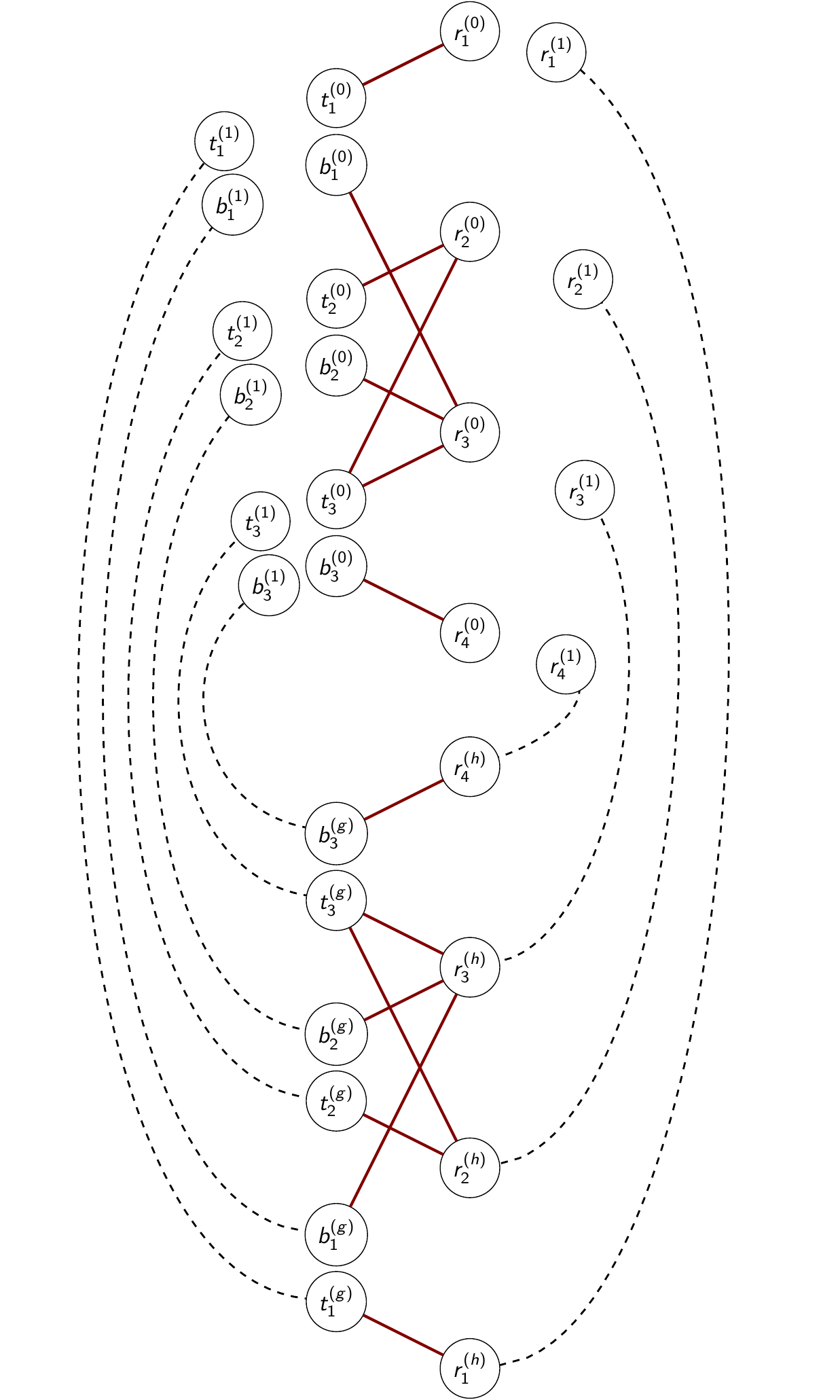}
\caption{Construction of $H_{M,g,h}$ for
$M = \Bigl(\begin{smallmatrix}
1 & 0 & 1 & 0 \\
0 & 1 & 1 & 0 \\
0 & 1 & 1 & 1 \\
\end{smallmatrix}\Bigr)$.}\FigLabel{lb-construct-H}
\end{figure}

\begin{figure}[tb]
\centering
\includegraphics[width=.3\textwidth]{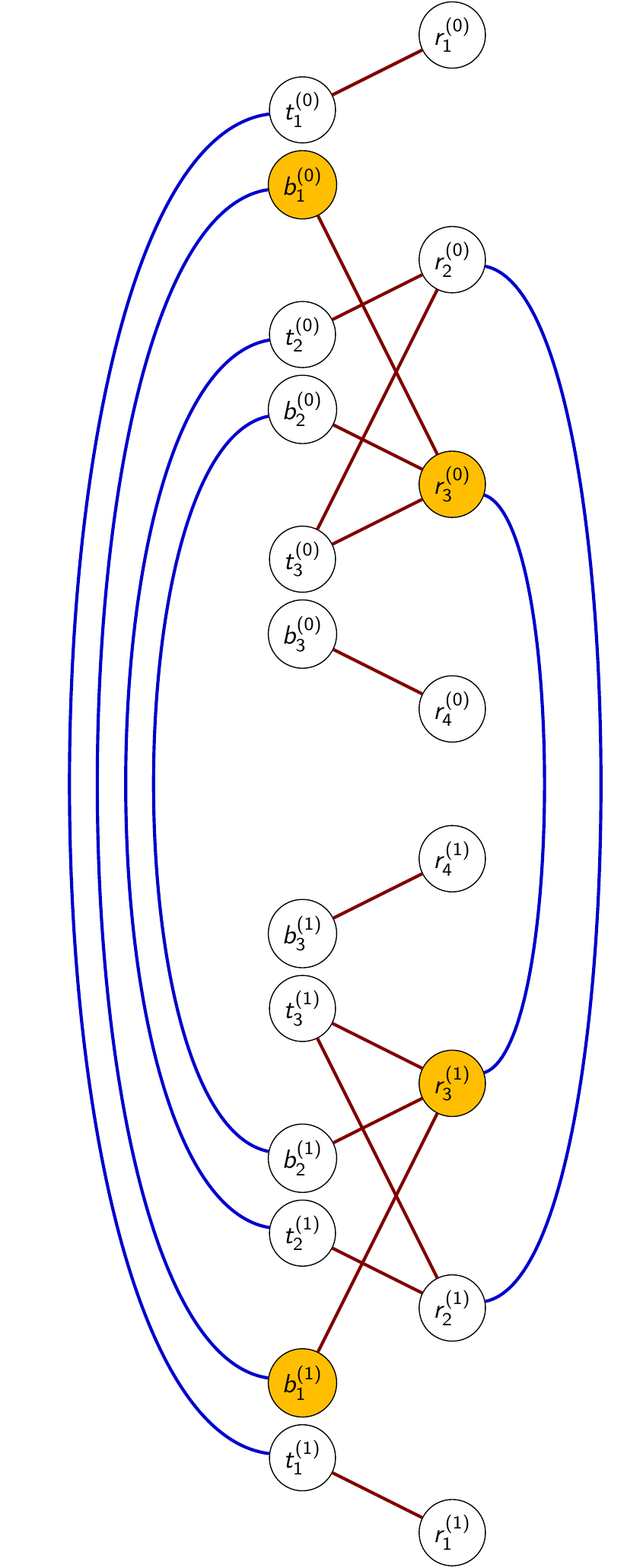}
\caption{Construction of $H$ for $4$-cycle detection with
$M = \Bigl(\begin{smallmatrix}
1 & 0 & 1 & 0 \\
0 & 1 & 1 & 0 \\
0 & 1 & 1 & 1 \\
\end{smallmatrix}\Bigr)$,
$u^\top = \begin{pmatrix}1&1&0 \end{pmatrix}$, and
$v^\top = \begin{pmatrix}0&1&1&0 \end{pmatrix}$.
The orange vertices $b^{(0)}_1, r^{(0)}_3, r^{(1)}_3, b^{(1)}_1$ form a $4$-cycle.}%
\label{fig:even-4-cycle}
\end{figure}

\end{document}